\title{Model-free trading and hedging with continuous price paths}
\author{Tigran Atoyan}
\begin{document}

\baselineskip=18pt plus1pt

\setcounter{secnumdepth}{3}
\setcounter{tocdepth}{3}

\maketitle

\begin{abstract}
In this paper, we provide a model-independent extension of the paradigm of dynamic hedging of derivative claims. We relate model-independent replication strategies to local martingales having a closed form which we can characterise via solutions of coupled PDEs. We provide a general framework and then apply it to a market with no traded claims, a market with an underlying asset and a convex claim and a market with an underlying asset and a set of co-maturing call options. The results encompass known examples of model-independent identities and provide a methodology for deriving new identities.
\end{abstract}

\section{Introduction}
\label{section: introduction}

In a landmark year for the financial derivatives industry, Black, Scholes and Merton \cite{Black-Scholes, Merton} revolutionised option pricing by introducing the paradigm of dynamic hedging. Notably, they showed that under some market assumptions, the financial risk of issuing an option can be fully hedged by the continuous-time limit of a dynamic trading strategy in the underlying asset. The gains and losses incurred by the strategy match the fluctuations of the derived price of the option, which is guaranteed to converge to the option payoff. This methodology eliminated the need to estimate the drift of the asset price, a task which is known to be notoriously difficult (Rogers \cite{Rogers} and Monoyios \cite{Monoyios}). Due to its remarkable simplicity and practicality,  the Black-Scholes model was quickly embraced by practitioners and has been the pillar of option pricing and hedging thereafter. 

A key limitation of the Black-Scholes model and its many variants is the requirement to make assumptions on the volatility of the underlying asset. This paper provides a framework which tackles this issue by providing an extension of the original dynamic hedging paradigm. In particular, we characterise a set of payoffs one may replicate through continuous time dynamic trading in a frictionless market with continuous prices when one does not specify a model for the asset prices. This yields a general methodology which encompasses most (if not all) of the model-independent replication strategies proposed in the literature. There are several advantages of the proposed approach, provided one can solve systems of linear parabolic PDEs. First, it yields a set of "marketable" claims which have model-independent prices under all continuous semimartingales. These replication results in fact characterise the full set of model-independent strategies with wealth processes of a certain form. Second, the model-independent prices are enforced by hedging strategies with hedge ratios given in closed form. And thirdly, the proposed framework allows for model-independent dynamic hedging strategies with traded claims.

In Section \ref{section: motivating examples}, we first provide some known examples of model-independent results to motivate the general replication results in Section \ref{section: replication theory}, which are further concretised in Section \ref{section: examples}. Sections \ref{section: characterisation theory} - \ref{section: traded calls} addresses the question of characterising the full set of replication strategies with portfolios of a given functional form. In particular, Section \ref{section: characterisation theory} provides a general result in abstract form which is then applied in Sections \ref{section: convex claim} - \ref{section: examples with convex claim} for a market with an underlying asset and a convex claim and in Section \ref{section: traded calls} for a market with an underlying asset and a set of traded calls. Section \ref{section: conclusion} provides summary remarks and the Appendix contains some technical proofs.

\section*{Notation}
\label{section: notation}

Throughout this paper, we will consider financial markets with a riskless asset of constant value and a set of risky assets. The assets may be traded in continuous time and without transaction costs. Asset prices will typically be denoted by $A$, but this convention changes in certains sections where more suitable notation is applicable. The following summarises the basic notation used in the thesis.

\begin{notation}[General]

For $x, y \in \R^k$, $xy$ denotes their inner product. For matrices $x, y \in \R^{k \times k}$, $xy$ denotes their matrix product and $x \cdot y$ denotes their Hadamard product, notably, $x \cdot y := \sum_{i,j}^k x_{ij} y_{ij}$. For $x \in \R^k$ and $y \in \R^l$, denote $(x,y) \equiv (x_1,\ldots,x_k,y_1,\ldots,y_l)$. For points $x, y \in \R^k$, $d(x,y)$ denotes their Euclidean distance. For a point $x \in \R^k$ and a set $K \subset \R^k$, define $d(x,K) := \inf \{ d(x,y) : y \in K \}$. For $x \in \R^k$ and $r > 0$, define the closed ball of radius $r$ around $x$ by $B^c_r(x) := \{ y \in \R^k: d(x,y) \leq r \}$. For a set $K \subseteq \R^k$, denote $cl(K)$ and $K^c$ to be respectively the closure and the complement of $K$. Denote $\partial K := cl(K) \cap cl(K^c)$ to be the boundary set of $K$. 

Define $\R_+ := [0, \infty)$ and denote the space of symmetric positive definite $\R^{k \times k}$ matrices by $\mathbb{S}^k_+$. For a function $f: \R^k \to \R$,  denote its gradient and Hessian respectively by $\nabla f$ and $D^2 f$. For $k \geq 1$, $T > 0$ and functions $f, g: [0,T] \to \R^k$,
$$
\int_0^t f(u) \ud g(u) := \sum_{i = 1}^k \int_0^t f_i (u) \ud g_i (u), \quad t \in [0,T],
$$
when the integrals  $\int_0^t f_i (u) \ud g_i (u)$ are well-defined. Similarly, for $f, g: [0,T] \to \R^{k \times k}$,
$$
\int_0^t f(u) \cdot \ud g(u) = \sum_{i,j = 1}^k \int_0^t f_{ij}(u) \ud g_{ij}(u), \quad t \in [0,T].
$$
when the integrals $\int_0^t f_{ij}(u) \ud g_{ij}(u)$ are well-defined.

\end{notation}

\begin{notation}[Stochastic analysis]

For a filtered probability space $(\Omega, \cF, \F_T, \Prob)$, where $\F_T = \{ \cF_t \}$ can be any filtration with $\cF_s \subseteq \cF_t \subseteq \cF$ for $0 \leq s \leq t \leq T$, we denote the $i$-th component of an $\F_T$-adapted process $P: \Omega \times [0,T] \to \R^k$ by $P^i$ and say that $P$ is a (continuous) semimartingale on $(\Omega, \cF, \F_T, \Prob)$ if each of its components $P^i$, $i \in \{1, \ldots, k\}$ is an $\F_T$-adapted (continuous) semimartingale. We sometimes set $T = \infty$, in which case we use the shorthand notation $\F := \{\cF_t\}_{t \geq 0}$. Similar notation and definitions apply for matrix-valued semimartingales. For semimartingales $Y: \Omega \times [0, T] \to \R^{d_1 \times d_2}$ and $P: \Omega \times [0, T] \to \R^{d_2}$ on $(\Omega, \cF, \F_T, \Prob)$, define
$$
\int_0^t Y_u \ud P_u(\omega) := \left( \sum_{j = 1}^{d_2} \int_0^t Y^{1,j}_u \ud P^j_u (\omega), \ldots, \sum_{j = 1}^{d_2} \int_0^t Y^{d_1,j}_u \ud P^j_u (\omega) \right), \quad (\omega, t) \in \Omega \times [0,T],
$$
where the $\int_0^t Y^{ij}_u \ud P^j_u (\omega)$ are It\^o integrals. For a semimartingale $P: \Omega \times [0,T] \to \R^k$ on $(\Omega, \cF, \F_T, \Prob)$, denote sample paths of $P$ by $P(\omega) := \{ P(\omega, t) \}_{t \in [0,T]}$, $\omega \in \Omega$.

For two filtered probability spaces $(\Omega^1, \cF^1, \F^1_T, \Prob^1)$ and $(\Omega^2, \cF^2, \F^2_T, \Prob^2)$, define their product space by
$$
(\Omega^1, \cF^1, \F^1_T, \Prob^1) \otimes (\Omega^2, \cF^2, \F^2_T, \Prob^2) := (\Omega^1 \times \Omega^2, \cF^1 \otimes \cF^2, \F^1_T \otimes \F^2_T, \Prob^1 \otimes \Prob^2),
$$
where $\F^1_T \otimes \F^2_T := \{ \cF^1_t \otimes \cF^2_t \}_{t \in [0,T]}$. Denote the canonical space of an $\R^k$-valued Brownian motion on $[0,T]$ by $(\cC(\R^k), \cF^W_T, \F^W_T,\Prob^W)$.

A filtered probability space $(\widetilde{\Omega},\widetilde{\cF},\widetilde{\F}_T,\widetilde{\Prob})$ contains $(\Omega, \cF, \F_T, \Prob)$ if $\exists (\Omega^\prime, \cF^\prime, \F^\prime_T, \Prob^\prime)$ such that $(\widetilde{\Omega},\widetilde{\cF},\widetilde{\F}_T,\widetilde{\Prob}) = (\Omega, \cF, \F_T, \Prob) \otimes (\Omega^\prime, \cF^\prime, \F^\prime_T, \Prob^\prime)$.

Denote the set of stopping times with respect to a filtration $\F_T$ by $\cT(\F_T)$. For stopping times $\tau_1, \tau_2 \in \cT(\F_T)$ on a filtered space $(\Omega, \cF, \F_T, \Prob)$, define the notation
\begin{align*}
&\{ \tau_1 < \tau_2 \} := \{ \omega \in \Omega : \tau_1(\omega) < \tau_2(\omega) \} \in \cF, \\
&(\tau_1, \tau_2) := \{ (\omega, t) \in \Omega \times [0,t] : \tau_1(\omega) < t < \tau_2(\omega) \} \in \cF \otimes \cB([0,T]),
\end{align*}
where $\cB([0,T])$ is the Borel $\sigma$-algebra on $[0,T]$. $\{\tau_1 \leq \tau_2 \}$, $[\tau_1, \tau_2]$, $(\tau_1, \tau_2]$ and $[\tau_1, \tau_2)$ are defined analogously.

For an $\R$-valued continuous semimartingale $Z$, denote its local time at $a \in \R$ on $[0,t]$ by $L^a_t[Z]$ and its Dol\'eans-Dade exponential $e^{Z - \frac{1}{2} \langle Z \rangle}$ by $\cE(Z)$.

\end{notation}

\section{Motivating examples}
\label{section: motivating examples}

To motivate the results of this paper and provide some preliminary intuition, we start with two basic examples, due to Bick \cite{Bick} and Carr and Lee \cite{Carr-Lee:semimartingales}, of the kind of strategies we will be presenting in a more general setting. Within this section, we consider a market with an underlying asset $S$ with continuous price paths and we set the time horizon to be $[0,\infty)$.
 
In \cite{Bick}, Bick showed that it is possible to replicate path-independent claims on $S$ with maturity equal to hitting times of realised variance. In particular, the replication strategy holds under all continuous semimartingale models for $S$. The argument proving the result is remarkably simple.

For $q > 0$ and a payoff function $f$, by which we henceforth mean the function defining the value of a claim at maturity, consider the solution $F$ to the PDE
\begin{equation}
\label{BS PDE}
\frac{\partial}{\partial x_2}F(x_1,x_2) + \frac{1}{2} x_1^{2} \frac{\partial^2}{\partial x_1^2} F(x_1,x_2) = 0, \quad F(x_1,q)=f(x_1).
\end{equation}
Note that $F(x_1,x_2)$ is equal to the Black-Scholes price of the path-independent claim with payoff function $f$ under zero interest rate and with $x_1$ and $x_2$ respectively representing the observed asset price and the realised variance to be accumulated until maturity (volatility squared multiplied by time to maturity). For $q > 0$, define $\tau$ to be the first time when $\log S$ accumulates realised variance $q$, notably $\tau := \inf\{t>0: \langle \log S \rangle_{t} = q\}$. It\^o's formula applied to $F(S, \langle \log S \rangle)$, along with
the finite variation property of $\langle \log S \rangle$ gives
\begin{align*}
F(S_{t},\langle \log S \rangle_{t}) & =  F(S_{0},0) + \int_{0}^{t} \frac{\partial}{\partial x_1} F(S_{u},\langle \log S \rangle_{u}) \ud
S_u \\
&\quad \quad + \int_{0}^{t} \frac{\partial}{\partial x_2} F(S_{u},\langle \log S \rangle_{u})\ud \langle \log S \rangle_{u} \\
& \quad \quad + \frac{1}{2} \int_0^t \frac{\partial^2}{\partial x_1^2} F(S_u,\langle \log S \rangle_u) \ud \langle S \rangle_u, \quad t \in [0,\tau] \setminus \{ \infty \}.
\end{align*}
Now using the PDE \eqref{BS PDE} satisfied by $F$ along with
$$
\ud \langle \log S \rangle_u = \frac{1}{S_u^2} \ud \langle S \rangle_u,
$$
we obtain the pathwise identity
\begin{equation}
\label{example integral representation 1}
F(S_{t},\langle \log S \rangle_{t}) = F(S_{0},0) + \int_{0}^{t} \frac{\partial}{\partial x_1} F(S_{u},\langle \log S \rangle_{u})\ud
S_{u}, \quad [0,\tau] \setminus \{ \infty \}.   
\end{equation}
The interpretation of this equation is that by judicious choice of the
inputs to $F$, and using the PDE satisfied by $F$, we have succeeded in
mimicking the situation whereby an agent starts with initial capital
$F(S_{0},0)$ at time zero, uses the self-financing trading strategy which holds
$H := \frac{\partial}{\partial x_1} F(S,\langle \log S \rangle)$ units of $S$, and at
all times $t \in [0,\tau] \setminus \{ \infty \}$, her wealth is equal to $F(S_{t},\langle \log S \rangle_{t})$. Equivalently, we trade with a delta-hedging rule
generated by the hedging function $F$, and the resulting gain from trade, as given by the integral
$\int_{0}^{\cdot} \frac{\partial}{\partial x_1} F(S_t,\langle \log S \rangle_t)\ud S_{u}$, is the only term left in
the integral form of the process
$F(S_{t},\langle \log S \rangle_{\cdot \wedge \tau}) - F(S_0,0)$.

Now consider a claim with payoff $f(S_\tau)$ and maturity at time $\tau$, when the quadratic variation of $\log S$ reaches $q$. Timer calls and puts are examples of such options which have been traded in practice (see Carr and Lee \cite{Carr-Lee:volatility}). Since $\langle \log S \rangle_\tau=q$, \eqref{example integral representation 1} evaluated at $t = \tau$, along with the initial condition $F(x,q)=f(x)$ satisfied by $F$ results in
\begin{equation}
\label{example integral representation 2}
f(S_{\tau}) = F(S_{0},0) + \int_{0}^{\tau} \frac{\partial}{\partial x_1} F(S_{t},\langle \log S \rangle_{t})\ud S_{t}.
\end{equation}
for $\tau < \infty$. Thus, by combining a solution $F$ to a parabolic PDE with a suitable choice of the inputs to $F$, we have generated a perfect hedge for the European timer claim using a delta-hedging rule, as given by \eqref{example integral representation 2}, under all continuous semimartingale models where $\tau < \infty$ almost surely, provided the possibility to trade the asset frictionlessly in continuous time.

\begin{proposition}[Bick \cite{Bick}]
\label{proposition: Bick}
For functions $f: \R_+  \to \R$ and $w: \R_+^2 \to \R_+$ respectively representing a payoff function and a weighting function, and for $q > 0$, let $F$ satisfy the PDE 
\begin{equation}
\label{Bick PDE}
2 \frac{\partial F}{\partial x_2} + x^2 \frac{\partial^2 F}{\partial x_1^2} = 0, \quad F(x_1,q)=f(x_1).
\end{equation}
For $q > 0$, define the stopping time $\tau \equiv \tau_q$ by $\tau_q := \inf \{t > 0: \langle \log S \rangle_t = q\}$. Then,
\begin{equation}
\label{Bick integral representation}
F(S_{\cdot \wedge \tau},Q^w_{\cdot \wedge \tau}) = F(S_{0},0) + \int_{0}^{\cdot \wedge \tau} \frac{\partial}{\partial x_1} F(S_{u},\langle \log S \rangle_u) \ud S_{u}
\end{equation}
under all models where $S$ is a continuous semimartingale. In particular, the self-financing strategy with initial wealth $F(S_{0},0)$ and holding $H := \frac{\partial}{\partial x_1} F(S,\langle \log S \rangle)$ units of $S$ replicates the claim with payoff $f(S_{\tau})$ at maturity $\tau$ under all continuous semimartingale models where $\tau < \infty$ almost surely.
\end{proposition}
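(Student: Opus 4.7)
The plan is to mirror the computation carried out in the motivating discussion. First, assuming $F \in C^{2,1}(\R_+ \times [0, q])$, I would apply It\^o's formula to the process $F(S_{\cdot \wedge \tau}, \langle \log S \rangle_{\cdot \wedge \tau})$. Because $\langle \log S \rangle$ is continuous and nondecreasing, and hence of finite variation, it contributes neither a quadratic-variation term nor a cross-variation term with $S$ in the expansion. Consequently, It\^o's formula produces exactly three summands: the stochastic integral $\int_0^{\cdot \wedge \tau} \frac{\partial}{\partial x_1} F \, \ud S$, the Stieltjes integral $\int_0^{\cdot \wedge \tau} \frac{\partial}{\partial x_2} F \, \ud \langle \log S \rangle$, and $\tfrac{1}{2}\int_0^{\cdot \wedge \tau} \frac{\partial^2}{\partial x_1^2} F \, \ud \langle S \rangle$.

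Next, I would use the identity $\ud \langle \log S \rangle_u = S_u^{-2} \ud \langle S \rangle_u$, valid while $S > 0$, to rewrite $\ud \langle S \rangle_u = S_u^2 \ud \langle \log S \rangle_u$ and combine the two finite-variation integrals into
\[
\int_0^{\cdot \wedge \tau} \left( \frac{\partial F}{\partial x_2} + \frac{1}{2} S_u^2 \frac{\partial^2 F}{\partial x_1^2} \right)(S_u, \langle \log S \rangle_u) \, \ud \langle \log S \rangle_u.
\]
By the PDE \eqref{Bick PDE}, the integrand vanishes identically along the path, so this combined term is zero. This establishes the pathwise identity \eqref{Bick integral representation}.

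For the replication claim, I would evaluate \eqref{Bick integral representation} at $t = \tau$ on the event $\{\tau < \infty\}$. By definition of $\tau$, $\langle \log S \rangle_\tau = q$, and the terminal condition $F(x_1, q) = f(x_1)$ then reduces the left-hand side to $f(S_\tau)$. The resulting identity certifies that the self-financing strategy with initial wealth $F(S_0, 0)$ and holdings $H = \frac{\partial}{\partial x_1} F(S, \langle \log S \rangle)$ in $S$ delivers the payoff $f(S_\tau)$ at maturity.

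The main obstacle I would anticipate is technical rather than conceptual: applying It\^o's formula requires sufficient smoothness of $F$, and the identity $\ud \langle \log S \rangle = S^{-2} \ud \langle S \rangle$ requires $S$ to remain strictly positive. For payoff functions $f$ for which the PDE admits a classical solution on $\R_+ \times [0, q]$ and for models in which $S > 0$, the argument proceeds cleanly; for less regular $f$ (e.g.\ a vanilla call) one must either interpret the PDE solution in a suitable weak or mollified sense and pass to the limit, or localise via a stopping time of the form $\inf\{t : S_t \leq \varepsilon\}$ and send $\varepsilon \to 0$ at the end.
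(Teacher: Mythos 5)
Your argument reproduces the paper's own derivation in Section~\ref{section: motivating examples} step for step: It\^o's formula, the finite-variation property of $\langle \log S \rangle$, the substitution $\ud \langle \log S \rangle = S^{-2}\ud \langle S \rangle$, cancellation via the PDE \eqref{Bick PDE}, and evaluation at $\tau$ using the terminal condition. The closing remarks on regularity of $F$ and strict positivity of $S$ are sensible caveats that the paper leaves implicit, but they do not change the route.
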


Similarly, Propositions 2.9 and 2.10 of Carr and Lee \cite{Carr-Lee:semimartingales} propose model-independent replication strategies for claims on realised variance with maturity equal to the hitting time of price to a barrier or the exit time of price from an interval. We present a version of their results below.

\begin{proposition}[Carr and Lee \cite{Carr-Lee:semimartingales}]
\label{proposition: Carr and Lee}
For functions $f: \R_+ \to \R_+$ be a payoff function, and for $l < S_0 < u$, let $F$ satisfy the PDE
\begin{equation}
\label{Carr-Lee PDE}
2 \frac{\partial F}{\partial x_2} + x^2 \frac{\partial^2 F}{\partial x_1^2} = 0, \quad F(l, x_2) = F(u, x_2)=f(x_2).
\end{equation}
For $q > 0$, define the stopping time $\tau \equiv \tau_{l,u}$ by $\tau_{l,u} := \inf \{t\geq0: S_t \not\in (l,u)\}$. Then,
\begin{equation}
\label{Carr-Lee integral representation}
F(S_{\cdot \wedge \tau},\langle \log S \rangle^w_{\cdot \wedge \tau}) = F(S_{0},0) + \int_{0}^{\cdot \wedge \tau} \frac{\partial}{\partial x_1} F(S_{u},\langle \log S \rangle_{u}) \ud S_{u}
\end{equation}
under all models where $S$ is a continuous semimartingale. In particular, the self-financing strategy with initial wealth $F(S_{0},0)$ and holding $H := \frac{\partial}{\partial x_1} F(S,\langle \log S \rangle)$ units of $S$ replicates the claim with payoff $f(\langle \log S \rangle_{\tau})$ at maturity $\tau$ under all continuous semimartingale models where $\tau < \infty$ almost surely.
\end{proposition}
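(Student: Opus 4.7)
The plan is to mirror the argument used in the discussion preceding Proposition \ref{proposition: Bick}, but with the roles of the arguments of $F$ reversed: here the boundary conditions in the PDE \eqref{Carr-Lee PDE} are imposed on the spatial variable $x_1$ at the exit levels $\{l,u\}$, rather than on the "time-like" variable $x_2$ at a fixed horizon $q$. Correspondingly, we stop the process when $S$ leaves $(l,u)$ rather than when $\langle \log S \rangle$ reaches a target.

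The first step is to apply It\^o's formula to the process $F(S_{\cdot \wedge \tau}, \langle \log S \rangle_{\cdot \wedge \tau})$, which, assuming the requisite $C^{2,1}$ regularity of $F$, yields
\begin{align*}
F(S_{t \wedge \tau}, \langle \log S \rangle_{t \wedge \tau}) &= F(S_0, 0) + \int_0^{t \wedge \tau} \tfrac{\partial}{\partial x_1} F(S_u, \langle \log S \rangle_u) \, \ud S_u \\
&\quad + \int_0^{t \wedge \tau} \tfrac{\partial}{\partial x_2} F(S_u, \langle \log S \rangle_u) \, \ud \langle \log S \rangle_u \\
&\quad + \tfrac{1}{2} \int_0^{t \wedge \tau} \tfrac{\partial^2}{\partial x_1^2} F(S_u, \langle \log S \rangle_u) \, \ud \langle S \rangle_u.
\end{align*}
Substituting $\ud \langle \log S \rangle_u = S_u^{-2} \, \ud \langle S \rangle_u$ (valid on $[0,\tau]$ since $S$ is bounded away from zero there, as $l > 0$) and then using the PDE \eqref{Carr-Lee PDE} collapses the two finite-variation integrals to zero pathwise, leaving the pathwise identity \eqref{Carr-Lee integral representation}.

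The second step is to interpret this identity at $t = \tau$ on the event $\{\tau < \infty\}$. By continuity of the paths of $S$ and the definition of $\tau_{l,u}$ as an exit time from an open interval, one has $S_\tau \in \{l, u\}$ almost surely on this event, so the boundary condition $F(l, x_2) = F(u, x_2) = f(x_2)$ gives $F(S_\tau, \langle \log S \rangle_\tau) = f(\langle \log S \rangle_\tau)$. Combined with \eqref{Carr-Lee integral representation}, this shows that the self-financing strategy with initial wealth $F(S_0, 0)$ and hedge ratio $\partial_{x_1} F(S, \langle \log S \rangle)$ replicates the payoff $f(\langle \log S \rangle_\tau)$.

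I expect the main obstacle to be of a technical rather than conceptual nature: ensuring that a solution $F$ to \eqref{Carr-Lee PDE} exists with enough regularity on $[l,u] \times \R_+$ to justify the application of It\^o's formula (including behaviour at the spatial boundary), and that the stochastic integral is a bona fide It\^o integral rather than only a pathwise limit. For a sufficiently regular payoff function $f$ growing not too fast, standard parabolic theory on the strip $[l,u] \times \R_+$ should yield a classical $C^{2,1}$ solution, after which the argument above goes through directly; one might additionally wish to record that the identity is in fact \emph{pathwise} in the sense of F\"ollmer, so that it holds under \emph{any} continuous semimartingale law for $S$ without reference to a particular filtered probability space.
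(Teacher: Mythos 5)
Your argument is correct and follows exactly the same route the paper uses for Proposition \ref{proposition: Bick} (It\^o's formula, the substitution $\ud \langle \log S \rangle = S^{-2} \ud \langle S \rangle$, cancellation via the PDE, and evaluation at $\tau$ using the boundary conditions); the paper presents Proposition \ref{proposition: Carr and Lee} without a separate proof precisely because the same reasoning applies with the roles of the spatial and time-like boundaries swapped, as you observe. Your additional remarks on the regularity of $F$ and the pathwise (F\"ollmer) interpretation of the stochastic integral are appropriate and consistent with the paper's framing.
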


\section{Model-independent replication}
\label{section: replication theory}

We now consider a financial market with a riskless asset and $d$ risky assets with discounted prices $A = (A^1,\ldots,A^d)$. The time horizon is $[0,T]$. We allow for some of the assets to be traded options co-maturing at time $T$. When there are no traded options in the market, we may set $T = \infty$, in which case $[0,T]$ should be interpreted as $[0,\infty)$. We begin by defining what we mean by a model for $A$.

\begin{definition}
\label{definition: continuous semimartingale model}
We say that a filtered probability space $(\Omega, \cF, \F_T, \Prob)$ and a stochastic process $A: \Omega \times [0,T] \to \R^d$ form a \textit{continuous semimartingale (resp. local martingale / martingale) model} $M = \{A; (\Omega, \cF, \F_T, \Prob)\}$ for $A$ if the process $A$ is a continuous semimartingale (resp. local martingale / martingale) on $(\Omega, \cF, \F_T, \Prob)$ and its components corresponding to traded claims attain their payoff at maturity, by which we mean that the components representing traded options must satisfy their payoff conditions almost surely. Denote the family of continuous semimartingale (resp. local martingale / martingale) models for which $A_0 = a$ by $\cM \equiv \cM_s(a)$ (resp. $\cM_\ell \equiv \cM_\ell(a)$ / $\cM_m \equiv \cM_m(a)$).
\end{definition}

Note that $\cM_m \subset \cM_\ell \subset \cM_s$ and that, unless we specify otherwise, we will assume a common starting point $A_0 = a$ when referring to different sets of models. For a non-anticipative functional $X$ of $A$, we will use the shorthand notation $X_\cdot \equiv X_\cdot[A]$.

\begin{definition}
\label{definition: non-anticipative functional}
Let $\cP_1$ and $\cP_2$ be two pathspaces whose paths are defined on $[0,T]$ and are continuous. A map $X: \cP_1 \to \cP_2$ is called a \textit{non-anticipative functional} if $X_t[P] = X_t[P^\prime]$ for any $t \in [0,T]$ and any paths $P, P^\prime \in \cP_1$ which are identical on $[0,t]$.
\end{definition}

If $A$ is a process adapted to a filtration $\F_T$ and $X$ is a non-anticipative functional, then $X_\cdot[A]$ is adapted to $\F_T$. See Cont and Fourni\'e \cite{Cont-Fournie:pathwise, Cont-Fournie:stochastic} for measurability properties of non-anticipative functionals in a more general setting.

\begin{definition}
\label{definition: functionals in cX}
Define $\cX$ to be the set of non-anticipative functionals $X = (X^1,\ldots,X^n)$ such that, for each $i \in \{1,\ldots,n\}$ and under all continuous semimartingale models $M \in \cM_s$, $X^i_\cdot \equiv X^i_\cdot[A]$ has the integral form
\begin{equation}
\label{integral form of X^i}
X^i_t = X^i_0 + \int_0^t \alpha^i(X_u) \ud A_u + \int_0^t \beta^i(X_u) \cdot \ud \langle A \rangle_u + \int_0^t \gamma^i(X_u) \ud u, \quad t \in [0,T]
\end{equation}
for a constant $X_0 = (X^1_0,\ldots,X^n_0)$ and for continuous functions $\alpha^i: \R^n \to \R^d$, $\beta^i: \R^n \to \R^{d \times d}$ and $\gamma^i: \R^n \to \R$.
\end{definition}

For a set of functions $\alpha^i$, $\beta^i$, $\gamma^i$, $i \in \{1,\ldots,n\}$, define the following operators on smooth enough ($C^2$ is sufficient) functions $F: \R^n \to \R$:
\begin{align}
\mathcal{L}^{\alpha, \beta}_{ij} F &:= \sum_{k=1}^n \beta^k_{i,j} \frac{\partial}{\partial x_k} F + \frac{1}{2} \sum_{k,l=1}^n \alpha^k_i \alpha^l_j \frac{\partial^2}{\partial x_k \partial x_l} F , \quad i,j \in \{1,\ldots,d\}
\label{second variation operator} \\
\mathcal{L}^{\gamma} F &:= \sum_{k=1}^n \gamma^k \frac{\partial}{\partial x_k} F
\label{time variation operator} \\
\mathcal{L}^{\alpha} F &:= \sum_{k=1}^n \alpha^k \frac{\partial}{\partial x_k} F
\label{hedge ratio operator}
\end{align}
Note that $\mathcal{L}^{\alpha, \beta}_{i,j} F$, $i,j \in \{1,\ldots,d\}$ and $\mathcal{L}^{\gamma} F$ are functions from $\R^n$ to $\R$, whereas $\mathcal{L}^{\alpha} F$ is a function from $\R^n$ to $\R^d$.

\begin{definition}
\label{definition: solution set to system of PDEs}
For $X \in \cX$ with corresponding functions $\alpha^i, \beta^i, \gamma^i$, $i \in \{1,\ldots,d\}$ and for a connected set $\cD \subseteq \R^n$, we define the family of functions $\cS_X(\cD)$ by
$$
\cS_X(\cD) := \{F \in C^2(\cD):  \cL^{\gamma} F = \cL^{\alpha,\beta}_{i,j} F = 0, \quad i,j \in \{1,\ldots,d\} \}.
$$
\end{definition}

One may relax the $C^2$ assumption if some of the $\alpha^i$ are identically equal to zero. 

\begin{definition}
\label{definition: range of X}
Let $\cM \subseteq \cM_s$ be a subset of models for $A$ and let $X$ be a non-anticipative functional. Then, the \textit{range of $X$ on $\cM$} is defined by 
$$
\cR(X; \cM) := \{ x \in \R^n : \forall \epsilon > 0, \ \exists \{A; (\Omega, \cF, \F_T, \Prob)\} \in \cM \ \ s.t. \ \ \Prob(\tau^X_{B^c_\epsilon(x)} < T) > 0 \},
$$
where $B^c_\epsilon(x)$ is the closed ball of radius $\epsilon$ around $x$.
\end{definition}

The proof of the following proposition is provided in the appendix.

\begin{proposition}
\label{proposition: properties of range of X}
Let $\cM \subseteq \cM_s$ be a non-empty set of models for $A$ and $X$ be a non-anticipative functional. Then, for any starting point $a \in \R^d$ of $A$, the range $\cR(X; \cM(a))$ of $X$ on $\cM(a)$ is a closed and connected subset of $\R^n$, and $X_t[A] \in \cR(X; \cM)$ for all $t \in [0,T)$ almost surely under all models in $\cM(a)$.
\end{proposition}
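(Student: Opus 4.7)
The plan is to establish the three claims of the proposition in the order closedness, almost-sure containment, connectedness, with each step feeding into the next. Write $\cR := \cR(X; \cM)$ for brevity.

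Closedness follows from a direct neighborhood argument. Suppose $x_n \to x$ with each $x_n \in \cR$ and fix $\epsilon > 0$. For $n$ large enough $B^c_{\epsilon/2}(x_n) \subseteq B^c_\epsilon(x)$, so by hypothesis some $M \in \cM$ makes $\Prob(\tau^X_{B^c_{\epsilon/2}(x_n)} < T) > 0$, and the inclusion transfers this to $\Prob(\tau^X_{B^c_\epsilon(x)} < T) > 0$, witnessing $x \in \cR$.

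For the almost-sure containment, I fix a model $M \in \cM$ and show that $\cR^c$ is not visited before $T$. By the closedness just established, $\cR^c$ is open, and since $\R^n$ is Lindel\"of it suffices to cover $\cR^c$ by countably many open sets each of zero hitting probability. For each $x \in \cR^c$, the definition of $\cR$ supplies $\epsilon_x > 0$ such that $\Prob'(\tau^X_{B^c_{\epsilon_x}(x)} < T) = 0$ for every $M' \in \cM$; the open balls $\{y : d(x,y) < \epsilon_x\}$ cover $\cR^c$, and extracting a countable subcover indexed by $(x_n, \epsilon_n)$ and summing gives $\Prob(\exists t < T : X_t \in \cR^c) \leq \sum_n \Prob(\tau^X_{B^c_{\epsilon_n}(x_n)} < T) = 0$.

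For connectedness, first note $X_0 \in \cR$: the initial value $X_0[A] \equiv X_0$ is deterministic, so $\tau^X_{B^c_\epsilon(X_0)} = 0 < T$ under every model in the nonempty family $\cM$. Assume for contradiction that $\cR = A \sqcup B$ is a separation with $X_0 \in A$, and pick $y \in B$. Since $B$ is relatively open in $\cR$, choose $\delta > 0$ with $B^c_\delta(y) \cap \cR \subseteq B$, then fix $\epsilon \in (0, \delta)$ and a model $M_\epsilon$ with $\Prob(\tau^X_{B^c_\epsilon(y)} < T) > 0$. Intersecting this positive-probability event with the almost-sure event from the previous step yields $\omega$ for which the continuous image $\{X_s(\omega) : s \in [0, \tau^X_{B^c_\epsilon(y)}(\omega)]\}$ lies entirely in $\cR$, starts at $X_0 \in A$ and ends in $B^c_\epsilon(y) \cap \cR \subseteq B$; this image is a connected subset of $\cR = A \sqcup B$ meeting both components, which is impossible. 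The main technical obstacle is the connectedness claim, where the key move is to use the almost-sure containment to promote a random excursion reaching near $y$ into an honest connected subset of $\cR$; closedness and the almost-sure claim are essentially topological, relying only on the Lindel\"of property of $\R^n$ and countable additivity.
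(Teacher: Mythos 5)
Your proof is correct and takes a genuinely different route from the paper's. For closedness your direct limit-point argument is equivalent to the paper's contradiction version. The main divergence is in the other two steps. The paper first proves an intermediate lemma --- any closed set $C$ disjoint from $\cR$ satisfies $\Prob(\tau^X_C < T)=0$ under every model, via a compact exhaustion and finite subcover --- and then derives both connectedness and almost-sure containment from it: containment by intersecting the events $\{\tau_{\partial B^c_{1/n}(\cR)} \geq T\}$, and connectedness by invoking a separating hyperplane $H$ with $H \cap \cR = \emptyset$ and then noting the path must hit $H$ before reaching near any point of the far component. You instead prove almost-sure containment directly from a Lindel\"of cover of the open set $\cR^c$ and then deduce connectedness by observing that on a set of positive probability the path image $\{X_s(\omega) : s \in [0,\tau^X_{B^c_\epsilon(y)}(\omega)]\}$ is a connected subset of $\cR$ meeting both pieces of a putative separation. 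Your route buys two things. First, it makes connectedness a corollary of the containment, whereas in the paper both are consequences of the intermediate lemma; this is a cleaner dependency structure. Second, and more substantively, your connectedness argument avoids the paper's appeal to a separating hyperplane: two disjoint sets in $\R^n$ (even closed ones, as $O_1,O_2$ here are clopen in $\cR$ and hence closed in $\R^n$) need not be separated by a hyperplane --- think of two concentric spheres --- so that step in the paper is, as written, a gap. (It is fixable: one can replace the hyperplane by the closed set $C := \R^n \setminus (U_1 \cup U_2)$ for disjoint open neighbourhoods $U_i \supseteq O_i$ and apply the paper's own intermediate lemma, since a continuous path from $U_1$ to $U_2$ must meet $C$.) Your path-connectedness argument sidesteps the issue entirely, which is arguably the better proof. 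One small point worth making explicit in a final write-up: the observation $X_0 \in \cR$, which you correctly establish using that $X_0[A]$ is determined by $A_0 = a$ and that $\cM$ is nonempty, is not actually stated in the paper's proof but is needed for both versions of the connectedness argument.
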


Henceforth, when referring to integral representation identities which hold under a set of models $M \in \cM$, we will mean that the identities hold almost surely for each $M \in \cM$. 

For a closed set $B$, we define $\tau^X_B$ to be the first hitting time of $X_\cdot$ to $B$. If $X_\cdot$ does not attain $B$ prior to time $T$, set $\tau^X_B = \infty$. Denote $X^B$ to be the functional $X$ stopped when hitting $B$ (hence, $X^B_\cdot \equiv X_{\cdot \wedge \tau_B}$). The following proposition and its corollary, derived using It\^o calculus, provide a general framework for deriving model-free replication results.

\begin{proposition}
\label{proposition: integral representation}
Consider a functional $X \in \cX$ and a closed set $B \subseteq \R^n$. Denote $\cD \equiv \cR(X^B;\cM_s) \setminus B$. Then, for any $F: \cD \cup B \to \R$ in $\cS_X(\cD) \cap \cC(\cD \cup B)$ and under all continuous semimartingale models for $M \in \cM_s$ for $A$:
\item (1) $F(X^B_\cdot)$ admits the integral representation
\begin{equation}
\label{integral representation of F}
F(X^B_\cdot) \equiv F(X_0) + \int_0^\cdot \I_{\{t < \tau^X_B\}} \cL^\alpha F(X_t) \ud A_t
\end{equation}

\item (2) If $T \in (0,\infty)$, $\int_0^T F(X^B_t) \ud t$ admits the integral representation
\begin{equation}
\label{integral representation of cash flow}
\int_0^T F(X^B_t) \ud t = T F(X_0) + \int_0^{T} \I_{\{t < \tau^X_B\}} (T-t) \cL^\alpha F(X_t) \ud A_t.
\end{equation}
\end{proposition}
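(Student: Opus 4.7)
The plan is to apply Itô's formula to $F(X_t)$, use the semimartingale decomposition of $X \in \cX$ from \eqref{integral form of X^i}, and invoke the PDE conditions defining $\cS_X(\cD)$ to cancel the finite-variation contributions, leaving only the $\ud A$ integral with integrand $\cL^\alpha F(X)$. Part (2) then follows by integrating the identity from Part (1) against $\ud t$ and applying stochastic Fubini.

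First, for Part (1), I fix a model $M \in \cM_s$. Since $F$ is only assumed $C^2$ on the open set $\cD$, I introduce a localizing sequence $\tau_n := \inf\{ t \geq 0 : d(X_t, B) \leq 1/n \}$, so that $\tau_n \uparrow \tau^X_B$ and $X_{\cdot \wedge \tau_n}$ stays in a compact subset of $\cD$ on which $F$ together with its first and second partial derivatives are bounded. Applying Itô's formula on $[0, t \wedge \tau_n]$ and substituting $\ud X^i_u$ together with
$$\ud \langle X^i, X^j \rangle_u = \sum_{k,l=1}^d \alpha^i_k(X_u) \alpha^j_l(X_u) \ud \langle A^k, A^l \rangle_u,$$
then regrouping coefficients according to whether they multiply $\ud A_u$, $\ud u$, or $\ud \langle A^k, A^l \rangle_u$, yields
\begin{align*}
F(X_{t \wedge \tau_n}) - F(X_0) &= \int_0^{t \wedge \tau_n} \cL^\alpha F(X_u) \ud A_u + \int_0^{t \wedge \tau_n} \cL^\gamma F(X_u) \ud u \\
&\quad + \sum_{k,l=1}^d \int_0^{t \wedge \tau_n} \cL^{\alpha,\beta}_{k,l} F(X_u) \ud \langle A^k, A^l \rangle_u.
\end{align*}
The hypotheses $\cL^\gamma F = \cL^{\alpha,\beta}_{k,l} F = 0$ kill the last two integrals, and sending $n \to \infty$ — using path-continuity of $X$ together with continuity of $F$ on $\cD \cup B$ for the LHS, and the standard stability of stochastic integration under monotone stopping for the RHS — produces \eqref{integral representation of F}.

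For Part (2), I substitute \eqref{integral representation of F} into $\int_0^T F(X^B_t) \ud t$ and apply the stochastic Fubini theorem to interchange the order of integration on $[0,T] \times [0,T]$. Evaluating the inner Lebesgue integral $\int_s^T \ud t = T - s$ converts the double integral into $\int_0^T \I_{\{s < \tau^X_B\}} (T - s) \cL^\alpha F(X_s) \ud A_s$, which is exactly \eqref{integral representation of cash flow}.

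The main obstacle is the limiting step in Part (1). Because $F$ need not extend to a $C^2$ function near $B$, the integrand $\cL^\alpha F(X_u)$ may blow up as $u \uparrow \tau^X_B$, so \eqref{integral representation of F} must in general be interpreted as an improper stochastic integral. The indicator $\I_{\{u < \tau^X_B\}}$ together with the pre-limit localization by $\tau_n$ makes each truncated integrand bounded and Itô's formula applies unambiguously at every stage; convergence of the truncated stochastic integrals is then forced by continuity of the LHS in $t$ and its well-defined limit $F(X^B_t) - F(X_0)$, which pins down the RHS as a continuous local martingale.
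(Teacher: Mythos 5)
Your Part (1) argument matches the paper's in essence: both apply It\^o's formula, substitute the integral form of $X \in \cX$, regroup terms to isolate the $\cL^\alpha$, $\cL^\gamma$, and $\cL^{\alpha,\beta}_{i,j}$ integrands, and invoke the PDE conditions to cancel the finite-variation contributions, then pass to $\tau^X_B$ via continuity of $F$ on $B$. Your explicit localization $\tau_n$ is a sensible way of making precise what the paper implicitly does (though note that $X_{\cdot \wedge \tau_n}$ only stays in a compact subset of $\cD$ pathwise — the compactness is $\omega$-dependent — which is fine for It\^o but worth saying).

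For Part (2) the paper does something different: rather than stochastic Fubini, it evaluates the Part (1) identity at $t=T$, multiplies by $T$, and subtracts the integration-by-parts expansion of $(T \wedge \tau^X_B)\,F(X^B_T)$. The two approaches give the same answer, but the integration-by-parts route is fully pathwise and requires no integrability hypotheses beyond those already implicit in Part (1). Your stochastic Fubini interchange is where the argument as written has a gap: the usual stochastic Fubini theorem requires a square-integrability-type condition on the integrand over the product space, and $\cL^\alpha F(X_s)$ can blow up as $s \uparrow \tau^X_B$ (you acknowledge this by treating the Part (1) integral as improper). To make the Fubini step rigorous you would need to Fubini at each localized level $\tau_n$, where the integrand is bounded, and then take $n \to \infty$ on both sides — at which point you have essentially reconstructed the integration-by-parts argument. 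So the Fubini route works, but it is not "free"; the paper's deterministic integration-by-parts argument is both shorter and robust to the unboundedness of the integrand, and you should either follow it or explicitly carry your localization through the Fubini step.
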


\begin{proof}

\item Proof of claim(1): 

Let $M = \{A; (\Omega, \cF, \F_T, \Prob)\} \in \cM_s$. By $X \in \cX$, there is a set of functions $\alpha^i$, $\beta^i$ and $\gamma^i$, $i \in \{1\,\ldots,n\}$ such that
$$
\ud X^i_t = \alpha^i(X_t) \ud A_t + \beta^i(X_t) \ud \langle A \rangle_t + \gamma^i(X_t) \ud t
$$
for each $i \in \{1,\ldots,n\}$. Denoting $v \otimes w$ to be the outer product of vectors $v$ and $w$,
\begin{align*}
\ud F(X_t) &= \sum_{k=1}^n \frac{\partial}{\partial x_k} F(X_t) \ud X^k_t + \frac{1}{2} \sum_{k,l=1}^n \frac{\partial^2}{\partial x_k \partial x_l} F(X_t) \ud \langle X^k, X^l \rangle_t \\
&= \sum_{k=1}^n \frac{\partial}{\partial x_k} F(X_t) \alpha^k(X_t) \ud A_t + \sum_{k=1}^n \frac{\partial}{\partial x_k} F(X_t) \beta^k(X_t) \cdot \ud \langle A \rangle_t \\
& \quad + \sum_{k=1}^n \frac{\partial}{\partial x_k} F(X_t) \gamma^k(X_t) \ud t + \frac{1}{2} \sum_{k,l=1}^n \frac{\partial^2}{\partial x_k \partial x_l} F(X_t) ( \alpha^k(X_t) \otimes \alpha^l(X_t) ) \cdot \ud \langle A \rangle_t \\
&= \sum_{k=1}^n \frac{\partial}{\partial x_k} F(X_t) \alpha^k(X_t) \ud A_t + \sum_{k=1}^n \frac{\partial}{\partial x_k} F(X_t) \gamma^k(X_t) \ud t \\
& \quad + \sum_{i,j=1}^d \left( \sum_{k=1}^n \frac{\partial}{\partial x_k} F(X_t) \beta^k_{ij}(X_t) + \frac{1}{2} \sum_{k,l=1}^n \frac{\partial^2}{\partial x_k \partial x_l} F(X_t) \alpha^k_i(X_t) \alpha^l_j (X_t) \right) \ud \langle A^i, A^j \rangle_t
\end{align*}
for $t \in [0, T \wedge \tau^X_B)$. Hence, by definition of the operators in \eqref{second variation operator} -- \eqref{hedge ratio operator}, we get
$$
\ud F(X_t) = \cL^\alpha F(X_t) \ud A_t + \cL^\gamma F(X_t) \ud t + \sum_{i,j=1}^d \cL^{\alpha,\beta}_{i,j} F(X_t) \ud \langle A^i, A^j \rangle_t, \quad t \in [0, T \wedge \tau^X_B).
$$
By the assumption $F \in \cS_X(\cR(X^B; \cM_s) \setminus B)$, notably that
$$
\cL^\gamma F = \cL^{\alpha,\beta}_{i,j} F = 0, \quad i, j \in \{1,\ldots,d\}
$$
on $\cR(X^B; \cM_s) \setminus B$ and by $\Prob(X_t \in \cR(X^B; \cM_s) \setminus B, \ t \in [0,T \wedge \tau^X_B)) = 1$ (which follows from Proposition \ref{proposition: properties of range of X} and by $X_t \not\in B$ on $t \in [0, \tau^X_B)$), we conclude that
$$
\ud F(X_t) = \cL^\alpha F(X_t) \ud A_t, \quad t \in [0, T \wedge \tau^X_B).
$$
Hence, by continuity of $F$ on $B$,
$$
F(X^B_t) = F(X_{t \wedge \tau^X_B}) = F(X_0) + \int_0^t \I_{\{u < \tau^X_B\}} \cL^\alpha F(X_u) \ud A_u, \quad t \in [0,T],
$$
which proves \eqref{integral representation of F}.

\item Proof of claim (2):

Evaluating \eqref{integral representation of F} at $t = T$ and multiplying both sides by $T$, we have
$$
TF(X^B_T) = T F(X_0) + \int_0^T \I_{\{t < \tau^X_B\}} T \cL^\alpha(X_t) \ud A_t.
$$
On the other hand, integration by parts (applied to $t F(X^B_t)$ at $t = T \wedge \tau^X_B$) and $\ud F(X^B_t) = \cL^\alpha F(X^B_t) \ud A_t$ for $t \in [0, T \wedge \tau^X_B)$ give
\begin{align*}
(T \wedge \tau^X_B) F(X^B_T) &= \int_0^{T \wedge \tau^X_B} F(X^B_t) \ud t + \int_0^{T \wedge \tau^X_B} t \ud F(X^B_t) \\
&= \int_0^{T \wedge \tau^X_B} F(X^B_t) \ud t + \int_0^T \I_{\{t < \tau^X_B\}} t \cL^\alpha F(X^B_t) \ud A_t
\end{align*}
By the above identities,
\begin{align*}
&T F(X_0) + \int_0^T \I_{\{t < \tau^B_X\}} (T-t) \cL^\alpha F(X_t) \ud A_t \\
&= \left( T F(X_0) + \int_0^T \I_{\{t < \tau^X_B\}} T \cL^\alpha F(X^B_t) \ud A_t \right) - \int_0^T \I_{\{t < \tau^X_B\}} t \cL^\alpha F(X^B_t) \ud A_t \\
&= T F(X^B_T) - \left( (T \wedge \tau^X_B) F(X^B_T) - \int_0^{T \wedge \tau^X_B} F(X^B_t) \ud t \right) \\
&=  \int_0^T F(X^B_t) \ud t,
\end{align*}
which concludes the proof.

\end{proof}

\begin{corollary}
\label{corollary: model-independent replication}
For $X$ and $B$ as in Proposition \ref{proposition: integral representation}, the following hold under all semimartingale models where $\tau^X_B < \infty$ almost surely:
\item (1) $F(X^B_\cdot)$ is the wealth process of the self-financing strategy starting with capital $F(X_0)$ and holding $\cL^\alpha F(X_t)$ units of $A$ for $t \in [0,T \wedge \tau^X_B)$. Denoting $F = f$ on $B$, this strategy replicates the claim with payoff $f(X_{\tau^X_B})$.
\item (2) The self-financing strategy starting with capital $T F(X_0)$ and holding $(T-t) \cL^\alpha F(X_t)$ units of $A$ for $t \in [0, T \wedge \tau^X_B)$ replicates the cash flow $\int_0^T F(X^B_t) \ud t$.
\end{corollary}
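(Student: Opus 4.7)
The plan is to read each integral representation in Proposition \ref{proposition: integral representation} as the wealth equation of a self-financing portfolio in the $A$-market. Recall that a self-financing strategy holding $H_t$ units of $A$ (a vector in $\R^d$) with initial capital $W_0$ has wealth $W_t = W_0 + \int_0^t H_u \ud A_u$; identifying this with the left-hand side of \eqref{integral representation of F} (respectively \eqref{integral representation of cash flow}) delivers both replication statements immediately, and the only bookkeeping left is to confirm that the claimed payoff is actually attained at the terminal time.

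For part (1), I would take $H_t := \I_{\{t < \tau^X_B\}} \cL^\alpha F(X_t)$, so that the position is zero after $\tau^X_B$, and read off from \eqref{integral representation of F} that the self-financing wealth with initial capital $F(X_0)$ and holdings $H_t$ coincides with $F(X^B_t)$ for every $t \in [0,T]$. On $\{\tau^X_B < \infty\}$, continuity of the path $X$ (which follows from the integral form \eqref{integral form of X^i} since $A$ and $\langle A \rangle$ are continuous) together with the closedness of $B$ forces $X_{\tau^X_B} \in B$; the continuity of $F$ on $\cD \cup B$ and the identification $F = f$ on $B$ then give $F(X^B_{\tau^X_B}) = f(X_{\tau^X_B})$, which is the desired replication identity. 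For part (2), I would instead take $H_t := \I_{\{t < \tau^X_B\}}(T-t)\cL^\alpha F(X_t)$; this remains predictable thanks to the deterministic continuous weight $T-t$, and by \eqref{integral representation of cash flow} the resulting self-financing wealth starting from $T F(X_0)$ equals $\int_0^T F(X^B_t) \ud t$ at time $T$, as required.

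There is no real obstacle here: all the analytical content — the It\^o expansion and the use of the PDE conditions $\cL^\gamma F = \cL^{\alpha,\beta}_{i,j} F = 0$ — has already been performed inside Proposition \ref{proposition: integral representation}, so the corollary is essentially a translation step rather than a new computation. The only items worth checking explicitly are the pathwise continuity of $X$ (needed to guarantee $X_{\tau^X_B} \in B$ almost surely on $\{\tau^X_B < \infty\}$) and the predictability of the time-weighted holding in part (2), both of which are routine and follow directly from the structural assumptions on $X \in \cX$.
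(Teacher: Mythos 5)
Your proposal is correct, and it is exactly the argument the paper leaves implicit: the paper states the corollary without a proof because it is a direct reading of the two integral representations in Proposition \ref{proposition: integral representation} as wealth equations of self-financing portfolios. You identify $F(X_0)$ (resp.\ $T F(X_0)$) as initial capital, $\I_{\{t < \tau^X_B\}}\cL^\alpha F(X_t)$ (resp.\ $\I_{\{t < \tau^X_B\}}(T-t)\cL^\alpha F(X_t)$) as the holdings, and then use path continuity of $X$, closedness of $B$, and $F = f$ on $B$ to confirm that on $\{\tau^X_B < \infty\}$ the terminal wealth in part~(1) equals $f(X_{\tau^X_B})$, while \eqref{integral representation of cash flow} gives the cash flow replication in part~(2) directly. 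The two routine checks you flag — continuity of $X$ (from the integral form \eqref{integral form of X^i}) and predictability of the deterministic time-weighted position — are the right ones, and nothing further is needed.
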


Proposition \ref{proposition: integral representation} and Corollary \ref{corollary: model-independent replication} have probability-free pathwise interpretations. This is presented in much further detail in \cite{Atoyan}.

\section{Examples with single traded asset}
\label{section: examples}

We now discuss some applications of the results in the previous section, in particular of Corollary \ref{corollary: model-independent replication}. In this section, we consider a market with a single traded underlying asset $A \equiv S$ and time horizon $[0,\infty),$ and consider different choices of functionals $X \in \cX$ of $S$. In particular, we do not consider markets with traded claims in this section (see Section \ref{section: examples with convex claim} for such a market). When referring to model-independent identities, we will mean that these identities apply under all continuous semimartingale models for $S$. Similarly, model-independent replication strategies of claims with maturity $\tau_B$ will refer to strategies which apply under all continuous semimartingale models where the stopping time $\tau_B$ is finite almost surely. In this section, we first highlight basic examples for the case of wealth processes and corresponding claims on price and realised variance, with the latter being of a quite general weighted form. We then discuss claims contingent on running extrema before providing examples with claims having an explicit time dependence.

\subsection{Claims on price and quadratic variation}
\label{subsection: price and quadratic variation}

Let $w: \R \rightarrow \R$ and $Q^w_\cdot = \int_0^\cdot w(S_t) \ud \langle S \rangle_t$, and consider $X := (S,Q^w)$. For a closed set $B \subset \R^2$ and $\cD \equiv \cR(X^B; \cM_s) \setminus B$, $S_X(\cD)$ is the set of $C^{2,1}$ functions which solve the heat equation
\begin{equation}
\label{PDE for price and realised variance}
2 w \frac{\partial}{\partial x_2} F + \frac{\partial^2}{\partial x_1^2} F = 0
\end{equation}
subject to some boundary conditions. Theorem \ref{proposition: integral representation} and Corollary \ref{corollary: model-independent replication} yield the result in Bick \cite{Bick}, notably that for any function $F$ solving \eqref{PDE for price and realised variance}, $F( S_{\cdot \wedge \tau^X_B}, Q^w_{\cdot \wedge \tau^X_B} )$ is the wealth process of the self-financing portfolio starting with capital $F(S_0,0)$ and holding a position $H = \frac{\partial}{\partial x_1} F(S,Q^w)$ in $S$ on $[0,\tau^X_B)$. Different choices of the boundary set $B$ yield different replication results. Setting $B =  \{x_2 = q\}$ for $q > 0$ yields replication of timer claims, as outlined in Section \ref{section: motivating examples}. Setting $B = \{x_1 = l\} \cup \{x_1 = u\}$ for $0 \leq l < S_0 < u \leq \infty$ yields the replication strategies proposed in Carr and Lee \cite{Carr-Lee:semimartingales}. We note one further application with $X = (S, Q^w)$: timer claims with barrier conditions. Let $q > 0$ be the hitting level of $Q^w$ defining the maturity of a knock-out barrier claim with barrier levels $0 \leq l \leq S_0 \leq u \leq \infty$. Temporarily define $\tau_q := \inf\{ t > 0: Q^w_t = q\}$ to be the maturity of the claim and define $\tau_b := \inf\{ t > 0: S_t \not\in (l,u) \}$ to be hitting time of $S$ to either barrier $l$ or $u$. Then, the following corollary follows directly from the integral representation identity \eqref{integral representation of F}.

\begin{corollary}
\label{corollary: barrier claims}
Let $f: \R_+ \to \R$ be such that $f(l) = f(u) = 0$ and that a solution $F$ to the heat equation \eqref{PDE for price and realised variance} and boundary conditions
\begin{align}
&F(x_1, q) = f(x_1), \quad x_1 \in (l,u), \label{boundary condition at maturity} \\
&F(l, x_2) = F(u, x_2) = 0, \quad x_2 \in [0,q], \label{boundary condition at barrier}
\end{align}
exists. Then, the self-financing portfolio starting with capital $F(S_0,0)$ and holding a position $H_t := \frac{\partial}{\partial x_1} F(S_t,Q^w_t)$ in $S$ for $t \in [0,\tau_q \wedge \tau_b)$ replicates the timer claim with payoff $f(S_{\tau_q}) \I_{\{ S_t \in (l, u), \ t \in [0,\tau_q] \}}$ under all continuous semimartingale models where $\tau_q < \infty$ almost surely.
\end{corollary}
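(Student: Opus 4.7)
The plan is to apply Proposition \ref{proposition: integral representation}(1) with the functional $X = (S, Q^w) \in \cX$ and the boundary set
$$
B := \bigl([l,u] \times \{q\}\bigr) \cup \bigl(\{l, u\} \times [0,q]\bigr),
$$
chosen so that the first hitting time satisfies $\tau^X_B = \tau_q \wedge \tau_b$. Once the integral representation from that proposition is in hand, the result follows by evaluating at $\tau^X_B$ and reading off the payoff from the boundary data of $F$.

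The first step is to verify that $F \in \cS_X(\cD) \cap C(\cD \cup B)$, where $\cD = \cR(X^B; \cM_s) \setminus B$. Writing the dynamics in the form \eqref{integral form of X^i} for $X = (S, Q^w)$ gives $\alpha^1 \equiv 1$, $\beta^1 \equiv 0$, $\gamma^1 \equiv 0$ for the $S$-component and $\alpha^2 \equiv 0$, $\beta^2(x_1,x_2) = w(x_1)$ (in the $(1,1)$-entry), $\gamma^2 \equiv 0$ for the $Q^w$-component. Consequently $\cL^\gamma F \equiv 0$ holds trivially, and a direct expansion of \eqref{second variation operator} reduces $\cL^{\alpha,\beta}_{1,1} F = 0$ to precisely the heat equation \eqref{PDE for price and realised variance}. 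Continuity of $F$ on $\cD \cup B$ is where the hypothesis $f(l) = f(u) = 0$ is used: it ensures that the two pieces of boundary data \eqref{boundary condition at maturity} and \eqref{boundary condition at barrier} are consistent at the corners $(l,q)$ and $(u,q)$ where they meet.

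Second, I would identify the hedge ratio via \eqref{hedge ratio operator}. Since $\alpha^2 \equiv 0$ and $\alpha^1 \equiv 1$, $\cL^\alpha F$ reduces to $\frac{\partial}{\partial x_1} F$, which matches the position $H_t$ in the statement. Proposition \ref{proposition: integral representation}(1) then yields
$$
F(S_{\cdot \wedge \tau^X_B}, Q^w_{\cdot \wedge \tau^X_B}) = F(S_0, 0) + \int_0^{\cdot} \I_{\{t < \tau^X_B\}} \frac{\partial}{\partial x_1} F(S_t, Q^w_t) \ud S_t
$$
under every $M \in \cM_s$, which is exactly the wealth process of the self-financing strategy with initial capital $F(S_0,0)$ and holding $H_t$ units of $S$ on $[0, \tau_q \wedge \tau_b)$.

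It then remains to identify the terminal value. On $\{\tau_q < \tau_b\}$ one has $Q^w_{\tau_q} = q$ and $S_{\tau_q} \in (l,u)$, so $F(X_{\tau^X_B}) = f(S_{\tau_q})$; on $\{\tau_b < \tau_q\}$ one has $S_{\tau_b} \in \{l, u\}$ and $Q^w_{\tau_b} \in [0,q]$, so $F(X_{\tau^X_B}) = 0$; on $\{\tau_q = \tau_b\}$ both representations give zero thanks to $f(l) = f(u) = 0$. Collecting cases yields $F(X_{\tau^X_B}) = f(S_{\tau_q}) \I_{\{S_t \in (l,u),\, t \in [0,\tau_q]\}}$ on $\{\tau_q < \infty\}$, which is the desired replication. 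The only real obstacle in executing this plan is the bookkeeping at the corners of $B$ — matching the closed boundary with an indicator that has strict inequalities — and the compatibility condition $f(l) = f(u) = 0$ is precisely what makes this step painless; everything else is delivered by Proposition \ref{proposition: integral representation}.
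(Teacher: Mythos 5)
Your proposal is correct and follows essentially the same route as the paper: both apply the integral representation (via Proposition \ref{proposition: integral representation} and Corollary \ref{corollary: model-independent replication}) with $X = (S,Q^w)$ and a boundary set $B$ for which $\tau^X_B = \tau_q \wedge \tau_b$, then read off the terminal value from the boundary data. One small bookkeeping detail the paper makes explicit and you leave implicit: on $\{\tau_b < \tau_q\}$ the strategy must continue to hold zero units of $S$ on $[\tau_b, \tau_q]$ so that the wealth remains $0$ up to the claim's actual maturity $\tau_q$, not just up to $\tau^X_B$.
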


\begin{proof}

Setting $B = \{x_1 = l\} \cup \{x_1 = u\} \cup \{x_2 = q\}$ and boundary conditions \eqref{boundary condition at maturity} -- \eqref{boundary condition at barrier} in Corollary \ref{corollary: model-independent replication}, we get that the self-financing strategy starting with capital $F(S_0,0)$ and holding a position $H$ in $S$ until $\tau^X_B = \tau_q \wedge \tau_b$ attains a wealth at time $\tau^X_B$ equal to $0$ if $\tau^X_B = \tau_b$ (in which case a barrier is hit, $H$ is set to $0$ until $\tau_q$ and we trivially replicate the knocked out barrier claim) or to $f(S_{\tau_q})$ if $\tau^X_B = \tau_q$ (the payoff of the timer claim which has not been knocked out).

\end{proof}

Since knock-in barrier claims are the difference between non-barrier claims and knock-out barrier claims, their replication follows from taking the difference of the replication strategies for the latter. 

The replication strategy in Corollary \ref{corollary: barrier claims} only requires inputs $S$ and $Q^w$ to $F$. In particular, the dependence on the extrema is implicitly embedded in the stopping time defining the dynamic hedging period. This is due to the weak path-dependence of single touch barrier claims. The following subsection considers claims which depend explicitly on the running extrema of $S$.

\subsection{Dependence on running extrema}
\label{subsection: dependence on running extrema}

The class $\cX$ does not contain functionals $X$ which include the running maximum or minimum as components. Hence, Theorem \ref{proposition: integral representation} and Corollary \ref{corollary: model-independent replication} cannot be directly applied with such functionals.  We could have extended the class $\cX$ to incorporate dependence on running extrema and generalised Proposition \ref{proposition: integral representation} to this extended class, but we did not do so for the sake of clarity. In this subsection, we discuss such extensions in the context of $A = S$. We also show that the squared drawdown or drawup (difference between price and running minimum) may be included as a component of $X \in \cX$.

\begin{notation}
Within this subsection, denote the running maximum and minimum of $S$ respectively by $M \equiv M^S$ and $m \equiv m^S$.
\end{notation}

\subsubsection{Running extrema as components of $X$} 
The following result shows that one can generate wealth processes which have a model-independent closed form given by functions of $X$, where $X$ includes $M$ as a component.

\begin{proposition}
\label{proposition: claims on S, M and Q^w}
Consider $X = (S, M, Q^w)$ and a closed set $B \subset \R_+^3$ and $\cD \equiv \cR(X^B; \cM_s) \setminus B$. Suppose that $F: \cD \cup B \to \R$ is a $C^{2,1,1}(\cD)$ solution to 
\begin{align}
&2 w \frac{\partial}{\partial x_3} F + \frac{\partial^2}{\partial x_1^2} F = 0, \label{running maximum PDE} \\
& \I_{\{ x_1 = x_2 \}}\frac{\partial}{\partial x_2} F(x_1, x_2, x_3) = 0 \label{mixed BC}
\end{align}
on $\cD$ and is continuous on $B$. Then, $F(X^B_\cdot)$ admits the integral representation
$$
F (X^B_\cdot) = F(X_0) + \int_0^\cdot \I_{\{ t < \tau^X_B \}} \frac{\partial}{\partial x_1} F(X_t) \ud S_t
$$
under all continuous semimartingale models for $S$. Hence, $F(X^B_\cdot)$ is the model-independent wealth process of the self-financing strategy starting with capital $F(X_0)$ and holding $\frac{\partial}{\partial x_1} F(X_t)$ units of $S$ for $t \in [0, \tau^X_B)$. Denoting $F = f$ on $B$, this strategy replicates the claim with payoff $f(X_{\tau^X_B})$ and maturity $\tau^X_B$ under all continuous semimartingale models where $\tau^X_B < \infty$ almost surely. This yields replication of: \\
(a) Claims on $(S, M)$ with maturity equal to hitting times of $Q^w$, either with or without a lower barrier condition on $S$, under all continuous semimartingale models where the hitting times are finite almost surely. \\
(b) Claims on $(M,Q^w)$ with maturity equal to hitting times of $S$ to a level $l < S_0$ under all continuous semimartingale models where the hitting times are finite almost surely.  \\
A similar result applies for $X = (S, m, Q^w)$.
\end{proposition}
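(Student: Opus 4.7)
The approach is to mimic the proof of Proposition \ref{proposition: integral representation}, but since $M$ is not of the integral form in Definition \ref{definition: functionals in cX}, we cannot directly invoke that proposition. Instead we apply It\^o's formula to $F(S, M, Q^w)$ from scratch, using that $M$ and $Q^w$ are continuous processes of finite variation while $S$ carries all the martingale part. Concretely, since $M$ is nondecreasing and $Q^w$ is of finite variation, the cross and pure-second variations $\langle M, \cdot \rangle$, $\langle Q^w, \cdot \rangle$ and $\langle M, M\rangle$, $\langle Q^w, Q^w \rangle$ all vanish, so It\^o's formula (which under $C^{2,1,1}$ regularity holds because the finite-variation coordinates require only $C^1$) yields
\begin{align*}
\ud F(S_t,M_t,Q^w_t) &= \tfrac{\partial F}{\partial x_1} \ud S_t + \tfrac{\partial F}{\partial x_2} \ud M_t + \tfrac{\partial F}{\partial x_3} \ud Q^w_t + \tfrac{1}{2} \tfrac{\partial^2 F}{\partial x_1^2} \ud \langle S \rangle_t
\end{align*}
on $[0, \tau^X_B)$, where all partials are evaluated at $(S_t,M_t,Q^w_t)$.

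Next I would eliminate each of the non-$dS$ terms. Using $\ud Q^w_t = w(S_t) \ud \langle S \rangle_t$, the two $\ud \langle S \rangle_t$ terms combine into $\bigl(w \tfrac{\partial F}{\partial x_3} + \tfrac{1}{2} \tfrac{\partial^2 F}{\partial x_1^2}\bigr) \ud \langle S \rangle_t$, which vanishes by the PDE \eqref{running maximum PDE}. For the $\ud M_t$ term, I would invoke the classical fact that the random measure $\ud M_t$ is almost surely supported on $\{t : S_t = M_t\}$; combined with the boundary condition \eqref{mixed BC}, this gives $\tfrac{\partial F}{\partial x_2}(S_t,M_t,Q^w_t) \ud M_t \equiv 0$. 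What remains is $\ud F(X_t) = \tfrac{\partial F}{\partial x_1}F(X_t) \ud S_t$ on $[0,\tau^X_B)$, and stopping at $\tau^X_B$ together with continuity of $F$ on $B$ yields the integral representation and the replication claim exactly as in Corollary \ref{corollary: model-independent replication}.

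The applications (a) and (b) follow by choosing $B$ appropriately: for (a), take $B = \{x_3 = q\}$ (no barrier) or $B = \{x_3 = q\} \cup \{x_1 = l\}$ (with lower barrier), imposing the payoff condition on $\{x_3 = q\}$ and zero on $\{x_1 = l\}$; for (b), take $B = \{x_1 = l\}$ and impose the payoff $f(x_2, x_3)$ there. The range $\cD = \cR(X^B; \cM_s) \setminus B$ is contained in $\{x_1 \leq x_2\}$ so the boundary condition \eqref{mixed BC} is the natural Neumann-type condition along the diagonal face of $\cD$. The case $X = (S,m,Q^w)$ is symmetric, with $\ud m_t$ supported on $\{S_t = m_t\}$ and the same argument, with the appropriate sign convention and boundary face $\{x_1 = x_2\}$ from above.

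The main obstacle is the $\ud M_t$ term: one must justify that the chosen $C^1$ regularity of $F$ in $x_2$, together with the fact that $dM$ lives on the contact set, allows the Neumann boundary condition \eqref{mixed BC} to annihilate that term. This is where the structure of the running maximum is essential and why $M$ cannot be absorbed into $\cX$ via Definition \ref{definition: functionals in cX}. A minor secondary subtlety is verifying that the process $(S_t, M_t, Q^w_t)$ indeed stays in $\cD \cup B$ up to $\tau^X_B$, which follows from Proposition \ref{proposition: properties of range of X} applied to the pair $(S, Q^w) \in \cX$ together with the trivial bound $m^S_t \leq S_t \leq M^S_t$.
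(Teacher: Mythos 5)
Your proof is correct and takes essentially the same approach as the paper: apply the extended Itô formula to $F(S,M,Q^w)$, kill the $\ud\langle S\rangle$ terms via the PDE \eqref{running maximum PDE}, and annihilate the $\ud M$ term by combining the support of $\ud M$ on $\{S=M\}$ with the Neumann-type condition \eqref{mixed BC}. The paper's proof is terser but identical in substance.
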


\begin{proof}

The results for $X = (S, M, Q^w)$ and $X = (S, m, Q^w)$ follow along the same lines, so we only outline the arguments for $X = (S, M, Q^w)$.

Let $F \in \cC(B;X)$ be a solution to \eqref{running maximum PDE} -- \eqref{mixed BC}. By It\^o's formula (generalised to include the running maximum),
\begin{align*}
\ud F(S_t, M_t, Q^w_t) =  \frac{\partial}{\partial x_1} F \ud S_t +  \frac{\partial}{\partial x_2}F \ud M_t + \left( w \frac{\partial}{\partial x_3}  F + \frac{1}{2}  \frac{\partial^2}{\partial x_1^2}F \right) \ud \langle S \rangle_t, \quad t \in [0,\tau^X_B),
\end{align*}
where we used the shorthand notation $F \equiv F(S_t,M_t,Q^w_t)$. By \eqref{running maximum PDE} -- \eqref{mixed BC}, the continuity of $F$ at $B$ and the observation that $M$ is carried by $\{S = M\}$, it follows that
$$
F(X_t) = F(X_0) + \int_0^t \frac{\partial}{\partial x_1} F(X_u) \ud S_u, \quad t \in [0,\tau^X_B].
$$
under all continuous semimartingale models for $S$. The replication results follow by analogous arguments as in Subsection \ref{subsection: price and quadratic variation}. 
\end{proof}

\subsubsection{Dependence on the squared drawdown and drawup}

This subsection shows that the square of the drawdown process $M - S$ or drawup process $S - m$ may be used as a component of $X$ such that $X \in \cX$. This is due to the following application of well-known results on Az\'ema-Yor martingales (Theorem 1 in Ob\l\'oj \cite{Obloj:Azema-Yor}).  Define $D^M := (M - S)^2$ and $D^m := (S - m)^2$. 

\begin{lemma}
\label{lemma: Azema-Yor}

For all continuous semimartingale models for $S$,
\begin{align}
& D^M_\cdot = S_\cdot^2 - S_0^2 - 2 \int_0^\cdot M_t \ud S_t = \langle S \rangle_\cdot - 2 \int_0^\cdot \sqrt{ D^M_t } \ud S_t,
\label{integral formulation for D^M} \\
& D^m_\cdot = S_\cdot^2 - S_0^2 - 2 \int_0^\cdot m_t \ud S_t = \langle S \rangle_\cdot + 2 \int_0^\cdot  \sqrt{ D^m_t }  \ud S_t,
\label{integral formulation for D^m}
\end{align}

\end{lemma}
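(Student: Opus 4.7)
The plan is to prove the two identities for $D^M$ by Itô's formula together with the key support property of the running maximum, and then observe that the argument for $D^m$ is entirely analogous.

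First, I would apply Itô's formula to $D^M_t = (M_t - S_t)^2$. Since $M$ is continuous and of finite variation, $\langle M - S \rangle = \langle S \rangle$, which yields
\begin{equation*}
\ud D^M_t = 2(M_t - S_t)(\ud M_t - \ud S_t) + \ud \langle S \rangle_t.
\end{equation*}
The crucial ingredient (which is what makes this a special case of the Azéma-Yor machinery) is that the Stieltjes measure $\ud M$ is carried by the closed set $\{t : S_t = M_t\}$, since $M$ only increases when $S$ attains a new maximum. Therefore $\int_0^\cdot (M_t - S_t)\,\ud M_t \equiv 0$, and we are left with $\ud D^M_t = -2(M_t - S_t)\,\ud S_t + \ud\langle S\rangle_t$.

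Integrating from $0$ to $t$ and using $M_0 = S_0$ (so $D^M_0 = 0$), and recalling that $M_t - S_t \geq 0$, hence $M_t - S_t = \sqrt{D^M_t}$, gives immediately the second equality
\begin{equation*}
D^M_t = \langle S \rangle_t - 2\int_0^t \sqrt{D^M_u}\,\ud S_u.
\end{equation*}
For the first equality, I would instead split $-2\int_0^t(M_u - S_u)\,\ud S_u = -2\int_0^t M_u \,\ud S_u + 2\int_0^t S_u\,\ud S_u$ and invoke Itô's formula on $S^2$, namely $2\int_0^t S_u\,\ud S_u = S_t^2 - S_0^2 - \langle S \rangle_t$. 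The $\langle S \rangle_t$ terms cancel, producing $D^M_t = S_t^2 - S_0^2 - 2\int_0^t M_u \,\ud S_u$.

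The argument for $D^m$ follows the same template with two sign changes: $\ud m$ is carried by $\{S = m\}$, $m$ is nonincreasing of finite variation, and $S - m \geq 0$ so $S_u - m_u = \sqrt{D^m_u}$. Itô on $(S-m)^2$ and cancellation of $\int (S-m)\,\ud m = 0$ yield $\ud D^m_t = 2(S_t - m_t)\,\ud S_t + \ud\langle S \rangle_t$, from which both stated representations follow exactly as above. There is no real obstacle here; the only point requiring care is justifying the vanishing of $\int (M-S)\,\ud M$ and $\int (S-m)\,\ud m$, which is the standard support property of running extrema and the reason this is framed as a consequence of the Azéma-Yor theorem cited from Obłój.
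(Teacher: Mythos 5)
Your proof is correct. The essential ingredient in both your argument and the paper's is the same: the Stieltjes measure $\ud M$ is carried by $\{S = M\}$ (and $\ud m$ by $\{S = m\}$), which kills the offending term and lets everything collapse. The route is slightly different, though. You apply It\^o's formula directly to $D^M = (M-S)^2$, obtain $\ud D^M = -2(M-S)\,\ud S + \ud\langle S\rangle$ after discarding $(M-S)\,\ud M = 0$, and read off the second equality immediately, then recover the first by rewriting $(M-S) = M - S$ and invoking $S^2 = S_0^2 + 2\int S\,\ud S + \langle S\rangle$. The paper instead proves the general Az\'ema--Yor identity $g(M_t)M_t - G(M_t) - S_t\,g(M_t) + G(S_0) + \int_0^t g(M_u)\,\ud S_u = 0$ for an arbitrary locally bounded $g$ (via integration by parts on $g(M)(M-S)$ and the same support property in the guise of $(M-S)\,\ud g(M) = 0$), then specializes to $g(x) = x$ to get the first equality and deduces the second from the It\^o formula for $S^2$. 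Your version is a touch more direct for this one payoff; the paper's version makes the connection to the full Az\'ema--Yor family from Ob{\l}\'oj's theorem explicit, which is why it is cited. Both are complete and correct, and your handling of the $D^m$ case by sign flips is exactly what the paper does as well.
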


\begin{proof}

Let $g$ be a locally bounded function on $\R$ and define $G(x) := \int_0^x g(s) \ud s$. Then, by the finite variation of $g(M_t)$,
$$
\ud \left( g(M_u) (M_u - S_u) \right) =  (g(M_u) \ud M_u - g(M_u) \ud S_u) - (M_u - S_u) \ud M_u
$$
for $u \geq 0$. Since $M_0 = S_0$, $(M_u - S_u) \ud g(M_u) = 0$, and $\int_0^t g(M_u) \ud M_u = G(M_t) - G(S_0)$, this gives
\begin{align*}
&g(M_t) (M_t - S_t) = G(M_t) - G(S_0) - \int_0^t g(M_u) \ud S_u \\
\Leftrightarrow \ &g(M_t) M_t - G(M_t) - S_t g(M_t) + G(S_0) + \int_0^t g(M_u) \ud S_u = 0.
\end{align*}
The only properties we have used above are that $M$ is carried by $\{S = M\}$, that it has finite variation and that $M_0 = S_0$. Hence, the same arguments applied to $m$ give
$$
g(m_t) m_t - G(m_t) - S_t g(m_t) + G(S_0) + \int_0^t g( m_u ) \ud S_u = 0.
$$
For $g(x) = x$, hence $G(x) = \frac{1}{2} x^2$, this gives
\begin{align*}
&\frac{1}{2} M_t^2 - S_t M_t + \frac{1}{2} S_0^2 + \int_0^t M_u \ud S_u = 0, \quad t \geq 0, \\
&\frac{1}{2} m_t^2 - S_t m_t + \frac{1}{2} S_0^2 + \int_0^t m_u \ud S_u = 0, \quad t \geq 0,
\end{align*}
or, equivalently,
\begin{align*}
& (M_t - S_t)^2 - S_t^2 + S_0^2 + 2 \int_0^t M_u \ud S_u = 0, \quad t \geq 0, \\
& (S_t - m_t)^2 - S_t^2 + S_0^2 + 2 \int_0^t m_u \ud S_u = 0, \quad t \geq 0,
\end{align*}
which rearrange to \eqref{integral formulation for D^M} and \eqref{integral formulation for D^m} respectively by $S^2_\cdot = S_0^2 + 2 \int_0^\cdot S_u \ud S_u + \langle S \rangle_\cdot$.

\end{proof}

\begin{remark}
Let $T > 0$ and consider a market where claims with payoff $(S_T)^2$ at maturity $T$ are traded at price $C$. Then, Lemma \ref{lemma: Azema-Yor} implies that the claim of maturity $T > 0$ with payoff $D^M_T = (M_T - S_T)^2$ is replicated model-independently by the self-financing portfolio starting with capital $C_0 - S_0^2$ and holding a static position of one unit of $C$ along with a dynamic position of $-2 \sqrt{D^M}$ units of $S$. An analogous replication result holds for the claim with payoff $D^m_T$.
\end{remark}

We can apply the representation of $D^M$ and $D^m$ in Lemma \ref{lemma: Azema-Yor} along with Proposition \ref{proposition: integral representation} to obtain the following replication results corresponding to $X = (D^M, \langle S \rangle)$.

\begin{corollary}
\label{corollary: claims on square drawdown and realised variance}
Consider a closed set $B \subset \R^2$, the set $\cD = \cR(X^B; \cM_s) \setminus B$ and a function $f$ on $B$ such that there is a solution $F \in C^{2,1}(\cD) \cap C(B)$ to
\begin{equation}
\label{drawdown PDE}
\frac{\partial}{\partial x_2} F + \frac{\partial}{\partial x_1} F + 2 x_1 \frac{\partial^2}{\partial x_1^2} F = 0.
\end{equation}
with boundary condition $F = f$ on $B$. Then, the self-financing strategy starting with capital $F(0,0)$ and holding $- 2 \sqrt{D^M_t} \frac{\partial}{\partial x_1} F(D^M_t, \langle S \rangle_t)$ units of $S$ for $t \in [0,\tau^X_B)$ replicates the claim with payoff $f( D^M_{\tau^X_B}, \langle S \rangle_{\tau^X_B} )$ and maturity $\tau^X_B$ under all continuous semimartingale models where $\tau^X_B < \infty$ almost surely. This yields replication of: \\
(a) Claims on the drawdown with maturity equal to hitting times of $\langle S \rangle$, either with or without a barrier condition on the drawdown, under all continuous semimartingale models where the hitting times are finite almost surely.  \\
(b) Claims on $Q^w$ with maturity equal to hitting times of the drawdown under all continuous semimartingale models where the hitting times are finite almost surely.

\end{corollary}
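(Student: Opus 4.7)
The plan is to obtain the corollary as a direct specialization of Proposition \ref{proposition: integral representation} to the two-dimensional functional $X = (D^M, \langle S \rangle)$.

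First I would verify $X \in \cX$ by reading off the coefficient functions from Lemma \ref{lemma: Azema-Yor}. The identity $D^M_t = \langle S \rangle_t - 2 \int_0^t \sqrt{D^M_u} \, \ud S_u$ forces $\alpha^1(x_1,x_2) = -2\sqrt{x_1}$, $\beta^1 \equiv 1$ and $\gamma^1 \equiv 0$, while the trivial representation of $\langle S \rangle$ gives $\alpha^2 \equiv 0$, $\beta^2 \equiv 1$ and $\gamma^2 \equiv 0$. All of these coefficients are continuous on the relevant state space $\R_+^2$, so $X \in \cX$ with $X_0 = (0,0)$.

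Next I would unfold the operators of Definition \ref{definition: solution set to system of PDEs} for this choice. Since $\gamma \equiv 0$ and $d=1$, the only non-vacuous condition characterising $\cS_X$ is
$$
\cL^{\alpha,\beta}_{1,1} F = \beta^1 \frac{\partial F}{\partial x_1} + \beta^2 \frac{\partial F}{\partial x_2} + \frac{1}{2} (\alpha^1)^2 \frac{\partial^2 F}{\partial x_1^2} = \frac{\partial F}{\partial x_1} + \frac{\partial F}{\partial x_2} + 2 x_1 \frac{\partial^2 F}{\partial x_1^2} = 0,
$$
which is precisely \eqref{drawdown PDE}. Moreover $\cL^\alpha F = \alpha^1 \frac{\partial F}{\partial x_1} = -2 \sqrt{x_1} \, \frac{\partial F}{\partial x_1}$, matching the prescribed hedge ratio. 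Applying Proposition \ref{proposition: integral representation}(1) and using $X_0 = (0,0)$ then yields, under every $M \in \cM_s$,
$$
F(X^B_\cdot) = F(0,0) + \int_0^\cdot \I_{\{t<\tau^X_B\}} \bigl( -2 \sqrt{D^M_t} \bigr) \frac{\partial F}{\partial x_1}(D^M_t, \langle S \rangle_t) \, \ud S_t,
$$
and Corollary \ref{corollary: model-independent replication}(1), together with the terminal datum $F = f$ on $B$, promotes this identity to the replication statement. Parts (a) and (b) then follow by specialising $B$: a set of the form $\{x_2 = q\}$ (optionally intersected with $\{x_1 = b\}$ to impose a drawdown barrier) yields (a), while a set of the form $\{x_1 = b\}$ yields (b).

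The only real subtlety, rather than obstacle, is that $\alpha^1(x_1) = -2\sqrt{x_1}$ fails to be smooth at the origin, a value visited by $D^M$ on the full set $\{S = M\}$. This is harmless because Proposition \ref{proposition: integral representation} only requires continuity of the $\alpha^i, \beta^i, \gamma^i$ while placing the $C^2$ burden on $F$; the $C^{2,1}$ regularity demanded in the hypothesis is exactly what the It\^o expansion needs, with the $x_2$-direction downgraded to $C^1$ in accordance with the remark after Definition \ref{definition: solution set to system of PDEs} since $\alpha^2 \equiv 0$.
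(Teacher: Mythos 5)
Your proposal is correct and follows exactly the route the paper has in mind: it reads the coefficients $\alpha^1 = -2\sqrt{x_1}$, $\beta^1 = \beta^2 = 1$, $\gamma \equiv 0$, $\alpha^2 \equiv 0$ from Lemma~\ref{lemma: Azema-Yor}, verifies $X = (D^M, \langle S \rangle) \in \cX$, unfolds $\cL^{\alpha,\beta}_{1,1} F$ to recover \eqref{drawdown PDE} and $\cL^\alpha F$ to recover the hedge ratio, and then invokes Proposition~\ref{proposition: integral representation}(1) and Corollary~\ref{corollary: model-independent replication}(1). The paper itself disposes of the corollary with the one-line remark that it ``follows from Corollary~\ref{corollary: model-independent replication} and the same line of arguments as in Subsection~\ref{subsection: price and quadratic variation}''; your write-up supplies precisely those missing details, including the correct handling of the $C^{2,1}$ (rather than full $C^2$) regularity via the remark after Definition~\ref{definition: solution set to system of PDEs} and the observation that $\alpha^1$ need only be continuous, not smooth, at $x_1 = 0$.

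One small wording slip in the final paragraph: to impose a drawdown barrier on top of the maturity set $\{x_2 = q\}$ you should take the \emph{union} $B = \{x_2 = q\} \cup \{x_1 = b\}$, not the intersection, so that $\tau^X_B$ is the minimum of the two hitting times (this mirrors the choice $B = \{x_1 = l\} \cup \{x_1 = u\} \cup \{x_2 = q\}$ in Corollary~\ref{corollary: barrier claims}). The intersection would be a single point and would not encode a knock-out feature.
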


The proof follows from Corollary \ref{corollary: model-independent replication} and the same line of arguments as in Subsection \ref{subsection: price and quadratic variation}. A similar result holds for $X = (D^m, \langle S \rangle)$.

In fact, there is an equivalence result which allows capturing dependence on running extrema by functionals in $\cX$ as defined in Definition \ref{definition: functionals in cX} via the drawdown and drawup. This is detailed in Appendix \ref{appendix: equivalence}.

\subsection{Example: timer claims contingent on running extrema}

To illustrate the approaches outlined above, we consider the following example. Suppose we would like to replicate a timer claim with payoff $f(S_\tau, M_\tau)$, where $\tau := \inf\{ t > 0 : \langle S \rangle_t = q\}$ for some $q > 0$. Set $B = \{ x_3 = q\} \subset \R^3$ and $X = (S, M, \langle S \rangle)$. Then, Proposition \ref{proposition: claims on S, M and Q^w} implies that we can replicate the claim under all continuous semimartingale models where $\tau < \infty$ almost surely if, for $\cD \equiv \cR(X^B; \cM_s) \setminus B = \{0 \leq x_1 \leq x_2\} \cap \{0 \leq x_3 < q\} \subset \R^3$, we can solve for a solution $F \in C^{2,1,1}(\cD) \cap C(B)$ to \eqref{running maximum PDE} -- \eqref{mixed BC} with $w = 1$ and boundary condition $F(x_1, x_2, q) = f(x_1, x_2)$.  

Proposition \ref{proposition: stochastic solution for F(S,M,[S])} in the Appendix shows that, provided an implicit regularity condition on the payoff function $f$, the PDE has a stochastic solution given by
\begin{equation}
\label{stochastic solution for F}
F(x_1,x_2,x_3) := \E \left[ f \left( x_1 + W_{q - x_3}, x_2 \vee (x_1 + M^W_{q- x_3}) \right) \right], \quad  0 \leq x_1 \leq x_2, \ \ 0 \leq x_3 \leq q,
\end{equation}
for a Brownian motion $W$ (with running maximum $M^W$) under a probability measure $\Prob$. 

By Proposition \ref{proposition: equivalence between F^M and F^D}, a second approach would be to solve for a $C^{2,2,1}$ solution $F^D$ to the PDE \eqref{PDE for price, squared drawdown and realised variance} with boundary condition $F^D(x_1, x_2, q) = f(x_1, x_1 + \sqrt{x_2})$ and such that $\frac{\partial}{\partial y_2} F^D(y_1, y_2, y_3)$ is well-defined on $\R_+ \times \{0\} \times [0,q)$, and finally set $F := F^D \circ \ytilde$, where $\ytilde$ is defined as in the proposition.

Provided with a solution $F$ by any one of the methods above, the self-financing strategy starting with capital $F(S_0, S_0, 0)$ and holding $\frac{\partial}{\partial x_1} F(S_t, M_t, \langle S \rangle_t)$ units in $S$ for $t \in [0,\tau)$ has wealth process equal to $F(S_{\cdot \wedge \tau}, M_{\cdot \wedge \tau}, \langle S \rangle_{\cdot \wedge \tau})$ and replicates the claim with payoff $f(S_{\tau}, M_{\tau})$ and maturity $\tau$ under all continuous semimartingale models where $\tau < \infty$ almost surely.

\subsection{Time-dependent functionals}
\label{subsection: time dependent functionals}

To conclude this section, we turn to the question of replication of claims of fixed maturity by including time-dependent components in $X$. Whereas the previous examples only involved a single linear parabolic PDE, typically admitting well-defined and unique solutions for a large class of boundary conditions, the inclusion of time-dependent components in $X$ yields a system of two PDEs and restricts the class of payoff functions (corresponding to boundary conditions) for which claims may be replicated in a model-independent manner according to Corollary \ref{corollary: model-independent replication}.

In particular, this section explores the range of claims of fixed maturity $T > 0$ one may replicate via Corollary \ref{corollary: model-independent replication} by considering time as an explicit component of $X$. Setting $X^1 = t$ (corresponding to $\gamma^{1} = 1$, $\alpha^{1} = \beta^{1} = 0$), and $B = \{x_1 = T\} \subset \R^{d}$, we get $\tau^X_B = T$, which by Corollary \ref{corollary: model-independent replication} yields model-independent replication of claims of the form $f(X_T)$, where $f$ is the boundary condition on $B$ for functions in $\cS_X(\cR(X;\cM_s))$.

First, observe that unless there is $i \in \{2,\ldots,n\}$ such that $\gamma^{i}(x) \neq 0$, then $\cL^\gamma F(x) = 0$ implies that $\frac{\partial}{\partial x_1}F = 0$, which would defeat the purpose of including the component $X^1 = t$ in $X$. The most commonly encountered quantity in finance which has an absolutely continuous variation with respect to time is the arithmetic average of the stock price defining Asian payoffs, which modulo scaling is equal to $
V_\cdot \equiv \int_0^\cdot S_t \ud t$.

\begin{remark}
\label{remark: weighted Asian}
As with quadratic variation, we can weight the integrand by a function of $X$, but this would only change the PDE formulations without providing any more insight.
\end{remark}

Setting $X^2 = V$, we need to include $S$ or a bijective function thereof as a component in $X$ in order for $X \in\cX$ to hold (since $\gamma^{2}(X) = S$). As a basic example, consider $X_t = (t, V_t, S_t)$, $t \in [0,T]$. Set $\cD := \cR(X^B; \cM_s)$. It is easy to check that $\cD = [0,T] \times [0,\infty) \times \R_+$.

Corollary \ref{corollary: model-independent replication} implies that $F(X_\cdot)$ is the model-independent wealth process of a self-financing strategy in $S$ if $F \in \cS_X(\cD \setminus B)$, hence if $F$ satisfies the system of PDEs
\begin{align}
&\frac{\partial^2}{\partial x_3^2} F = 0 \label{PDE for F(t,V,S) due to [S]}, \\
&\frac{\partial}{\partial x_1}F + x_3 \frac{\partial}{\partial x_2}F = 0. \label{PDE for F(t,V,S) due to t}
\end{align}
on $\cD \setminus B = [0,T) \times [0,\infty) \times \R_+$.

\begin{proposition}
\label{proposition: solutions for F(t,V,S)}
$F \in C^2(\cD \setminus B) \cap C(\cD)$ is a solution to \eqref{PDE  for F(t,V,S) due to [S]} -- \eqref{PDE for F(t,V,S) due to t} iff
\begin{equation}
\label{F(t,V,S)}
F(x) = c_1 (x_1 x_3 - x_2)  + c_2 x_3 + c_3
\end{equation}
for constants $c_1, c_2, c_3 \in \R$. 
\end{proposition}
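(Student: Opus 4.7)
The plan is to verify the \emph{if} direction by direct substitution and then to prove the \emph{only if} direction by using the first PDE to reduce $F$ to a function linear in $x_3$ and the second PDE to fix the remaining freedom.

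For the \emph{if} direction, given $F(x)=c_1(x_1x_3-x_2)+c_2x_3+c_3$, I compute directly that $\partial^2_{x_3}F=0$, $\partial_{x_1}F=c_1x_3$ and $\partial_{x_2}F=-c_1$, so that $\partial_{x_1}F+x_3\partial_{x_2}F=c_1x_3-c_1x_3=0$. Hence both \eqref{PDE  for F(t,V,S) due to [S]} and \eqref{PDE for F(t,V,S) due to t} are satisfied on all of $\cD \setminus B$, and $F$ is clearly smooth and extends continuously to $\cD$.

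For the \emph{only if} direction, I work on $\cD \setminus B = [0,T) \times [0,\infty) \times \R_+$. First, \eqref{PDE  for F(t,V,S) due to [S]} says that $F$ is affine in $x_3$ on each line $\{(x_1,x_2)\} \times \R_+$, so there exist $C^2$ functions $A, B$ on $[0,T) \times [0,\infty)$ with
$$
F(x_1,x_2,x_3) = A(x_1,x_2)\,x_3 + B(x_1,x_2).
$$
Substituting this into \eqref{PDE for F(t,V,S) due to t} gives the polynomial identity in $x_3$
$$
\partial_{x_2}A(x_1,x_2)\,x_3^2 + \bigl(\partial_{x_1}A(x_1,x_2) + \partial_{x_2}B(x_1,x_2)\bigr)\,x_3 + \partial_{x_1}B(x_1,x_2) = 0, \quad x_3 \in \R_+.
$$
Since a polynomial in $x_3$ that vanishes on an interval must have all coefficients equal to zero, I obtain the three conditions $\partial_{x_2}A\equiv 0$, $\partial_{x_1}B\equiv 0$ and $\partial_{x_1}A + \partial_{x_2}B \equiv 0$ on $[0,T) \times [0,\infty)$. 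The first two say that $A$ depends only on $x_1$ and $B$ only on $x_2$. The third then reads $A'(x_1) = -B'(x_2)$, where the left side is a function only of $x_1$ and the right side only of $x_2$, forcing both to equal a common constant, which I call $c_1$. Integrating gives $A(x_1) = c_1 x_1 + c_2$ and $B(x_2) = -c_1 x_2 + c_3$ for constants $c_2, c_3 \in \R$, so
$$
F(x_1,x_2,x_3) = (c_1 x_1 + c_2)\,x_3 + (-c_1 x_2 + c_3) = c_1(x_1 x_3 - x_2) + c_2 x_3 + c_3
$$
on $\cD \setminus B$, which extends to $\cD$ by continuity.

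I do not expect a serious obstacle here: the first PDE has no $x_3$-dependent coefficients so it trivially reduces $F$ to an affine function of $x_3$, and the only mild subtlety is the polynomial-identity argument, which uses that $\R_+$ is infinite and therefore no nonzero polynomial in $x_3$ vanishes on it. All computations are on an open subset of the interior of $\cD$, so the $C^2$ regularity assumption is exactly what is needed to justify the partial derivatives above.
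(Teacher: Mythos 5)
Your proposal is correct and takes essentially the same route as the paper: decompose $F$ as affine in $x_3$ using \eqref{PDE  for F(t,V,S) due to [S]}, substitute into \eqref{PDE for F(t,V,S) due to t} to get a polynomial identity in $x_3$, set the coefficients to zero, and conclude by a separation-of-variables argument. The explicit verification of the "if" direction is a small addition the paper leaves implicit, but otherwise the two proofs coincide.
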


\begin{proof}

By \eqref{PDE for F(t,V,S) due to [S]}, there are $C^2$ functions $F_1$ and $F_2$ (temporary notation) such that
$$
F(x) = x_3 F_1(x_1, x_2) + F_2(x_1, x_2)
$$
Applying \eqref{PDE for F(t,V,S) due to t} yields
$$
x_3 \frac{\partial}{\partial x_1} F_1(x_1, x_3) + \frac{\partial}{\partial x_1} F_2(x_1, x_3) + x_3 \left( x_3 \frac{\partial}{\partial x_2} F_1(x_1, x_3) + \frac{\partial}{\partial x_2} F_2(x_1, x_2) \right) = 0,
$$
which may be rearranged to
$$
x_3^2 \frac{\partial}{\partial x_2} F_1(x_1, x_2) + x_3 \left( \frac{\partial}{\partial x_1} F_1(x_1, x_2) + \frac{\partial}{\partial x_2} F_2(x_1, x_2) \right) + \frac{\partial}{\partial x_1} F_2(x_1, x_2) = 0
$$

Since the above relation applies to all $x \in [0,T) \times \R_+^2$, the factors of the different powers of $x_3$ must be zero. Hence, the above is equivalent to 
\[
\begin{cases}
\frac{\partial}{\partial x_2} F_1(x_1, x_2) = 0 \\
\frac{\partial}{\partial x_1} F_1(x_1, x_2) + \frac{\partial}{\partial x_2} F_2(x_1, x_2) = 0 \\
\frac{\partial}{\partial x_1} F_2(x_1, x_2) = 0
\end{cases}
\]
for $(x_1, x_2) \in [0,T) \times \R_+$. This, in turn, is equivalent to
\[
\begin{cases}
F_1(x_1, x_2) = c_1 x_1 + c_2 \\
F_2(x_1, x_2) = -c_1 x_2 + c_3
\end{cases}
\]
for $(x_1, x_2) \in [0,T) \times \R_+$ and constants $c_1, c_2, c_3 \in \R$. This proves that $F \in \cS_X(\cD \setminus B)$ iff $F(x) = x_3 (c_1 x_1 + c_2) + (-c_1 x_2 + c_3)$, which rearranges to \eqref{F(t,V,S)}.

\end{proof}

Hence, for $X_t = (t, V_t, S_t)$, the only functions $F$ for which Corollary \ref{corollary: model-independent replication} implies that $F(X_\cdot)$ is a model-independent wealth process of a self-financing strategy are of the form \eqref{F(t,V,S)}. In particular, this only yields linear combinations of the well-known model-independent replication strategy for the linear Asian payoff -- an application of the integration by parts identity $T S_T = \int_0^T t \ud S_t + \int_0^T S_t \ud t$ -- and of static positions in $S$ and in bonds. 

The above result illustrates the limitations implied by the system of two PDEs corresponding to functionals which depend explicitly on time or, more generally, on integrals with respect to time. To obtain non-trivial model-independent replication strategies with time-dependent functionals, we need to augment $X$ with additional components. For instance, we can set $X_t = (t, V_t, S_t, \langle S \rangle_t)$ and $B = \{x_1 = T\} \subset \R^4$. Then, Corollary \ref{corollary: model-independent replication} implies that $F(X_\cdot)$ is the model-independent wealth process of a self-financing trading strategy if $F$ solves the system of PDEs
\begin{align}
&\frac{\partial}{\partial x_4} F + \frac{1}{2} \frac{\partial^2}{\partial x_3^2} F = 0 \label{PDE for F(t,V,S,[S]) due to [S]}, \\
&\frac{\partial}{\partial x_1} F + x_3 \frac{\partial}{\partial x_2} F = 0 \label{PDE for F(t,V,S,[S]) due to t}.
\end{align}
on $\cR(X^B;\cM_s) \setminus B = [0,T) \times \R_+^3$. Provided this system of PDEs admits a solution with boundary condition $F(T, x_2, x_3, x_4) = f(x_2,x_3,x_4)$, one can replicate $f(V_T, S_T, \langle S \rangle_T)$ under all continuous semimartingale models for $S$.

\begin{remark}[DDS]
As outlined by Carr and Lee \cite{Carr-Lee:semimartingales}, there is an interpretation of the hedging strategies proposed in Bick \cite{Bick} and in \cite{Carr-Lee:semimartingales} based on a Dambis/Dubins-Schwarz time change. This interpretation, outlined below, applies to model-independent replication strategies for claims which are contingent on a single asset (or on a single self-financing portfolio, also denoted an attainable process in the language of Fukasawa \cite{Fukasawa}) and which have no explicit dependence on time.

The DDS interpretation goes as follows. By well-known results (\cite{Delbaen-Schachermayer, Harrison:1981}) in the theory of no arbitrage, the asset price $S$ is a local martingale under a risk-neutral measure provided that the market admits "No Free Lunch With Vanishing Risk". Fix a risk-neutral measure $\Q$. Hence, when $A = S$ is continuous and one-dimensional, the Dambis/Dubins-Schwarz (DDS) theorem implies that $S = W_{\langle S \rangle}$, where $W$ is a Brownian motion under $\Q$. The quadratic variation based hedging strategies presented so far in this section use the observed quadratic variation $\langle S \rangle$ to reduce the problem of hedging a claim for a general continuous local martingale $S$ to hedging a claim for Brownian motion. However, this time change impacts the maturity of the claim. In particular, the maturities are specified on the time clock of the underlying Brownian motion, meaning that they correspond to hitting times of $\langle S \rangle$. More generally, suppose that one of the components of $X$, say $X^n$, is $Q^w_\cdot = \int_0^\cdot w(S_t, Q^w_t) \ud \langle S \rangle_t$, the boundary set $B$ is $\{x_n = q\}$ and the payoff function is $f(x_1,\ldots,x_{n-1})$. If the claim on $f(X^1_q,\ldots,X^{n-1}_q)$ with maturity $q$ admits a pricing function under the model 
$$
\ud S_t = \left( w(S_u, Q^w_u) \right)^{-1/2} \ud W_t, \quad t \in [0,q),
$$
where $W$ is a Brownian motion under the risk-neutral measure, then this pricing function is in the solution set $\cS_X(\cD)$ for a corresponding domain $\cD$ and yields a model-independent replication strategy for the claim on $f(X^1_{\tau^X_B},\ldots,X^{n-1}_{\tau^X_B})$ with maturity $\tau^X_B$ given by the hitting time of $Q^w$ to $q$. Setting $w(x_1,x_2) = \frac{1}{x_1^2}$ yields model-independent strategies based on the Black-Scholes formula, with the time change defined by $\langle \log S \rangle$.

On the other hand, when the claim depends explicitly on $S$ and on time, as with the examples in this section, then the above time change argument generally does not yield analogous model-independent strategies. It only does so under specific constraints on the payoff functions, which translate to constraints on the boundary conditions under which the system of PDEs corresponding to $X$ has a solution. As will be highlighted by examples in Section \ref{section: examples with convex claim}, similar restrictions on payoff functions apply when $A$ is multidimensional.
\end{remark}

\section{Characterisation in a general market setting}
\label{section: characterisation theory}

The results in the previous sections provided sufficient conditions for deriving model-independent hedging strategies. We now show that the conditions we have derived on regular enough functions $F$ such that $F(X)$ is a model-independent wealth process for a given $X \in \cX$ are not only sufficient but also necessary. To do so, we characterise the set of $C^2$ functions $F$ such that $F(X^B_\cdot[A])$ is a local martingale for $X \in \cX$ when $A$ is a generic continuous local martingale. This work has parallels with the results in Ob{\l}oj and Yor \cite{Obloj-Yor}, where the authors characterise the set of local martingales which are functions of a continuous local martingale (with unspecified volatility) and its supremum. One of the main differences is that we have a general specification of $X$ and $B$ (the results we present also apply without any stopping set $B$, with no significant changes to the proof). In particular, throughout this section, $X$ is a $\cC_T(\R^n)$-valued functional in $\cX$ with corresponding functions $\alpha^i$, $\beta^i$, $\gamma^i$, $i \in \{1,\ldots,n\}$ and $B$ is a closed subset of $\R^n$. Another important difference is that we consider multivariate local martingales $A$ which may be assumed to converge to specified payoffs (as would be the case in a market with traded options). 

We begin by noting that for $\cD_{\cM_{\ell}} := \cR(X^B;\cM_{\ell})$ and $F \in C(\cD_{\cM_{\ell}}) \cap C^2(\cD_{\cM_{\ell}} \setminus B)$, Proposition \ref{proposition: integral representation} implies that
$$
F(X^B_\cdot) = F(X_0) + \int_0^\cdot \I_{\{ t < \tau^X_B \}} \cL^\alpha F(X^B_t) \ud A_t
$$
under all models $M \in \cM_{\ell}$. Since $\cL^\alpha F(X^B_\cdot)$ is locally bounded (by continuity) and since $A$ is by definition a local martingale under all $M \in \cM_{\ell}$, it follows that $F(X^B_\cdot)$ is a local martingale under all $M \in \cM_{\ell}$.

In this section, we show that the characterisation result holds for families of models which satisfy a certain regularity property. In the following sections, we discuss some important cases of markets represented by $A$ and solve for families of models which satisfy the latter regularity conditions for these markets.

\begin{definition}
\label{definition: regular class of models}
We say that a family of local martingale models $\cM \subseteq \cM_{\ell}$ for $A$ is \textit{regular} if for any model $\{ A; (\Omega, \cF, \F_T, \Prob)\} \in \cM$, matrix $\Sigma \in \mathbb{S}^d_+$ and stopping time $\tau \in \cT(\F_T)$, there is a model $\{\widetilde{A}; (\widetilde{\Omega},\widetilde{\cF},\widetilde{\F}_T,\widetilde{\Prob})\} \in \cM$ and a stopping time $\widetilde{\tau} \in \cT(\widetilde{\F}_T)$ such that:
\begin{itemize}
\item $(\widetilde{\Omega},\widetilde{\cF},\widetilde{\F}_T,\widetilde{\Prob})$ contains $(\Omega, \cF, \F_T, \Prob)$.
\item $\widetilde{A} = A$ on $[0,\tau]$.
\item $\{ \tau < \widetilde{\tau} \leq T \} = \{ \tau < T \}$.
\item $\ud \langle \widetilde{A} \rangle_t = \Sigma \ud t$ on $t \in [\tau \wedge T, \widetilde{\tau} \wedge T)$.
\end{itemize}
\end{definition}

We will often refer to the main result below as the \textit{characterisation theorem}. 

\begin{theorem}
\label{theorem: characterisation of local martingales}
Let $\cM$ be a regular family of models for $A$ and denote $\cD_\cM \equiv \cR(X^B; \cM)$. Then, for any $F \in C(\cD_{\cM}) \cap C^2(\cD_{\cM} \setminus B)$, $F(X^B_\cdot)$ is a local martingale under all $M \in \cM$ if and only if $F \in \cS_X(\cD_{\cM} \setminus B)$ (defined in Definition \ref{definition: solution set to system of PDEs}).
\end{theorem}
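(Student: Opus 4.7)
The $(\Leftarrow)$ direction is exactly the calculation noted immediately before the statement: if $F \in \cS_X(\cD_\cM \setminus B)$, Proposition \ref{proposition: integral representation} yields $F(X^B_\cdot) = F(X_0) + \int_0^\cdot \I_{\{t < \tau^X_B\}} \cL^\alpha F(X_t) \ud A_t$ under every $M \in \cM$, which is a local martingale because $\cM \subseteq \cM_{\ell}$ and $\cL^\alpha F$ is continuous (hence locally bounded along $X^B$). So I focus on $(\Rightarrow)$.

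Suppose $F \in C(\cD_\cM) \cap C^2(\cD_\cM \setminus B)$ and that $F(X^B_\cdot)$ is a local martingale under every $M \in \cM$. Repeating the It\^o expansion from the proof of Proposition \ref{proposition: integral representation}, but without yet imposing the $\cS_X$ condition, gives
$$F(X^B_t) = F(X_0) + \int_0^{t \wedge \tau^X_B} \cL^\alpha F(X_u) \ud A_u + \int_0^{t \wedge \tau^X_B} \cL^\gamma F(X_u) \ud u + \sum_{i,j=1}^d \int_0^{t \wedge \tau^X_B} \cL^{\alpha,\beta}_{i,j} F(X_u) \ud \langle A^i, A^j \rangle_u.$$
The left side and the stochastic integral against $A$ are local martingales (the latter since $A$ is), so the sum of the last two terms is a continuous local martingale of finite variation starting at zero; it must therefore vanish identically, almost surely under every $M \in \cM$.

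Now fix $x^* \in \cD_\cM \setminus B$ and $\Sigma \in \mathbb{S}^d_+$. Pick $\epsilon_0 \in (0, d(x^*,B))$. For each $\epsilon \in (0, \epsilon_0)$, the definition of $\cR(X^B; \cM)$ supplies a model $M_\epsilon \in \cM$ with $\Prob(\tau_\epsilon < T) > 0$, where $\tau_\epsilon := \tau^X_{B^c_\epsilon(x^*)}$. Regularity of $\cM$ (Definition \ref{definition: regular class of models}) then produces an enlargement $\widetilde M_\epsilon \in \cM$ with $\widetilde A = A$ on $[0,\tau_\epsilon]$ and $\ud \langle \widetilde A \rangle_t = \Sigma \ud t$ on $[\tau_\epsilon, \widetilde\tau_\epsilon)$, non-degenerate on $\{\tau_\epsilon < T\}$. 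Since $d(\widetilde X_{\tau_\epsilon}, B) \geq \epsilon_0 - \epsilon > 0$, continuity of $\widetilde X$ keeps $\sigma_\epsilon := \widetilde\tau_\epsilon \wedge \tau^{\widetilde X}_B$ strictly greater than $\tau_\epsilon$ on this event. Applying the vanishing from the previous paragraph to $\widetilde M_\epsilon$ and restricting to $[\tau_\epsilon, \sigma_\epsilon)$ yields, for $t$ in this interval,
$$\int_{\tau_\epsilon}^{t} \Big[ \cL^\gamma F(\widetilde X_u) + \sum_{i,j=1}^d \cL^{\alpha,\beta}_{i,j} F(\widetilde X_u) \Sigma_{ij} \Big] \ud u = 0.$$
By continuity of the integrand (from $F \in C^2$ and continuity of $\widetilde X$) the integrand itself vanishes on $[\tau_\epsilon, \sigma_\epsilon)$; evaluating at $u = \tau_\epsilon$ and sending $\epsilon \downarrow 0$ so that $\widetilde X_{\tau_\epsilon} \to x^*$ gives
$\cL^\gamma F(x^*) + \sum_{i,j} \cL^{\alpha,\beta}_{i,j} F(x^*) \Sigma_{ij} = 0$
for every $\Sigma \in \mathbb{S}^d_+$. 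Subtracting this identity for two different $\Sigma$'s, and using that $\mathbb{S}^d_+ - \mathbb{S}^d_+$ equals the full space of symmetric $d \times d$ matrices, forces $\sum_{i,j} \cL^{\alpha,\beta}_{i,j} F(x^*) H_{ij} = 0$ for every symmetric $H$; combined with the $(i,j)$-symmetry of $\cL^{\alpha,\beta}_{i,j} F$ (automatic for the $\alpha$-part by equality of mixed partials, and WLOG for the $\beta$-part since only the symmetric component of $\beta^k$ contributes to $\ud \langle A^i, A^j \rangle$), this forces $\cL^{\alpha,\beta}_{i,j} F(x^*) = 0$ for all $i,j$ and hence $\cL^\gamma F(x^*) = 0$. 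As $x^*$ was arbitrary, $F \in \cS_X(\cD_\cM \setminus B)$.

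The central difficulty is the third paragraph: bridging the \emph{global} hypothesis (local martingality under every $M \in \cM$) with a \emph{pointwise} PDE condition on $F$. The regularity axiom is precisely what lets one pin the instantaneous quadratic variation of $A$ to any prescribed $\Sigma \in \mathbb{S}^d_+$ at any prescribed visited point, so that the vanishing finite-variation residual can be read off coefficient by coefficient; without such probing freedom one could only extract, at each point of the range of $X$, a single linear combination of the $\cL^{\alpha,\beta}_{i,j} F$ values, which would be too weak to conclude $F \in \cS_X$.
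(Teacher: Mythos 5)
Your proof is correct, and the forward (sufficiency) direction coincides with the paper's remark before the theorem. Your necessity argument, however, is genuinely different from — and arguably cleaner than — the paper's. The paper argues by contradiction: assuming some $\cL^{\alpha,\beta}_{k,k} F$ (resp.\ $\cL^{\alpha,\beta}_{k,l} F$, resp.\ $\cL^\gamma F$) is nonzero at a point, it hand-picks a single $\Sigma$ whose $(k,k)$ (resp.\ $(k,l)$) entries are large enough to dominate all the other (bounded) operator values on a small ball, then shows $\E(Y_{\tau_2}-Y_{\tau_1})>0$ for the induced model, contradicting the vanishing of the finite-variation residual $Y$. This forces a three-stage proof (diagonal terms, then off-diagonal using the diagonal result, then $\cL^\gamma$). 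You instead extract, at each $x^*\in\cD_\cM\setminus B$, the entire one-parameter family of pointwise linear constraints $\cL^\gamma F(x^*)+\sum_{ij}\cL^{\alpha,\beta}_{ij}F(x^*)\Sigma_{ij}=0$ over all $\Sigma\in\mathbb S^d_+$, directly by reading off the vanishing integrand along the regularized model and then letting $\epsilon\downarrow 0$; since $\mathbb S^d_+-\mathbb S^d_+$ spans all symmetric matrices and the inner product $\langle H,K\rangle=\sum_{ij}H_{ij}K_{ij}$ is nondegenerate there, a single linear-algebra step recovers all the PDEs at once. What the paper's approach buys is that it never needs the continuity/limit argument that identifies the integrand value exactly at $\tau_\epsilon$ and sends $\epsilon\to 0$; it works with one fixed ball and crude supremum bounds. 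What your approach buys is conceptual economy: no contradiction, no case split, no bespoke $\Sigma$-constructions, and the role of the regularity axiom (freedom to prescribe $\Sigma$) is laid completely bare. One small point worth stating explicitly (you gesture at it): that $\cL^{\alpha,\beta}_{ij}F$ is symmetric in $(i,j)$ after replacing $\beta^k$ by its symmetric part, which is harmless because $\langle A\rangle$ is symmetric — this symmetry is what makes the final duality argument on $\mathbb S^d$ conclusive.
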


We first make a few remarks regarding the theorem and then move on to its proof. For a regular class of models $\cM \subseteq \cM_{\ell}$ for $A$, it is clear that $\cM \subseteq \cM_\ell$ implies that $\cD_{\cM} \subseteq \cD_{\cM_\ell}$, where we recall that $\cD_{\cM}$ and $\cD_{\cM_\ell}$ are the ranges of $X^B$ over the classes of models $\cM$ and $\cM_\ell$ respectively. If we can find a regular class of models $\cM \subseteq \cM_\ell$ for $A$ such that $\cD_{\cM} = \cD_{\cM_\ell}$, we can conclude that  $F \in \cS_X(\cD_{\cM_\ell} \setminus B)$. This is summarised by the following corollary.

\begin{corollary}
If there is a regular family $\cM$ of models for $A$ such that $\cR(X^B; \cM) = \cR(X^B; \cM_{\ell}) \equiv \cD_{\cM_{\ell}}$, then for any $F \in C(\cD_{\cM_{\ell}}) \cap C^2(\cD_{\cM_{\ell}} \setminus B)$, $F(X^B_\cdot)$ is a local martingale under all local martingale models $M \in \cM_{\ell}$ iff $F \in \cS_X(\cD_{\cM_{\ell}} \setminus B)$.
\end{corollary}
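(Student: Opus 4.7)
The plan is to deduce the corollary directly from Theorem \ref{theorem: characterisation of local martingales} together with the It\^o argument inside Proposition \ref{proposition: integral representation}, using the range equality $\cR(X^B;\cM) = \cR(X^B;\cM_\ell) = \cD_{\cM_\ell}$ as the bridge between the regular sub-family $\cM$ and the full local martingale family $\cM_\ell$.

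For the forward (``only if'') direction, the argument is essentially a one-line consequence of the characterisation theorem. If $F(X^B_\cdot)$ is a local martingale under every $M \in \cM_\ell$, then because $\cM \subseteq \cM_\ell$ it is in particular a local martingale under every $M \in \cM$. Applying Theorem \ref{theorem: characterisation of local martingales} to the regular family $\cM$ gives $F \in \cS_X(\cR(X^B;\cM) \setminus B)$, which is exactly $\cS_X(\cD_{\cM_\ell} \setminus B)$ by the assumed range equality.

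For the reverse (``if'') direction, I would reprise the It\^o calculation from the proof of Proposition \ref{proposition: integral representation}(1), noting that it transfers to the sub-family $\cM_\ell$ with no substantive change. Fix any $M \in \cM_\ell$. By Proposition \ref{proposition: properties of range of X}, $X_t \in \cD_{\cM_\ell}$ almost surely for all $t \in [0,T)$, so $X_t \in \cD_{\cM_\ell} \setminus B$ on $[0, \tau^X_B)$. Applying It\^o's formula to $F(X_t)$ and using $\cL^\gamma F = \cL^{\alpha,\beta}_{i,j}F = 0$ on $\cD_{\cM_\ell} \setminus B$ (exactly where $X_t$ lives under $M$) collapses the drift and quadratic-covariation terms, leaving
\[
F(X^B_t) \;=\; F(X_0) + \int_0^t \I_{\{u < \tau^X_B\}} \cL^\alpha F(X_u) \ud A_u, \qquad t \in [0,T].
\]
Continuity of $\cL^\alpha F$ on $\cD_{\cM_\ell}$ together with path-continuity of $X^B$ makes the integrand locally bounded, and local martingality of $A$ under $M$ then forces $F(X^B_\cdot)$ to be a local martingale.

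The only subtlety I anticipate is a mild domain mismatch: Proposition \ref{proposition: integral representation} is formally stated with $\cD = \cR(X^B;\cM_s) \setminus B$, which may be strictly larger than $\cD_{\cM_\ell} \setminus B$, so one cannot invoke it as a black box. I would sidestep this by observing that the It\^o step in its proof only invokes the PDE at points actually visited by $X_t$ under the model under consideration; once attention is restricted to $\cM_\ell$, the PDE on $\cD_{\cM_\ell} \setminus B$ alone is enough. The range equality in the hypothesis is then precisely what aligns the two directions so that both refer to the same solution space $\cS_X(\cD_{\cM_\ell} \setminus B)$.
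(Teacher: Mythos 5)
Your proof is correct and follows the same route the paper takes: the ``only if'' direction by applying Theorem \ref{theorem: characterisation of local martingales} to the regular family $\cM$ and then passing to $\cD_{\cM_\ell}$ via the assumed range equality, and the ``if'' direction by the It\^o argument from Proposition \ref{proposition: integral representation} restricted to $\cM_\ell$ together with Proposition \ref{proposition: properties of range of X}. You also correctly flag and resolve the domain mismatch (the Proposition's $\cS_X(\cR(X^B;\cM_s)\setminus B)$ versus the weaker $\cS_X(\cD_{\cM_\ell}\setminus B)$ needed here), a point the paper glosses over when it invokes the Proposition directly.
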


\begin{lemma}
\label{lemma: regular class of models}
If $S$ is an underlying asset, then $\cM_{\ell}$ is a regular class of models for a market with a single risky asset ($A \equiv S$).
\end{lemma}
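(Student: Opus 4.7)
The plan is to fix an arbitrary model $M = \{S; (\Omega, \cF, \F_T, \Prob)\} \in \cM_\ell$, a stopping time $\tau \in \cT(\F_T)$ and a variance parameter $\sigma^2 \equiv \Sigma \in \mathbb{S}^1_+ = (0,\infty)$, and to construct an extension of $M$ in which the path of $S$ is followed exactly on $[0, \tau]$ and is then continued, independently, as a Brownian motion scaled by $\sigma$. Since $A \equiv S$ is the only traded asset in the market and no co-maturing options constrain its terminal value, the only thing needed for the extended process to lie back in $\cM_\ell$ is that it be a continuous local martingale, started at $S_0$, on its filtered probability space.

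Concretely, I would enlarge the base space by taking a product with the canonical Wiener space,
$$
(\widetilde{\Omega}, \widetilde{\cF}, \widetilde{\F}_T, \widetilde{\Prob}) := (\Omega, \cF, \F_T, \Prob) \otimes (\cC(\R), \cF^W_T, \F^W_T, \Prob^W),
$$
which by construction contains $(\Omega, \cF, \F_T, \Prob)$. Lifting $S$ trivially by $S(\omega_1, \omega_2) := S(\omega_1)$, the process $S$ remains a continuous local martingale on $\widetilde{\F}_T$, and the coordinate process $W$ is an $\widetilde{\F}_T$-Brownian motion independent of $\cF$. I then set
$$
\widetilde{S}_t := S^\tau_t + \sigma \bigl( W_t - W_{t \wedge \tau} \bigr), \quad t \in [0,T],
$$
and take the deterministic $\widetilde{\tau} := T$, which is trivially an $\widetilde{\F}_T$-stopping time.

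The four bullet points of Definition \ref{definition: regular class of models} are then checked directly. On $[0, \tau]$ the second summand vanishes and $S^\tau = S$, so $\widetilde{S} = S$ there; with $\widetilde{\tau} = T$, the event identity $\{\tau < \widetilde{\tau} \leq T\} = \{\tau < T\}$ is immediate because $\widetilde{\tau} \leq T$ holds identically and $\tau < \widetilde{\tau} = T$ is the same event as $\{\tau < T\}$. Since $S^\tau$ is a continuous local martingale on $\widetilde{\F}_T$ (stopping preserves the property) and $\sigma (W - W^\tau)$ is as well, $\widetilde{S}$ is a continuous local martingale on $\widetilde{\F}_T$ started at $S_0$, so $\widetilde{M} := \{\widetilde{S}; (\widetilde{\Omega}, \widetilde{\cF}, \widetilde{\F}_T, \widetilde{\Prob})\}$ lies in $\cM_\ell$. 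Moreover, the independence of the two coordinates gives $\langle S, W \rangle \equiv 0$, so the cross bracket $\langle S^\tau, W - W^\tau \rangle$ vanishes and
$$
\langle \widetilde{S} \rangle_t = \langle S \rangle_{t \wedge \tau} + \sigma^2 (t - \tau)^+,
$$
which yields $\ud \langle \widetilde{S} \rangle_t = \sigma^2 \ud t$ on $[\tau \wedge T, T) = [\tau \wedge T, \widetilde{\tau} \wedge T)$, as required.

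The only step that needs real care, and which I expect to be the main obstacle, is justifying the preservation of local martingality under the independent filtration enlargement, together with the claim that $W$ is a Brownian motion on the product filtration $\widetilde{\F}_T = \F_T \otimes \F^W_T$. Both are consequences of the product structure of $\widetilde{\Prob}$: conditional expectations with respect to $\widetilde{\F}_t$ factorise across the two coordinates, so the martingale property of $S$ on $\F_T$ and of $W$ on $\F^W_T$ is preserved on the enlarged filtration, after which a common localising sequence (the one that localises $S$, viewed as a family of $\widetilde{\F}_T$-stopping times) transfers the local martingale property to $\widetilde{S}$.
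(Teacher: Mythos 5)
Your proof is correct and is precisely the construction the paper has in mind: the paper states only that the lemma "follows by directly checking the conditions in Definition \ref{definition: regular class of models}", and the harder analogues (Theorems \ref{theorem: regular class of models, convex claim} and \ref{theorem: regular class of models, call options}) use exactly this blueprint of tensoring with a Wiener space and setting $\widetilde{A}_t = A_{t \wedge \tau} + m(W_t - W_{t \wedge \tau})$, with your choice $\widetilde{\tau} = T$ being the natural simplification available when there are no traded claims constraining $\widetilde{S}$ to stay in a particular region.
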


The previous lemma, which follows by directly checking the conditions in Definition \ref{definition: regular class of models}, implies that the replication strategies in Section \ref{section: examples} characterise the set of model-independent wealth processes of the form $F(X^B_\cdot)$ (for the various choices of $X \in \cX$ considered) for a market with $A \equiv S$.

\begin{proof}[Proof of Theorem \ref{theorem: characterisation of local martingales}]

Throughout the proof, we will denote $||f||_K$ to be the supremum norm of a function $f$ on a set $K$. The statement that $F(X^B_\cdot)$ is local martingale under all $M \in \cM$ follows from Proposition \ref{proposition: integral representation}. We hence turn to proving the converse statement.

Suppose that $F \in C(\cD_\cM) \cap C^2(\cD_\cM \setminus B)$ and that $F(X^B_\cdot)$ is a local martingale under all models in $\cM$. Consider any such model $\{ A; (\Omega, \cF, \F_T, \Prob) \} \in \cM$ and define
$$
Y_t[A] := F(X^B_t[A]) - \int_0^t \I_{\{ u < \tau^X_B \}} \cL^\alpha F(X^B_u) \ud A_u, \quad t \in [0,T].
$$
Note that $Y_t[A] = \int_0^t \I_{\{ u < \tau^X_B \}} \cL^\gamma F(X^B_u) \ud u + \sum_{i,j=1}^d \int_0^t \I_{\{ u < \tau^X_B \}} \cL^{\alpha, \beta}_{i,j} F(X^B_u) \ud \langle A^i, A^j \rangle_u$ by It\^o's formula. Hence, $Y_\cdot \equiv Y_\cdot[A]$ is a continuous local martingale of finite variation. By the Burkholder-Davis-Gundy inequality, it follows that $
\Prob( Y_t[A] = 0, \ t \in [0,T] ) = 1$, which implies that
\begin{equation}
\label{proof of characterisation theorem: eqn 1}
\E( Y_{\tau_2} - Y_{\tau_1} ) = 0
\end{equation}
for any $\tau_1, \tau_2 \in \cT(\F_T)$ such that $\tau_1 \leq \tau_2$.

We will now show that if $F \not\in \cS_X(\cD_\cM \setminus B)$, then there exists a model $M \in \cM$ and stopping times $\tau_1, \tau_2 \in \cT(\F_T)$ defined on it such that $\tau_1 \leq \tau_2$ and such that \eqref{proof of characterisation theorem: eqn 1} does not hold, and thereby obtain a contradiction to $F \not\in \cS_X(\cD_\cM \setminus B)$. The notation introduced in the following sections of the proof is temporary (it  only applies within each section).

\item (i) We first prove that $\cL^{\alpha,\beta}_{i,i} F = 0$ on $\cD_{\cM} \setminus B$ for all $i \in \{1,\ldots,d\}$. Suppose that for some $k \in \{1,\ldots,d\}$ and $x \in \cD_{\cM} \setminus B$,
$$
\cL^{\alpha,\beta}_{k,k} F(x) = z \neq 0.
$$
Without loss of generality, assume that $z > 0$ and $x \not\in \partial \cD_\cM$ (the proof when $z < 0$ is analogous, and if $x \in \partial \cD_\cM$, choose a point $x^\prime \in \cD_\cM \setminus \partial \cD_\cM$ close to $x$ such that $\cL^{\alpha,\beta}_{k,k} F(x) \neq 0$).

By continuity of $\cL^{\alpha,\beta}_{k,k} F$, $\exists \, \epsilon \in (0,d(x,B))$ such that
$$
\inf_{y \in B^c_x(\epsilon) \cap \cD_{\cM}} \cL^{\alpha,\beta}_{k,k} F(y) > z/2.
$$
Denote $K_1 := B^c_x(\epsilon/2) \cap \cD_{\cM}$ and $K_2 := B^c_x(\epsilon) \cap \cD_{\cM}$, and note that $\partial K_1 \cap \partial K_2 = \emptyset$ since $\epsilon < d(x,B)$. Also, define the bound $b$ and the matrix $\Sigma \in \mathbb{S}^d_+$ by
$$
b := || \cL^\gamma F ||_{K_2} \vee \max_{i,j \in \{1,\ldots,d\}} || \cL^{\alpha,\beta}_{i,j} F ||_{K_2}
$$
and
$$
\Sigma_{ij} :=
\begin{cases}
\frac{2b(d+1)}{z} \qquad& i = j = k, \\
1 & i = j \neq k, \\
0 & \textrm{otherwise}.
\end{cases}
$$
Since $x \in \cD$, $\exists \{ A ; (\Omega, \cF, \F_T, \Prob) \} \in \cM$ such that $\Prob(\tau^{X^B}_{K_1} < T) > 0$ (where $\tau^{X^B}_{K_1}$ stands for the first hitting time of $X^B_\cdot$ to the set $K_1$). Define the stopping time
$$
\tau_1 := \tau^{X^B}_{K_1} \wedge T \in \cT(\F_T).
$$
By $\Sigma \in \mathbb{S}^d_+$ and the regularity property of $\cM$, there exist a model $\{ \widetilde{A}; (\widetilde{\Omega}, \widetilde{\cF}, \widetilde{\F}_T, \widetilde{\Prob}) \} \in \cM$ and stopping time $\widetilde{\tau} \in \cT(\widetilde{\F}_T)$ such that the properties stated in Definition \ref{definition: regular class of models} hold. Define the stopping time
$$
\tau_2 := \widetilde{\tau} \wedge \inf\{ t \geq \tau_1 : X^B_t[\widetilde{A}] \in \partial K_2 \} \in \cT(\widetilde{\F}_T).
$$
Then,
\begin{align*}
Y_{\tau_2}[\widetilde{A}] - Y_{\tau_1}[\widetilde{A}] &= \int_{\tau_1}^{\tau_2} \cL^\gamma F(X^B_t[\widetilde{A}]) \ud t + \sum_{i,j=1}^d \int_{\tau_1}^{\tau_2} \cL^{\alpha,\beta}_{i,j} F(X^B_t[\widetilde{A}]) \ud \langle \widetilde{A}^i,\widetilde{A}^j \rangle_t \\
&= \int_{\tau_1}^{\tau_2} \left( \cL^\gamma F(X^B_t[\widetilde{A}]) + \sum_{i,j=1}^d \Sigma_{ij} \cL^{\alpha,\beta}_{i,j} F(X^B_t[\widetilde{A}]) \right) \ud t \\
&= \int_{\tau_1}^{\tau_2} \left( \frac{2b(d+1)}{z} \cL^{\alpha,\beta}_{k,k} F(X^B_t[\widetilde{A}]) + \cL^\gamma F(X^B_t[\widetilde{A}]) + \sum_{i \neq k} \cL^{\alpha,\beta}_{i,i} F(X^B_t[\widetilde{A}]) \right) \ud t \\
& \geq \int_{\tau_1}^{\tau_2} \left( \frac{2b(d+1)}{z} \inf_{y \in K_2} \cL^{\alpha,\beta}_{k,k} F(y) - || \cL^\gamma F ||_{K_2} - \sum_{i \neq k} || \cL^{\alpha,\beta}_{i,i} F||_{K_2} \right) \ud t \\
& \geq \int_{\tau_1}^{\tau_2} b \ud t
\end{align*}
Hence,
\begin{equation}
\label{proof of characterisation theorem: eqn 2}
Y_{\tau_2}[\widetilde{A}] - Y_{\tau_1}[\widetilde{A}] \geq b(\tau_2 - \tau_1)
\end{equation}
By construction of $\widetilde{\tau}$, it follows that $\{ \tau_1 < \widetilde{\tau} \} = \{ \tau_1 < T \}$, whereas $\partial K_1 \cap \partial K_2 = \emptyset$ implies that $\{ \tau_1 < \inf\{ t \geq \tau_1 : X^B_t[\widetilde{A}] \in \partial K_2 \} \} = \{\tau_1 < T \}$. Hence,
$$
\{ \tau_1 < \tau_2 \} = \{\tau_1 < \widetilde{\tau} \} \cap \{\tau_1 < \inf\{ t \geq \tau_1 : X^B_t[\widetilde{A}] \in \partial K_2 \} \} = \{ \tau_1 < T \},
$$
which implies that
$$
\widetilde{\Prob}(\tau_1 < \tau_2) = \widetilde{\Prob}(\tau_1 < T) = \Prob(\tau_1 < T) > 0.
$$
By \eqref{proof of characterisation theorem: eqn 2}, this gives $\E^{\widetilde{\Prob}}( Y_{\tau_2}[\widetilde{A}] - Y_{\tau_1}[\widetilde{A}] ) > 0$ (where $\E^{\widetilde{\Prob}}$ denotes the expectation under $\widetilde{\Prob}$), which contradicts \eqref{proof of characterisation theorem: eqn 1}. 

Hence, this proves that $\cL^{\alpha,\beta}_{i,i} F = 0$ on $\cD_{\cM} \setminus B$ for $i \in \{1,\ldots,d\}$.

\item (ii) We will now prove that $\cL^{\alpha,\beta}_{i,j} = 0$ on $\cD_{\cM} \setminus B$ for $i \neq j$. As in part (i), we argue by contradiction and provide a construction of a model and stopping times $\tau_1$ and $\tau_2$ such that \eqref{proof of characterisation theorem: eqn 1} does not hold. The main difference in this part of the proof is that the result for $i = j$ (proved in part (i)) is used in one of the steps.

Suppose that for some $k \neq l$ and $x \in \cD_{\cM} \setminus B$,
$$
\cL^{\alpha,\beta}_{k,l} F(x) = z \neq 0.
$$
Once again, assume without loss of generality that $z > 0$ and $x \not\in \partial \cD_\cM$. Let $\epsilon \in (0, d(x,B))$ be such that $\inf_{y \in B^c_x(\epsilon) \cap \cD_{\cM}} \cL^{\alpha,\beta}_{k,l} F(y) > z/2$, denote $K_1 := B^c_x(\epsilon/2) \cap \cD_{\cM}$, $K_2 := B^c_x(\epsilon) \cap \cD_{\cM}$ and define $b := || \cL^\gamma F ||_{K_2} \vee \max_{i \in \{1,\ldots,d\}} || \cL^{\alpha,\beta}_{i,j} ||_{K_2}$. Define the matrix $\Sigma \in \mathbb{S}^d_+$ by
$$
\Sigma_{ij} = 
\begin{cases}
\frac{5b}{z}, \qquad& i = j \in \{k,l\}, \\
\frac{4b}{z}, & (i,j) \in \{(k,l), (l,k)\}, \\
1, & i = j \not\in \{k,l\}, \\
0, & \textrm{otherwise}.
\end{cases}
$$
Since $x \in \cD_{\cM}$, $\exists \{ A; (\Omega, \cF, \F_T, \Prob) \} \in \cM$ such that $\Prob(\tau^{X^B}_{K_1}) > 0$. Define $\tau_1 := \tau^{X^B}_{K_1} \wedge T$, and note that, by $\Sigma \in \mathbb{S}^d_+$ and the regularity property of $\cM$, there exists a model $\{ \widetilde{A} ; (\widetilde{\Omega}, \widetilde{\cF}, \widetilde{\F}_T, \widetilde{P}) \} \in \cM$ and stopping time $\widetilde{\tau} \in \cT(\widetilde{\F}_T)$ such that properties stated in Definition \ref{definition: regular class of models} holds. Define the $\widetilde{\F}_T$-stopping time $\tau_2 := \widetilde{\tau} \wedge \inf\{ t \geq \tau_1 : X^B_t[\widetilde{A}] \in \partial K_2 \}$. Then,
\begin{align*}
Y_{\tau_2}[\widetilde{A}] - Y_{\tau_1}[\widetilde{A}] &= \int_{\tau_1}^{\tau_2} \cL^\gamma F(X^B_t[\widetilde{A}]) + \sum_{i,j=1}^d  \int_{\tau_1}^{\tau_2} \cL^{\alpha,\beta}_{i,j} F(X^B_t[\widetilde{A}]) \ud \langle \widetilde{A}^i, \widetilde{A}^j \rangle_t \\
&= \int_{\tau_1}^{\tau_2} \left( \cL^\gamma F(X^B_t[\widetilde{A}]) + \sum_{i,j=1}^d \Sigma_{ij} \cL^{\alpha,\beta}_{i,j} F(X^B_t[\widetilde{A}]) \right) \ud t \\
&= \int_{\tau_1}^{\tau_2} \left( 2 \left( \frac{4b}{z} \right) \cL^{\alpha,\beta}_{k,l} F(X^B_t[\widetilde{A}]) + \cL^\gamma F(X^B_t[\widetilde{A}]) \right) \ud t \\
&\geq \int_{\tau_1}^{\tau_2} \left( \frac{8b}{z} \inf_{y \in K_2} \cL^{\alpha,\beta}_{i,j} F(y) - || \cL^\gamma F||_{K_2} \right) \ud t \\
& \geq \int_{\tau_1}^{\tau_2} 3b \ud t.
\end{align*}
where we used $\cL^{\alpha,\beta}_{k,l} F = \cL^{\alpha,\beta}_{l,k} F$ and $\cL^{\alpha,\beta}_{i,i} F = 0$ on $\cD_\cM \setminus B$ for $i \in \{1,\ldots,d\}$ in obtaining the third line. Hence,
\begin{equation}
\label{proof of characterisation theorem: eqn 3}
Y_{\tau_2}[\widetilde{A}] - Y_{\tau_2}[\widetilde{A}] \geq 3b \, (\tau_2 - \tau_1).
\end{equation}
By analogous arguments as in part (i) of the proof, $\widetilde{\Prob}(\tau_1 < \tau_2) > 0$. Hence, with \eqref{proof of characterisation theorem: eqn 3}, this again implies that $\E^{\widetilde{\Prob}}(Y_{\tau_2}[\widetilde{A}] - Y_{\tau_1}[\widetilde{A}]) > 0$, which contradicts \eqref{proof of characterisation theorem: eqn 1}.

This proves that $\cL^{\alpha,\beta}_{i,j} F = 0$ on $\cD_{\cM} \setminus B$ for all $i \neq j$.

\item (iii) Parts (i) and (ii) proved that $\cL^{\alpha,\beta}_{i,j} F = 0$ on $\cD_{\cM} \setminus B$ for all $i,j \in \{1,\ldots,d\}$. Hence, \eqref{proof of characterisation theorem: eqn 1} becomes
$$
\E\left( \int_{\tau_1}^{\tau_2} \cL^\gamma F(X^B_t[\widetilde{A}]) \ud t \right) = 0.
$$
By a simpler version of the arguments in parts (i) and (ii), this implies that $\cL^\gamma F = 0$ on $\cD_{\cM} \setminus B$. This concludes the proof.
\end{proof}

\section{Underlying asset and convex claim}
\label{section: convex claim}

To provide a more concrete example to the results of the previous section, consider a market with an underlying asset $S$ and a traded European path-independent claim $C$ with maturity $T$ and payoff $g(S_T)$. Simple examples for $C$ would be a call or a put. We will assume that $g$ is a non-linear convex function such that $\int_\R g(s e^{y}) e^{-y^2/2} \ud y < \infty$ for all $s > 0$. The latter condition implies that the Black-Scholes price under $r=0$ and $\sigma=1$ is finite for all initial prices $S_0 = s$. 
Note that when we say that a convex function $g$ is non-linear on an interval $(s_1,s_2) \subset \R_+$, we mean that $g^\prime(s_1) < g^\prime(s_2)$, where $g^\prime$ denotes the left derivative of $g$. In the framework of our previous notation, we have $A = (S,C)$.

\begin{definition}
We define $V := C - g(S)$ to be the \textit{time value} of the claim $C$.
\end{definition}

\begin{notation}
\item (1) We say that $I$ is an $\F_T$-adapted open interval in $\R_+$ if $I = (a,b)$ for $a, b: \Omega \times [0,T] \to \R_+$ such that $a < b$ $\Prob$-a.e. and $a$, $b$ are $\F_T$-adapted. 

\item (2) For a probability space $(\Omega, \cF, \Prob)$, a $\sigma$-algebra $\cG \subseteq \cF$ and subsets $\xi_1, \xi_2 \in \cF$ of $\Omega$, $\Prob(\xi_1 \, | \, \cG) > 0$ on $\xi_2$ means that for any $\xi _3 \in \cG$ such that $\Prob(\xi_2 \cap \xi_3) > 0$, we have that $\Prob( \xi_1 \cap \xi_2 \cap \xi_3) > 0$. 

\item (3) A process $Z$ is a local martingale on $(\Omega, \cF, \F_T, \Prob)$ if there is a sequence of $\F_T$-stopping times $\tau_k \uparrow T$ such that $Z_{\cdot \wedge \tau_k}$ is a martingale with respect to $\F_T$.
\end{notation}

We now define a so-called \textit{full support} property of a stochastic process. This will be a central property in this section as well as in Section \ref{section: traded calls}. 

\begin{definition}
Let $S$ be an $\R_+$-valued stochastic process defined on a filtered probability space $(\Omega, \cF, \F_T, \Prob)$. We then say that $S$ has \textit{full support} or $S$ has the (FS) property if for any $\F_T$-adapted open interval $I \subset \R_+$ and for any $t \in [0,T)$, $\Prob(S_T \in I_t \, | \, \cF_t) > 0$. Similarly, for a subset $\xi \in \cF$ of $\Omega$, we say that $S$ has full support on $\xi$ if for any $\F_T$-adapted open interval $I \subset \R_+$ and for any $t \in [0,T)$, $\Prob(S_T \in I_t \, | \, \cF_t) > 0$ on $\xi$.
\end{definition}

The requirement that $\Prob(S_T \in I_t \, | \, \cF_t) > 0$ is an almost-sure statement on conditional probability distributions. It is equivalent to the statement $\Prob( \Prob(S_T \in I_t \, | \, \cF_t) = 0) = 0$. In particular, any process which is absorbed at a level which it can attain prior to time $T$ with non-zero probability does not have the full support property.

We can now define the family of models which we will prove to be regular for $A = (S,C)$.

\begin{definition}
\label{definition: regular class of models, convex claim}
Define $\cM \equiv \cM_g$ to be the family of models $\{ A ; (\Omega, \cF, \F_T, \Prob) \} \in \cM_\ell$ such that $C - \int_0^\cdot g^\prime(S_t) \ud S_t$ is a supermartingale and such that $S$ has full support.
\end{definition}

Note that $\cM \subseteq \cM_\ell$, hence $S$ and $C$ are local martingales (on the time interval $[0,T])$ such that $C_T = g(S_T)$ $\Prob$-a.e.. The class $\cM$ is general enough to work with models with bubbles in the price of the traded claim, notably such that $C_t > \E(g(S_T) | \cF_t)$. In particular, the class of models $\cM$ allows for strict local martingales for the traded options even if the underlying is a true martingale. This contrasts somewhat with the context considered in Cox and Hobson \cite{Cox-Hobson}, where the emphasis is on the asset price being a strict local martingale under the risk-neutral measure. The main result of this section is the following.

\begin{theorem}
\label{theorem: regular class of models, convex claim}
$\cM$ defined in Definition \ref{definition: regular class of models, convex claim} is a regular class of models for $A = (S,C)$. Furthermore, $\cM(A_0) \neq \emptyset$ iff $C_0 > g(S_0)$.
\end{theorem}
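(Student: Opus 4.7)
I would attack the equivalence $\cM(A_0) \neq \emptyset \iff C_0 > g(S_0)$ first, since the argument reveals the structural identity that will drive the regularity proof. For the forward direction, take any $M \in \cM(A_0)$, set $V_t := C_t - g(S_t)$, and apply It\^o--Tanaka to obtain
\begin{equation*}
C_t - \int_0^t g'(S_u)\,\ud S_u \;=\; g(S_0) + V_t + \tfrac{1}{2}\int_\R L^a_t[S]\, g''(\ud a),
\end{equation*}
where $g''$ is the nonnegative distributional second derivative of $g$. Combining $V_T = 0$ with the supermartingale hypothesis on $C - \int g'(S)\,\ud S$, taking expectations yields $V_0 \geq \tfrac{1}{2}\E[\int_\R L^a_T[S]\, g''(\ud a)]$. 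Since $g$ is convex and non-linear, $g''$ charges some open interval $I \subset \R_+$; full support of $S$ forces $\Prob(L^a_T[S] > 0 \text{ for some } a \in I) > 0$, so $V_0 > 0$. For the converse, given $C_0 > g(S_0)$, I would construct a geometric Brownian motion model by taking $\ud S_t = \sigma S_t\, \ud W_t$ with volatility $\sigma$ chosen so that the Black--Scholes price at time $0$ equals $C_0$; such a $\sigma$ exists by continuity and strict monotonicity of the BS price in $\sigma$, whose range is $(g(S_0), L(S_0))$ with $L(S_0) := \lim_{\sigma \to \infty} F^{BS}(0,S_0;\sigma,T)$, and if $C_0$ exceeds $L(S_0)$ one instead uses a strict local martingale construction in the spirit of Cox--Hobson. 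Setting $C_t := F^{BS}(t,S_t;\sigma,T)$, the identity above shows $C - \int g'(S)\,\ud S$ is a local martingale bounded below by $g(S_0)$, hence a supermartingale, and full support of $S$ is immediate.

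For regularity, fix $M = \{A;(\Omega,\cF,\F_T,\Prob)\} \in \cM$, $\Sigma \in \mathbb{S}^d_+$ and $\tau \in \cT(\F_T)$. The plan is a two-phase continuation of $A$ after $\tau$ on an enlarged space $(\widetilde\Omega,\widetilde\cF,\widetilde\F_T,\widetilde\Prob)$ carrying an auxiliary two-dimensional Brownian motion $W'$ independent of $\cF$. In \emph{phase one}, on $[\tau,\widetilde\tau]$, set $\ud\widetilde A_t = \Sigma^{1/2}\,\ud W'_t$ and define $\widetilde\tau$ as the exit time from an $\cF_\tau$-measurable open box $U \subset \R_+ \times \R$ whose closure is contained in the feasible region $\{(s,c) : g(s) < c < L(s),\ s \in [\ell,u]\}$ for some $0 < \ell < S_\tau < u$, capped at $T$; by non-degeneracy of $\Sigma$ and continuity, $\tau < \widetilde\tau \leq T$ almost surely on $\{\tau < T\}$, and the prescribed covariation $\ud\langle \widetilde A\rangle_t = \Sigma\,\ud t$ holds on $[\tau,\widetilde\tau)$. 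In \emph{phase two}, on $[\widetilde\tau, T]$, evolve $\widetilde S$ as a geometric Brownian motion driven by $W'$ with $\cF_{\widetilde\tau}$-measurable volatility $\widehat\sigma$ chosen pathwise so that $F^{BS}(\widetilde\tau,\widetilde S_{\widetilde\tau};\widehat\sigma,T) = \widetilde C_{\widetilde\tau}$, and set $\widetilde C_t := F^{BS}(t,\widetilde S_t;\widehat\sigma,T)$. Existence and measurability of $\widehat\sigma$ follow from strict monotonicity and continuity of $F^{BS}$ in $\sigma$, together with $\widetilde C_{\widetilde\tau} \in (g(\widetilde S_{\widetilde\tau}), L(\widetilde S_{\widetilde\tau}))$, which is guaranteed by the choice of $U$.

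The main obstacle is verifying that the pasted model satisfies all defining properties of $\cM$. Local martingality on each of $[0,\tau]$, $[\tau,\widetilde\tau]$ and $[\widetilde\tau,T]$ is clear, and it is preserved under the concatenation since the filtration is augmented independently after $\tau$ and the stopped processes match; the terminal condition $\widetilde C_T = g(\widetilde S_T)$ is built into the BS continuation. The supermartingale condition reduces, via the It\^o--Tanaka identity recalled in the first paragraph applied to $\widetilde S$, to showing
\begin{equation*}
\widetilde C_t - \int_0^t g'(\widetilde S_u)\,\ud\widetilde S_u \;=\; g(\widetilde S_0) + \widetilde V_t + \tfrac{1}{2}\int_\R L^a_t[\widetilde S]\, g''(\ud a)
\end{equation*}
is bounded below; since $g$ is bounded below (a consequence of convexity and the integrability assumption $\int g(se^y) e^{-y^2/2}\,\ud y < \infty$) and $\widetilde V_t \geq 0$ throughout (inherited on $[0,\tau]$ from $M$, enforced on $[\tau,\widetilde\tau]$ by the choice of $U$, and guaranteed on $[\widetilde\tau,T]$ by Jensen applied to the BS formula), the resulting local martingale is bounded below by a constant and hence a supermartingale. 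Full support of $\widetilde S$ for $t \in [\widetilde\tau,T)$ comes from the log-normal continuation; for $t \in [\tau,\widetilde\tau)$ it is supplied by the non-degenerate phase-one diffusion together with the phase-two log-normal; and for $t \in [0,\tau)$ it is propagated through $\cF_\tau$ by conditioning, using full support of the post-$\tau$ process. The trickiest preliminary observation is that $(S_\tau, C_\tau)$ lies in the interior of the feasible region $\{g(s) < c < L(s)\}$ on $\{\tau < T\}$, which follows by applying the It\^o--Tanaka argument of the iff statement to the stopped model on $[\tau,T]$ and using full support of $S$ beyond $\tau$.
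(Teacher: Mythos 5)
Your overall architecture is right — establish nonemptiness and strict positivity of $V$ via It\^o--Tanaka and full support, then prove regularity by a two-phase continuation (flat covariance phase, then a Black--Scholes phase matching the prices) — but there is a genuine gap in the regularity argument, precisely where you flag it as the "trickiest preliminary observation."

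You assume $(S_\tau, C_\tau)$ lies in the interior of $\{(s,c): g(s) < c < L(s)\}$ on $\{\tau < T\}$, where $L(s) = \lim_{\sigma\to\infty} F^{BS}(s;\sigma)$, and claim this follows from the It\^o--Tanaka argument plus full support. That argument proves only the lower bound $C_\tau > g(S_\tau)$ (this is exactly Proposition~\ref{proposition: open set for convex claim}). It gives \emph{no} upper bound $C_\tau < L(S_\tau)$, and in fact none exists: $\cM$ explicitly admits bubble models in which $C_t$ is a strict local martingale exceeding $\E(g(S_T) \,|\, \cF_t)$. For instance, with $g(x)=(x-k)^+$ one has $L(s)=s$, and one can take $S$ a geometric Brownian motion and $C_t = F^{BS}(S_t,T-t;\sigma) + \epsilon Y_t$ with $Y$ a nonnegative continuous local martingale, $Y_0=1$, $Y_T=0$ a.s.; then $Z = C - \int g'(S)\,\ud S \geq g(S_0)$ so this model lies in $\cM$, yet $\Prob(C_\tau > S_\tau) > 0$ for suitable $\tau$. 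Consequently your open box $U$ need not exist, and your phase-two continuation — matching $\widetilde C_{\widetilde\tau}$ by a Black--Scholes price in volatility alone — cannot be carried out. The paper's proof avoids this by running the phase-two option price as $\widetilde C_t := F^{BS}(\widetilde S_t,t;\widetilde\sigma) + \frac{\widetilde C_{\widetilde\tau} - F^{BS}(\widetilde S_{\widetilde\tau},\widetilde\tau;\widetilde\sigma)}{Y_{\widetilde\tau}} Y_t$, with $\widetilde\sigma$ the largest volatility for which the BS price does not exceed $\widetilde C_{\widetilde\tau}$ and $Y$ the auxiliary strict local martingale vanishing at $T$; the nonnegative bubble coefficient absorbs any excess of $\widetilde C_{\widetilde\tau}$ above the BS-reachable range. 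Correspondingly its phase-one stopping rule is one-sided, $\widetilde\tau := \inf\{t \geq \tau: \widetilde V_t \leq V_\tau/2\} \wedge \tfrac{T+\tau}{2}$, guarding only against $\widetilde V$ hitting zero — it does not (and need not) keep $\widetilde C$ below $L(\widetilde S)$.

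Two smaller points. First, the supermartingale verification does not require $g$ to be bounded below, and that claim is false in general (for $g(x)=-2\log x$, used later for VIX, $g$ is unbounded below on $\R_+$); the argument needs only that $\widetilde Z_t \geq g(S_0)$, a fixed constant, which follows from $V_t \geq 0$ and nonnegativity of the local-time term exactly as you write it. Second, your handling of the nonemptiness direction when $C_0 \geq L(S_0)$ (deferring to "a strict local martingale construction in the spirit of Cox--Hobson") is the right idea, but notice it is inconsistent with the regularity part, where you forgo any such mechanism; the paper uses the same $Y$-based bubble device uniformly in both places, which is cleaner. The It\^o--Tanaka computation, the choice of $\widetilde\tau$ as a hitting time bounded away from $T$, and the full-support verification by splitting on $\{\widetilde\tau < T\}$ and $\{\widetilde\tau \geq T\}$ all match the paper's treatment.
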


For $X \in \cX$ and a closed set $B \subset \R^n$, denote $\cD \equiv \cR(X^B; \cM)$ and consider a function $F \in C(\cD) \cap C^2(\cD \setminus B)$. Then, Theorem \ref{theorem: characterisation of local martingales} and Theorem \ref{theorem: regular class of models, convex claim} together imply that $F(X^B_\cdot)$ is a local martingale under all models in $\cM$ iff $F \in \cS_X(\cD \setminus B)$. We first present some results which will be helpful in proving Theorem \ref{theorem: regular class of models, convex claim}.

\begin{lemma}
\label{lemma: (FS) property with stopping times}
$S$ has the full support property iff for any $\F_T$-adapted stochastic open interval $I \subset \R_+$ and any stopping time $\tau \in \cT(\F_T)$, $\Prob(S_T \in I_\tau \, | \, \cF_\tau ) > 0$ on $\{ \tau < T \}$.
\end{lemma}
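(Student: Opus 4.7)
The ``if'' direction is immediate: specializing to the deterministic stopping time $\tau \equiv t$ for $t \in [0,T)$ gives $\{\tau < T\} = \Omega$ and $\cF_\tau = \cF_t$, so the stopping-time statement collapses to the definition of full support.

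For the ``only if'' direction, the plan is to reduce a statement about a random time $\tau$ to a statement at a carefully chosen deterministic time, by transferring the randomness of $\tau$ into the interval. Unpacking the definition of ``$>0$ on $\xi$'', it suffices to fix $\xi \in \cF_\tau$ with $\Prob(\xi \cap \{\tau < T\}) > 0$ and show $\Prob(\xi \cap \{\tau < T\} \cap \{S_T \in I_\tau\}) > 0$. Writing $\{\tau < T\} = \bigcup_n \{\tau \leq T - 1/n\}$, pick $t \in [0,T)$ such that $\Prob(\xi \cap \{\tau \leq t\}) > 0$, and note that $\xi \cap \{\tau \leq t\} \in \cF_t$ by the definition of $\cF_\tau$.

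Writing $I = (a,b)$ with $a,b : \Omega \times [0,T] \to \R_+$ being $\F_T$-adapted, the key construction is the auxiliary $\F_T$-adapted open interval $\widetilde{I}$ defined by
$$
\widetilde{I}_s := \bigl( a_\tau \I_{\{\tau \leq s\}} + a_s \I_{\{\tau > s\}}, \ b_\tau \I_{\{\tau \leq s\}} + b_s \I_{\{\tau > s\}} \bigr), \quad s \in [0,T].
$$
Its endpoints are $\R_+$-valued and the lower one is strictly less than the upper one (split on $\{\tau \leq s\}$ vs.\ $\{\tau > s\}$ and use $a < b$); adaptedness follows because $a_\tau \I_{\{\tau \leq s\}}$ and $b_\tau \I_{\{\tau \leq s\}}$ are $\cF_s$-measurable, as $\{a_\tau \in \cdot\} \cap \{\tau \leq s\} \in \cF_s$ whenever $a_\tau$ is $\cF_\tau$-measurable. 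Applying the (FS) hypothesis at the deterministic time $t$ with $\widetilde{I}$ gives $\Prob(S_T \in \widetilde{I}_t \mid \cF_t) > 0$, and instantiating this with $\xi' := \xi \cap \{\tau \leq t\} \in \cF_t$ yields $\Prob(\xi \cap \{\tau \leq t\} \cap \{S_T \in \widetilde{I}_t\}) > 0$. On $\xi \cap \{\tau \leq t\}$ we have $\widetilde{I}_t = I_\tau$ by construction, so $\Prob(\xi \cap \{\tau < T\} \cap \{S_T \in I_\tau\}) > 0$, completing the proof.

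The only real subtlety to verify is measurability of $\widetilde{I}$, in particular that $a_\tau \I_{\{\tau \leq s\}}$ is $\cF_s$-measurable; this is the standard fact that the restriction of an $\cF_\tau$-measurable random variable to $\{\tau \leq s\}$ is $\cF_s$-measurable. Everything else is bookkeeping with the ``$>0$ on $\xi$'' notation.
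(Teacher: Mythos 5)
Your proof is correct and takes essentially the same route as the paper's: both reduce the random time $\tau$ to a deterministic time $t$ by choosing $t$ so that $\Prob(\xi \cap \{\tau \leq t\}) > 0$ and then applying the full-support hypothesis at $t$ to the stopped interval. Your $\widetilde{I}_s$ is precisely $I_{s \wedge \tau}$ (split on $\{\tau \leq s\}$ versus $\{\tau > s\}$), which is exactly what the paper writes as $I_{t \wedge \tau}$; the paper just uses this notation directly and is less explicit than you are about verifying that the stopped interval remains $\F_T$-adapted.
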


The proof of the above lemma is provided in the Appendix. Next, we derive a sufficient regularity condition on the volatility of a continuous local martingale $S$ in order for it to have full support. In particular, we show that processes which have a lower and upper bounded realised log-variance on $[\tau, T]$ for a stopping time $\tau$ are of full support on $\{ \tau < T\}$. Hence, coupling a stochastic process at a stopping time $\tau_{coupling}$ with a local volatility process with lower and upper bounded volatility will yield a process with full support on $\{ \tau_{coupling} < T \}$. The proof is again provided in the Appendix.

\begin{proposition}
\label{proposition: (FS) property with bounded realised variance}
Consider a filtered probability space $(\Omega, \cF, \F_T, \Prob)$ with a one-dimensional Brownian motion $W$, a left-continuous $\F_T$-adapted process $\sigma$ and a stopping time $\tau \in \cT(\F_T)$ defined on it. Suppose that $S$ is an $\F_T$-adapted process on $(\Omega, \cF, \F_T, \Prob)$ such that
$$
\ud S_t := \sigma_t S_t \ud W_t, \quad t \in [\tau \wedge T,T],
$$
Then, $S$ has full support on $\{ \tau < T\}$ if there are functions $l, u: [0,T] \to \R$ such that $0 < l(t) \leq u(t) < \infty$ for $t \in [0,T)$, $l(T) = u(T) = 0$ and
\begin{equation}
\label{lower bound for conditional realised variance}
\Prob \left( l(t \vee \tau) \leq \int_{t \vee \tau}^T \sigma^2_u \ud u \leq u(t \vee \tau) \, \bigg{|} \, \cF_{t \vee \tau} \right) > 0 \quad on \ \{ \tau < T \}
\end{equation}
for $t \geq 0$.
\end{proposition}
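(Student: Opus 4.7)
The plan is to reduce the statement to a consequence of the Wiener support theorem via a Dambis-Dubins-Schwarz (DDS) time change. First, fix $t \in [0,T)$ and an $\F_T$-adapted open interval $I = (a, b) \subset \R_+$, and set $\rho := t \vee \tau \in \cT(\F_T)$. Since $\cF_t \subseteq \cF_\rho$ and $\{\tau < T\}$ lies in $\cF_\rho$, it suffices to show $\Prob(S_T \in I_t \mid \cF_\rho) > 0$ on $\{\tau < T\}$. On $[\rho, T]$ we have $d \log S_u = \sigma_u \, dW_u - \tfrac{1}{2}\sigma_u^2 \, du$. Set $M_s := \int_\rho^s \sigma_u \, dW_u$ and $V := \int_\rho^T \sigma_u^2 \, du$, and apply DDS (on a possibly enlarged probability space) to obtain a Brownian motion $\beta$ starting at $0$, independent of $\cF_\rho$, with $M_T = \beta_V$. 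This yields $\log(S_T/S_\rho) = \beta_V - V/2$, and the goal becomes $\Prob(\beta_V - V/2 \in (c, d) \mid \cF_\rho) > 0$ on $\{\tau < T\}$, where $c := \log(a_t/S_\rho)$ and $d := \log(b_t/S_\rho)$ are $\cF_\rho$-measurable with $c < d$.

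By condition \eqref{lower bound for conditional realised variance} applied at $t$, $\Prob(L \leq V \leq U \mid \cF_\rho) > 0$ on $\{\tau < T\}$, with $L := l(\rho), U := u(\rho)$ positive, finite, and $\cF_\rho$-measurable. Choose, in an $\cF_\rho$-measurable way, a continuous path $\phi: [0, U] \to \R$ with $\phi(0) = 0$ and $\phi(s) = s/2 + (c + d)/2$ for $s \in [L, U]$ (for instance, linear on $[0, L]$ and affine with slope $1/2$ on $[L, U]$). Define the tube event $\Xi := \{\sup_{s \in [0, U]} |\beta_s - \phi(s)| < (d - c)/4\}$. The Wiener support theorem, applied conditionally on $\cF_\rho$ and valid since $\beta$ is a Brownian motion independent of $\cF_\rho$, yields $\Prob(\Xi \mid \cF_\rho) > 0$ almost surely. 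On $\Xi \cap \{V \in [L, U]\}$, $|\beta_V - \phi(V)| < (d-c)/4$, hence $\beta_V - V/2 \in ((3c+d)/4, (c+3d)/4) \subset (c, d)$.

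The principal technical obstacle is to show that $\Xi$ and $\{V \in [L, U]\}$ jointly have positive $\cF_\rho$-conditional probability, given that $V$ may depend on $\sigma$ and hence on $\beta$ via the DDS time change. One way to handle this is to use that $V$ is a stopping time of the time-changed filtration $\{\cF_{T_v}\}_{v \geq 0}$ with $V \leq U$ on the good event, and to combine the Wiener support theorem for the tube on $[0, L]$ (using the $\cF_\rho$-conditional Gaussian law of $\beta_L$) with the strong Markov property of $\beta$ at $L$. This propagates the tube up to time $V$ and secures $\Prob(\Xi \cap \{V \in [L, U]\} \mid \cF_\rho) > 0$ on $\{\tau < T\}$, concluding the proof.
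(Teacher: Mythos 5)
Your reduction to showing $\Prob(S_T \in I_t \mid \cF_\rho) > 0$ on $\{\tau < T\}$ with $\rho := t \vee \tau$ is sound, and the DDS time change giving $\log(S_T/S_\rho) = \beta_V - V/2$ is the right starting point. However, the step you yourself flag as ``the principal technical obstacle'' is precisely where the argument breaks down, and your sketch does not close it. The DDS Brownian motion $\beta$ is independent of $\cF_\rho$, but it is emphatically \emph{not} independent of $V = \int_\rho^T \sigma_u^2\,\ud u$: $V$ is a stopping time of the time-changed filtration $\cG_v := \cF_{T_v}$, so conditioning $\beta$ to lie in a tube around a path $\phi$ on $[0,U]$ can change the conditional distribution of $V$ in an uncontrolled way, and in particular can push $\Prob(V \in [L,U] \mid \cF_\rho, \Xi)$ to zero. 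The combination ``Wiener support on $[0,L]$ plus strong Markov at $L$'' does not resolve this: the strong Markov property at the deterministic time $L$ restarts $\beta$, but gives no control over where the $\cG$-stopping time $V$ falls in $[L,U]$ once $\beta$ has been forced into a tube, nor over the path of $\beta$ on the random interval $[L,V]$. As written, the positivity of $\Prob(\Xi \cap \{V \in [L,U]\} \mid \cF_\rho)$ is asserted rather than proved.

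The paper takes a different route that avoids this coupling entirely. Rather than analysing the DDS Brownian motion $\beta$ (which is built from $W$ and hence correlated with $V$), it enlarges the probability space with a fresh Brownian motion $W'$ that is \emph{independent of the whole $\sigma$-algebra} $\cF$, replaces the driving martingale by $Y'_s := W'_{\langle Y \rangle_s}$, and argues that $S'_T := S_t\,\cE(Y')_T$ has the same conditional law as $S_T$. With this independent $W'$ in hand, one conditions on $\langle Y \rangle_{t',T}$: on the event where the latter lies in $[l(t'),u(t')]$ (which has positive conditional probability by \eqref{lower bound for conditional realised variance}), the increment of $W'$ is a genuine centred Gaussian with variance bounded away from zero and infinity, so it lands in any given interval with positive probability. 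No tube estimate or path-level Wiener support theorem is needed, only the Gaussian density of a single increment. Your strategy could in principle be rescued by a similar replacement-by-independent-noise device, but absent that, the dependence between $\beta$ and $V$ is a genuine gap that the current write-up does not address.
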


The following proposition is a key ingredient in the proof of Theorem \ref{theorem: regular class of models, convex claim}. 

\begin{proposition}
\label{proposition: open set for convex claim}
$\cM(A_0) \neq \emptyset$ iff $V_0 > 0$, and for any model $\{A;(\Omega,\cF,\F_T,\Prob) \} \in \cM$, $\Prob(V_t > 0, \ t \in [0,T) ) = 1$.
\end{proposition}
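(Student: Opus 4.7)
The plan is to leverage the Meyer--Tanaka formula applied to the convex function $g$ at the continuous local martingale $S$:
\begin{equation*}
g(S_t) = g(S_0) + \int_0^t g'(S_u)\,\ud S_u + \tfrac{1}{2}\int_\R L^a_t[S]\, g''(\ud a),
\end{equation*}
where $g''(\ud a)$ denotes the non-negative Radon measure canonically associated to $g$. Substituting this into $V_t = C_t - g(S_t)$ gives the key identity
\begin{equation*}
C_t - \int_0^t g'(S_u)\,\ud S_u = V_t + g(S_0) + \tfrac{1}{2}\int_\R L^a_t[S]\, g''(\ud a),
\end{equation*}
whose left-hand side is a supermartingale by the definition of $\cM$. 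Since the local-time term is non-decreasing, $V$ inherits the supermartingale property, and $V_T = 0$ then gives $V \geq 0$ on $[0,T]$ a.s. This settles the non-negativity half of the pathwise claim and the ``only if'' direction of the equivalence up to the strict inequality at $t=0$.

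For strict positivity on $[0,T)$, I would argue by contradiction. Let $\tau := \inf\{t \in [0,T) : V_t = 0\}$ (with $\inf\emptyset := T$) and suppose $\Prob(\tau < T) > 0$. Continuity of $V$ gives $V_\tau = 0$ on $\{\tau < T\}$, and applying the supermartingale property between $\tau$ and $T$ together with $V_T = V_\tau = 0$ yields
\begin{equation*}
\E\!\left[\int_\R \bigl(L^a_T[S] - L^a_\tau[S]\bigr)\,g''(\ud a)\,\bigg|\,\cF_\tau\right] \leq 0 \quad \text{on }\{\tau < T\}.
\end{equation*}
Non-negativity of the integrand then forces this integral to vanish a.s.\ on $\{\tau < T\}$, so by Fubini one can select (in fact for $g''$-a.e.) a point $K$ in the support of $g''$ with $L^K_T[S] = L^K_\tau[S]$ a.s.\ on $\{\tau < T\}$. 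The main obstacle is to turn this into a contradiction with full support. For this I would apply Lemma \ref{lemma: (FS) property with stopping times} at $\tau$ to the $\F_T$-adapted constant intervals $(0, K)$ and $(K, \infty)$, obtaining $\Prob(S_T < K \mid \cF_\tau) > 0$ and $\Prob(S_T > K \mid \cF_\tau) > 0$ on $\{\tau < T\}$; by continuity of $S$ and case analysis on the sign of $S_\tau - K$ (using the full-support property itself on $\{S_\tau = K\}$ to force non-trivial motion), $S$ genuinely visits level $K$ on $(\tau, T)$ with positive conditional probability while $\langle S \rangle$ continues to increase. Via a Dambis--Dubins--Schwarz time change reducing to Brownian local time, this forces $L^K_T[S] > L^K_\tau[S]$ on a set of positive probability, contradicting the equality above.

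For the ``if'' direction, given $V_0 > 0$, I would construct an explicit model by perturbing the Black--Scholes template. Set $S_t := S_0\,\cE(\sigma W)_t$ for a constant $\sigma > 0$ on a Wiener space, $C^{BS}_t := \E[g(S_T)\mid\cF_t]$, and $C_t := C^{BS}_t + B_t$, where $B$ is a non-negative local martingale on an independent sample-space factor with $B_0 > 0$ and $B_T = 0$ a.s.\ (such a bubble can be realised, for instance, as the Dol\'eans--Dade exponential of a time-changed Brownian motion whose quadratic variation diverges at $T$). Geometric Brownian motion has full support; $C_T = g(S_T)$; and the first-paragraph identity applied to $C^{BS}$ gives $C^{BS} - \int g'(S)\,\ud S \geq g(S_0)$, so $C - \int g'(S)\,\ud S$ is the sum of a local martingale bounded below and the non-negative local martingale $B$, hence a supermartingale. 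Choosing $\sigma$ small and $B_0 = V_0 - (C^{BS}_0(\sigma) - g(S_0)) > 0$ (possible for any $V_0 > 0$, since $C^{BS}_0(\sigma) \to g(S_0)$ as $\sigma \downarrow 0$ by dominated convergence) matches the prescribed $V_0$, yielding a model in $\cM(A_0)$.
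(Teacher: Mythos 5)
Your proposal is correct and follows the paper's overall structure: the It\^o--Tanaka decomposition of $g(S)$, the supermartingale property of $C_\cdot - \int_0^\cdot g'(S_t)\,\ud S_t$, a contradiction argument at the first zero of $V$, and a geometric-Brownian-plus-bubble construction for the converse (which matches the paper's up to cosmetics). The one genuine technical difference is how you establish that local time must strictly accumulate after $\tau$. You argue via a Dambis--Dubins--Schwarz time change and the fact that Brownian motion accumulates local time at any level it reaches, implicitly also using that full support forces $\langle S\rangle$ to keep growing after $\tau$. The paper sidesteps DDS with a more elementary two-level Tanaka argument: for $b \neq a$ and the successive hitting times $\tau_a$ (of $a$ after $\tau$) and $\tau_{a,b}$ (of $b$ after $\tau_a$), Tanaka's formula yields
\begin{equation*}
\E\left(L^a_T[S] - L^a_{\tau\wedge T}[S]\right) \;\geq\; |b-a|\,\Prob(\tau_{a,b}<T) \;>\; 0,
\end{equation*}
where positivity of $\Prob(\tau_{a,b}<T)$ itself follows from full support by a short case analysis on the position of $S$ at $\tau$ and $\tau_a$. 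Both routes work; the paper's is fully elementary and quantitative, whereas your DDS route is valid but sketchier as written --- the case $S_\tau = K$ and the assertion that $\langle S\rangle$ continues to increase on $(\tau,T)$ each deserve a sentence of justification (the latter does follow from full support applied at the hitting time, since constancy of $\langle S\rangle$ would freeze $S$ and pin down $S_T$).
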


\begin{proof}

Suppose that $\{ A ; (\Omega, \cF, \F_T, \Prob) \} \in \cM$. The notational definitions in this proof are temporary. \\

\item (i) We begin by showing that for $a \in \R_+$, stopping times $\tau \in \cT(\F_T)$ and $\tau_a := \inf \{ t \geq \tau : S_t = a \}$, the full support property of $S$ implies that $\Prob( \tau_a < T \, | \, \tau < T) > 0$. Let $s_1 < a < s_2$. Then,
\begin{align*}
&\Prob( \tau_a < T \, | \, \tau < T) \\
&= \Prob( \tau_a < T, S_\tau = a \, | \, \tau < T) + \Prob( \tau_a < T, S_\tau < a \, | \, \tau < T) + \Prob( \tau_a < T, S_\tau > a \, | \, \tau < T) \\
&\geq \Prob( S_\tau = a \, | \, \tau < T) + \Prob( S_T > s_2, S_\tau < a \, | \, \tau < T) + \Prob( S_T < s_1, S_\tau > a \, | \, \tau < T) \\
&= \Prob( S_\tau = a \, | \, \tau < T) + \Prob( S_T > s_2 \, | \, S_\tau < a, \tau < T) \, \Prob(S_\tau < a \, | \, \tau < T) \\
&\hspace{150pt} + \Prob( S_T < s_1, \, | \, S_\tau > a, \tau < T) \, \Prob( S_\tau > a \, | \, \tau < T)
\end{align*}
Note that $\Prob( S_T > s_2 \, | \, S_\tau < a, \tau < T) > 0$ and $\Prob( S_T < s_1 \, | \, S_\tau > a, \tau < T) > 0$  by $\Prob(S_T > s_2 \, | \, \tau < T) > 0$ and $\Prob(S_T < s_1 \, | \, \tau < T) > 0$ and that  
$$
\Prob( S_\tau = a \, | \, \tau < T) + \Prob(S_\tau < a \, | \, \tau < T) + \Prob( S_\tau < a \, | \, \tau < T) = 1,
$$
We hence conclude that $\Prob( \tau_a < T \, | \, \tau < T) > 0$ as claimed. \\

\item (ii) Next, we show that for $a \in \R_+$ and $\tau \in \cT(\F_T)$, the full support property of $S$ implies that $\E(L^a_T[S] - L^a_{\tau \wedge T}[S]) > 0$ if $\Prob(\tau < T) > 0$.

Let $b \in \R_+ \setminus \{ a \}$ and define
\begin{align*}
\tau_a &:= \inf \{ t \geq \tau : S_t = a \}, \\
\tau_{a,b} &:= \inf \{ t \geq \tau_a : S_t = b \}.
\end{align*}
By monotonicity of $L^a_t[S]$ in $t$, Tanaka's formula and $\tau_a \wedge T \leq \tau_{a,b} \wedge T \leq T$, it follows that
\begin{align*}
\E \left( L^a_T[S] - L^a_{\tau \wedge T}[S] \right) &\geq \E \left( L^a_{\tau_{a,b} \wedge T}[S] - L^a_{\tau_a \wedge T}[S] \right) \\
&= \E \left( |S_{\tau_{a,b} \wedge T} - a| - |S_{\tau_a \wedge T} - a| \right) \\
&= \E \left( \left( |S_{\tau_{a,b} \wedge T} - a| - |S_{\tau_a \wedge T} - a| \right) \I_{\{ \tau_a < T \}} \right) \\
&= \E \left( |S_{\tau_{a,b} \wedge T} - a| \I_{\{ \tau_a < T \}} \right) \\
&\geq \E \left( |S_{\tau_{a,b} \wedge T} - a| \I_{\{ \tau_{a,b} < T \}} \right) \\
&\geq |b-a| \Prob( \tau_{a,b} < T )
\end{align*}
By $\{ \tau_{a,b} < T \} = \{ \tau_{a,b} < T, \tau_a < T \}$ and $\{ \tau_a < T \} = \{ \tau_a < T, \tau < T \}$,
\begin{align*}
\Prob(\tau_{a,b} < T) &= \Prob(\tau_{a,b} < T \, | \, \tau_a < T) \Prob( \tau_a < T) \\
&= \Prob(\tau_{a,b} < T \, | \, \tau_a < T) \Prob( \tau_a < T \, | \, \tau < T ) \Prob(\tau < T)
\end{align*}
Hence, by part (i) of the proof,
$$
\E \left( L^a_T[S] - L^a_{\tau \wedge T}[S] \right) \geq |b-a| \Prob(\tau_{a,b} < T \, | \, \tau_a < T) \, \Prob( \tau_a < T \, | \, \tau < T ) \, \Prob(\tau < T) > 0
$$
if $\Prob (\tau < T) > 0$. \\

\item (iii) Define $\tau^V_0 := \inf \{ t \geq 0 : V_t = 0 \}$. Since $V \geq 0$ and $V_T = 0$, it follows that $\tau^V_0 \leq T$. We will now show that $\Prob(\tau^V_0 < T) = 0$.

Temporarily define $Z := C - \int_0^\cdot g^\prime(S_t) \ud S_t$. The It\^o-Tanaka formula gives $V_t - V_0 = Z_t - Z_0 - \int_0^\infty L^a_t[S] \ud g^\prime(a)$, which along with $V_T = V_{\tau^V_0} = 0$ implies that 
$$
\int_0^\infty \left( L^a_T[S] - L^a_{\tau^V_0}[S] \right) \ud g^\prime(a) = Z_T - Z_{\tau^V_0}.
$$
Fubini's theorem and the supermartingale property of $Z$ (by the assumptions in Definition \ref{definition: regular class of models, convex claim}) then yield
\begin{align*}
\int_0^\infty \E \left( L^a_T[S] - L^a_{\tau^V_0}[S] \right) \ud g^\prime(a) &= \E \left( \int_0^\infty \left( L^a_T[S] - L^a_{\tau^V_0}[S] \right) \ud g^\prime(a) \right) \\
&= \E \left( Z_T - Z_{\tau^V_0} \right) \leq 0.
\end{align*}
By convexity and non-linearity of $g$ and by part (ii) of the proof, this implies that $\Prob( \tau^V_0 < T ) = 0$. \\

\item (iv) The conclusion in part (iii) proves that $V_t > 0$ for all $t \in [0,T)$ almost surely under all models in $\cM$. In particular, this implies that $\cM(A_0) = \emptyset$ if $V_0 \leq 0$. Hence, it remains to prove the converse claim that $\cM(A_0) \neq \emptyset$ if $V_0 > 0$. To do so, we will provide a construction of a model based on geometric Brownian motion and a strict local martingale.

Consider the canonical space $(\cC_\infty(\R^2),\cF^W,\F^W,\Prob^W)$ on which $(W^S,W^Y)$ is a two-dimensional Brownian motion. Define
$$
Y_t := 
\begin{cases}
\left(\frac{T-t}{T}\right)^{1/2} \exp\{W^Y_{\log(T/(T-t))}\}, \quad& t < T, \\
0, & t \geq T,
\end{cases}
$$
and denote $(\cC_T(\R^2), \cF, \F_T, \Prob)$ to be the canonical space of $\{ (W^S_t, Y_t) \}_{t \in [0,T]}$. $Y$ is a geometric Brownian motion (with zero mean) when stopped at a fixed sequence of times $t_k < T$ such that $t_k \uparrow T$. Hence, it is a local martingale on $(\cC_T(\R^2), \cF, \F_T, \Prob)$ starting at $Y_0 = 1$. Moreover, by $\lim_{t \to \infty} e^{W^Y_t - \frac{1}{2}t} = 0$ $\Prob^W$-a.e. (apply the law of iterated logarithm, Theorem 2.9.23 of Karatzas and Shreve), it follows that $\lim_{t \to T} Y_t = 0$ $\Prob^W$-a.e.. 

Temporarily define the Black-Scholes price of the claim on $g(S_T)$ at time zero under zero interest rate and volatility $\sigma > 0$ by 
$$
F_{BS}(s,t;\sigma) := \E \left( g \left( s \cE( \sigma W^S_t \right) \right).
$$
The following properties are well-known for Black-Scholes prices of convex claims, but we outline the arguments for completeness. By Fubini's theorem and the It\^o-Tanaka formula, $F_{BS}(S_0,T; \sigma) = g(S_0) + \int_0^t \E \left( L^a_T\left[S_0  e^{ \sigma W^1_T - \frac{1}{2} \sigma^2 T }\right] \right) \ud g^\prime(a)$. This implies that $F_{BS}(S_0,T; \sigma)$ is increasing in $\sigma$ and that $F_{BS}(S_0, T; \sigma) \downarrow g(S_0)$ as $\sigma \downarrow 0$ (by dominated convergence, using $\int_\R g(S_0 e^{y}) e^{-y^2/2} \ud y < \infty$ and $\lim_{\sigma \to 0} L^a_T\left[S_0  e^{ \sigma W^1_T - \frac{1}{2} \sigma^2 T }\right] = 0$). Hence,
$$
\sigma_0 := \sup\{ \sigma > 0 : F_{BS}(S_0, T; \sigma) - g(S_0) < V_0 \}
$$
is well-defined and $F_{BS}(S_0, T; \sigma_0) \leq C_0$. Then, for
\begin{align*}
S_t &:= S_0 e^{ \sigma_0 W^1_t - \frac{1}{2} \sigma_0^2 t }, \quad t \in [0,T], \\
C_t &:= F_{BS}(S_t, T-t; \sigma_0) + (C_0 - F_{BS}(S_0, T; \sigma_0)) Y_t, \quad t \in [0,T],
\end{align*}
we have that:

\item  (1) $S$ is a geometric Brownian motion, hence has full support by Proposition \ref{proposition: (FS) property with bounded realised variance}.

\item  (2) $C$ is a continuous local martingale on $(\cC_T(\R^3), \cF, \F_T, \Prob)$ such that $C_T = g(S_T)$.
\item (3) For $t \in [0,T]$,
\begin{align*}
&Z_t = C_t - \int_0^t g(S_u) \ud S_u \\
&= (C_0 - F_{BS}(S_0, T; \sigma_0)) Y_t + (F_{BS}(S_t, T - t; \sigma_0) - g(S_t)) + \int_0^\infty L^a_t [S] \ud g^\prime(a) + g(S_0) \\
&\geq g(S_0),
\end{align*}
which implies that $Z_\cdot$ is a lower bounded local martingale, thus a supermartingale. Hence, $\{ (S,C) ; (\cC_T(\R^2), \cF, \F_T, \Prob) \} \in \cM(A_0)$. This proves that $\cM(A_0) \neq \emptyset$ if $V_0 > 0$.

\end{proof}

\begin{proof}[Proof of Theorem \ref{theorem: regular class of models, convex claim}]

The claim that $\cM(A_0) \neq \emptyset$ iff $C_0 > g(S_0)$ was shown in Proposition \ref{proposition: open set for convex claim}. It remains to prove that $\cM$ is a regular class of models for $A = (S,C)$.

Consider a model $\{A ; (\Omega, \cF, \F_T, \Prob) \} \in \cM$, a stopping time $\tau \in \cT(\F_T)$ such that $\Prob( \tau < T) > 0$ and a matrix $\Sigma \in \mathbb{S}^2_+$. By similar arguments as in part (iv) of the proof of Proposition \ref{proposition: open set for convex claim}, consider a filtered probability space $(\Omega^\prime, \cF^\prime, \F^\prime_T, \Prob^\prime)$ on which are defined a $d$-dimensional Brownian motion $W$ and an independent continuous local martingale $Y$ on $(\cC_T(\R^3), \cF, \F_T, \Prob)$ starting at $Y_0 = 1$ and such that $\inf\{t \geq 0 : Y_t = 0\} = T$ $\Prob^\prime$-a.e.. Denote the Black-Scholes formula by
$$
F_{BS}(s,t;\sigma) := \E^{\Prob^\prime} \left( g\left( s \cE(\sigma W^1_t) \right) \right).
$$
For $c > g(s)$, define
$$
\sigma_t(s,c) := \sup\{ \sigma \geq 0 : F_{BS}(s,t; \sigma) \leq c \}.
$$
Then, $F_{BS}(s,t;\sigma_t(s,c)) \leq c$, with strict equality holding if and only if $\sigma_t(s,c) = \infty$. Define $(\widetilde{\Omega}, \widetilde{\cF},\widetilde{\F}_T,\widetilde{\Prob}) := (\Omega, \cF, \F_T, \Prob) \otimes (\Omega^\prime, \cF^\prime, \F^\prime, \Prob^\prime)$ and let $m \in \R^{2 \times 2}$ be such that $m^T m = \Sigma$. We can now define a model $\{ \widetilde{A}, (\widetilde{\Omega},\widetilde{\cF},\widetilde{\F}_T,\widetilde{\Prob}) \} \in \cM$ and a stopping time $\widetilde{\tau}$ satisfying the properties stated in Definition \ref{definition: regular class of models}. In particular, define
\begin{align*}
&\widetilde{A}_t := A_{t \wedge \tau} + m( W_t - W_{t \wedge \tau}), \quad t \in [\tau, \widetilde{\tau}), \\
&\widetilde{\tau} := \inf \left\{ t \geq \tau : \widetilde{C}_t - g(\widetilde{S}_t) \leq V_\tau/2 \right\} \wedge \frac{T + \tau}{2},
\end{align*}
and, using the shorthand notation $\widetilde{\sigma} \equiv \sigma_{\widetilde{\tau}}(\widetilde{S}_{\widetilde{\tau}},\widetilde{C}_{\widetilde{\tau}})$,
$$
\widetilde{S}_t := \widetilde{S}_{\widetilde{\tau}} \frac{\cE( \widetilde{\sigma} W^1_t)}{\cE( \widetilde{\sigma} W^1_{\widetilde{\tau}})}, \quad 
\widetilde{C}_t := BS(\widetilde{S}_t, t; \widetilde{\sigma}) + \frac{\widetilde{C}_{\widetilde{\tau}} - BS(\widetilde{S}_{\widetilde{\tau}}, \widetilde{\tau}; \widetilde{\sigma})}{Y_{\widetilde{\tau}}} Y_t, \quad t \in [\widetilde{\tau},T].
$$
By construction, $\widetilde{A} = A$ on $[0,\tau]$, $\widetilde{S}$ is a continuous martingale and $\widetilde{C}$ is a continuous local martingale such that $\widetilde{C}_T = g(\widetilde{S}_T)$. Denote $\widetilde{Z}_\cdot = \widetilde{C}_\cdot - \int_0^\cdot g^\prime(\widetilde{S}_t) \ud \widetilde{S}_t$. Note that $\widetilde{Z}$ is a local martingale. Furthermore,
\begin{align*}
\widetilde{Z}_t &= \widetilde{Z}_0 + \widetilde{C}_t - \widetilde{C}_0 - \left( g(\widetilde{S}_t) - g(\widetilde{S}_0) - \int_0^\infty L^a_t[\widetilde S] \ud g^\prime(a) \right)  \\
&= g(S_0) + \left( BS(\widetilde{S}_t, t; \widetilde{\sigma}) + \frac{\widetilde{C}_{\widetilde{\tau}} - BS(\widetilde{S}_{\widetilde{\tau}}, \widetilde{\tau}; \widetilde{\sigma})}{Y_{\widetilde{\tau}}} Y_t \right) - g(\widetilde{S}_t) + \int_0^\infty L^a_t[\widetilde S] \ud g^\prime(a) \\
& \geq g(S_0)
\end{align*}
by the convexity of $g$ (see the standard arguments outlined in part (iv) of proof of Proposition \ref{proposition: open set for convex claim}). This implies that $\widetilde{Z}$ is a lower bounded lower martingale, hence a supermartingale.

To prove that $\widetilde{S}$ is of full support, we will prove that it is of full support on sets $\xi_1, \xi_2 \in \widetilde{\cF}$ such that $\xi_1 \cup \xi_2 = \widetilde{\Omega}$. In particular, we will set $\xi_1 = \{ \widetilde{\tau} < T \}$ and $\xi_2  = \{ \widetilde{\tau} \geq T \}$. Then:
\item (1)  On $\{ \widetilde{\tau} < T \}$, $\ud \widetilde{S}_t = \widetilde{\sigma} \widetilde{S} \ud W^1_t$ for $t \in [\widetilde{\tau},T]$. Hence, $\widetilde{S}$ is of full support on $\{ \widetilde{\tau} < T \}$ by Proposition \ref{proposition: (FS) property with bounded realised variance}.
\item (2) On $\{ \widetilde{\tau} \geq T \}$, $\tau \geq T$ and hence $\widetilde{S} = S$ on $[0,T]$. Since $S$ is of full support, $\widetilde{S}$ is of full support on $\{ \widetilde{\tau} \geq T \}$. \\
This proves that $\widetilde{S}$ is of full support (on $\widetilde{\Omega} = \{ \widetilde{\tau} < T \} \cup \{ \widetilde{\tau} \geq T \}$) and that $\{ \widetilde{A}; (\widetilde{\Omega},\widetilde{\cF},\widetilde{\F}_T,\widetilde{\Prob}) \} \in \cM$. Furthermore, Proposition \ref{proposition: open set for convex claim} implies that $V_\tau > 0$ on $\{\tau < T\}$. Along with the continuity of paths of $\widetilde{A}$, this yields $\{\tau < \widetilde{\tau} \leq T \} = \{ \tau < T \}$. Also, note that $\ud \langle \widetilde{A} \rangle_t = \langle mW \rangle_t = m^T m \ud t$ for $t \in [\tau, \widetilde{\tau})$.

This concludes the proof that $\cM$ is a regular class of models for $A$.

\end{proof}

\section{Examples with $A = (S,C)$}
\label{section: examples with convex claim}

We now provide some examples of model-independent strategies for a market with an underlying asset and a traded claim with convex payoff function $g$. To simplify the presentation of the results, we will furthermore assume that $g \in C^2$. By known results \cite{Carr-Lee:volatility, Carr-Lee:semimartingales, Fukasawa, Hobson-Klimmek, Neuberger:log-contract}, the time value $V$ corresponds to the model-free implied weighted variance $\E \left( \int_t^T \frac{1}{2}g^{\prime\prime}(S_t) \ud \langle S \rangle_t \, \big{|} \, \cF_t \right)$ under any model where $S$ is a regular enough continuous local martingale. When $g(x) = - 2 \log(x)$, $V = C - g(S)$ corresponds to the theoretical value of the VIX index (without considering any rolling contract features and adjustments for jumps). Since the latter is a key index in industry, it is of interest to understand what payoffs contingent upon it and upon its second variations one may replicate model-independently (see Fukasawa \cite{Fukasawa} for a discussion of the role of the cross-variation between VIX and asset (log) returns). 

Within this section, we will denote $Q^s \equiv \int_0^\cdot w_s(S_t) \ud \langle S \rangle_t$, $Q^v \equiv \int_0^\cdot w_v(S_t) \ud \langle V \rangle_t$, $L \equiv \int_0^\cdot w_l(S_t) \ud \langle S, V \rangle_t$, $X \equiv (S, V, Q^s, Q^v, L)$, $X^s \equiv (S, V, Q^s)$, $X^v \equiv (S, V,Q^v)$ and $X^l \equiv (S, V, L)$, where $w_s, w_v, w_l > 0$ are continuous functions. It is easily checked that $X$, $X^s$, $X^v$ and $X^l$ are functionals in $\cX$. $\cM = \cM_g$ is defined as in Definition \ref{definition: regular class of models, convex claim}. Define $\cD = \cR(X; \cM) \setminus \{x_2 = 0\}$, $\cD^s = \cR(X^s; \cM) \setminus \{x_2 = 0\}$, $\cD^v = \cR(X^v; \cM) \setminus \{x_2 = 0\}$ and $\cD^l = \cR(X^l; \cM) \setminus \{x_2 = 0\}$. Note that $\cD \subset \R^5$ and that $\cD^s, \cD^v, \cD^l \subset \R^3$.

We start by simplifying the system of PDEs characterising $\cS_X(\cD)$.

\begin{lemma}
\label{lemma: PDEs for 5-d}
$F: \cD \to \R$ is in $\cS_X(\cD)$ iff $F \equiv F(s, v, q_s, q_v, l)$ is a solution to
$$
\begin{cases}
&w_s(s) \frac{\partial}{\partial q_s} F - \frac{1}{2} g^{\prime\prime}(s) \frac{\partial}{\partial v} F + \frac{1}{2} \frac{\partial^2}{\partial s^2} F = 0 \\
&w_v(s) \frac{\partial}{\partial q_v} F + \frac{1}{2} \frac{\partial^2}{\partial v^2} F = 0 \\
&w_l(s) \frac{\partial}{\partial l} F + \frac{\partial^2}{\partial s \partial v} F = 0
\end{cases}
$$
on $\cD$.
\end{lemma}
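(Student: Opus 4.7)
The plan is to unpack Definition \ref{definition: solution set to system of PDEs} explicitly for $X = (S, V, Q^s, Q^v, L)$. First I would compute the functions $\alpha^i, \beta^i, \gamma^i$ attached to each component of $X$, then reduce the system $\cL^{\gamma} F = 0$ and $\cL^{\alpha,\beta}_{i,j} F = 0$ (for $i,j \in \{1,2\}$, since $A = (S, C)$ is two-dimensional) to the three PDEs in the statement by a cascade of linear combinations. Since no component of $X$ has an absolutely continuous $\ud u$ drift, all $\gamma^i$ vanish and $\cL^\gamma F = 0$ is trivially satisfied, leaving three scalar equations (after using the symmetry $\cL^{\alpha,\beta}_{1,2} = \cL^{\alpha,\beta}_{2,1}$) to reduce.

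For the coefficient table: from $\ud S_t$ itself one reads $\alpha^1 = (1, 0)$, $\beta^1 = 0$. Applying It\^o's formula to $V = C - g(S)$ gives
\begin{equation*}
\ud V_t = -g'(S_t)\,\ud S_t + \ud C_t - \tfrac{1}{2} g''(S_t)\,\ud \langle S \rangle_t,
\end{equation*}
so $\alpha^2 = (-g'(s), 1)$ and $\beta^2_{1,1} = -\tfrac{1}{2} g''(s)$ with all other entries zero. For $X^3, X^4, X^5$ the $\alpha$'s vanish, and their $\beta$'s follow by symmetrizing the identities
\begin{align*}
\ud \langle V \rangle_t &= (g'(S_t))^2\,\ud \langle S \rangle_t - 2 g'(S_t)\,\ud \langle S, C \rangle_t + \ud \langle C \rangle_t, \\
\ud \langle S, V \rangle_t &= -g'(S_t)\,\ud \langle S \rangle_t + \ud \langle S, C \rangle_t,
\end{align*}
yielding in particular $\beta^4_{1,2} = \beta^4_{2,1} = -w_v(s) g'(s)$ and $\beta^5_{1,2} = \beta^5_{2,1} = \tfrac{1}{2} w_l(s)$.

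Next I would plug these coefficients into each $\cL^{\alpha,\beta}_{i,j} F$ and perform two substitutions in sequence. The $(2,2)$-equation is the cleanest: only $\alpha^2_2 = 1$ and $\beta^4_{2,2} = w_v(s)$ contribute, giving $w_v(s) \partial_{q_v} F + \tfrac{1}{2} \partial_v^2 F = 0$, which is the second PDE. Substituting $\partial_v^2 F = -2 w_v(s) \partial_{q_v} F$ into the $(1,2)$-equation cancels all $\partial_{q_v}$ terms and yields the third PDE $w_l(s) \partial_l F + \partial_s \partial_v F = 0$. Finally, substituting both relations into the $(1,1)$-equation cancels the $\partial_{q_v}$, $\partial_l$, $\partial_s \partial_v$ and $\partial_v^2$ contributions and leaves exactly the first PDE. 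Each step is reversible, giving the ``iff'' claim.

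The principal obstacle is purely bookkeeping: the Hadamard-product convention of the Notation section forces $\beta^k_{1,2} = \beta^k_{2,1}$ for $k = 4, 5$, so one must halve the off-diagonal coefficients read off from $\ud \langle V \rangle$ and $\ud \langle S, V \rangle$ to reproduce them correctly, and one must then track the three cascading cancellations in the $(1,2)$- and $(1,1)$-equations without sign errors. Once the coefficient table is set up consistently, the reduction is routine algebra.
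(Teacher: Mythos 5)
Your proof is correct and takes essentially the same route as the paper, with one cosmetic difference in bookkeeping. You work directly from Definition \ref{definition: solution set to system of PDEs}, spelling out the coefficient table $(\alpha^i,\beta^i,\gamma^i)$ for the five components of $X$ (with the Hadamard symmetrization halving the off-diagonal $\beta^4,\beta^5$ entries) and then reduce $\cL^{\alpha,\beta}_{2,2}F=0 \Rightarrow \cL^{\alpha,\beta}_{1,2}F=0 \Rightarrow \cL^{\alpha,\beta}_{1,1}F=0$ by successive substitution. The paper instead applies It\^o to $F(X_t)$ and regroups the drift in the basis $\ud\langle S\rangle$, $\ud\langle V\rangle$, $\ud\langle S,V\rangle$ rather than $\ud\langle S\rangle$, $\ud\langle C\rangle$, $\ud\langle S,C\rangle$, in which basis the three PDEs already appear as the coefficients, so no explicit elimination is needed; it then (somewhat superfluously, since the lemma is a purely algebraic restatement of Definition \ref{definition: solution set to system of PDEs}) invokes the characterisation theorem. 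The underlying linear algebra is the same, and your observation that all $\gamma^i$ vanish so $\cL^\gamma F=0$ is trivial, and that the cascade is reversible so the equivalence is an iff, are exactly the points needed; your version is in fact a bit more self-contained because it avoids the unnecessary appeal to Theorem \ref{theorem: characterisation of local martingales}.
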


\begin{proof}

We start by noting that 
\begin{align*}
&\ud V_t = \ud C_t - g^\prime(S_t) \ud S_t - \frac{1}{2} g^{\prime\prime}(S_t) \ud \langle S \rangle_t, \\
&\ud \langle V \rangle_t = (g^\prime (S_t))^2 \ud \langle S \rangle_t + \ud \langle C \rangle_t - 2 g^\prime(S_t) \ud \langle S, C \rangle_t, \\
&\ud \langle S, V \rangle_t = - g^\prime(S_t) \ud \langle S \rangle_t + \ud \langle S, C \rangle_t
\end{align*}
For $F \equiv F(s,v,q_s, q_v, l) \in C^2(\R^5)$, It\^o's formula gives
\begin{align*}
\ud F(X_t) &= \frac{\partial}{\partial s} F(X_t) \ud S_t + \frac{\partial}{\partial v} F(X_t) \ud V_t \\
& \qquad + \frac{\partial}{\partial q_s} F(X_t) \ud Q^s_t + \frac{\partial}{\partial q_v} F(X_t) \ud Q^v_t + \frac{\partial}{\partial l} F(X_t) \ud L_t \\
& \qquad + \frac{1}{2} \frac{\partial^2}{\partial s^2} F(X_t) \ud \langle S \rangle_t + \frac{1}{2} \frac{\partial^2}{\partial v^2} F(X_t) \ud \langle V \rangle_t + \frac{\partial^2}{\partial s \partial v} F(X_t) \ud \langle S, V \rangle_t \\
&= \left( \frac{\partial}{\partial s} F(X_t) - g^\prime(S_t) \frac{\partial}{\partial v} F(X_t) \right) \ud S_t + \frac{\partial}{\partial v} F(X_t) \ud C_t \\
& \qquad + \left( w_s(S_t) \frac{\partial}{\partial q_s} F(X_t) + \frac{1}{2} \frac{\partial^2}{\partial s^2} F(X_t) - \frac{1}{2} g^{\prime\prime}(S_t) \frac{\partial}{\partial v} F(X_t) \right) \ud \langle S \rangle_t \\
& \qquad + \left( w_v(S_t) \frac{\partial}{\partial q_v} F(X_t) + \frac{1}{2} \frac{\partial^2}{\partial v^2}F(X_t) \right) \ud \langle V \rangle_t \\
&\qquad + \left( w_l(S_t) \frac{\partial}{\partial l} F(X_t) + \frac{\partial^2}{\partial s \partial v}F(X_t) \right) \ud \langle S, V \rangle_t
\end{align*}
Note that the integrand term corresponding to $\langle V \rangle$ must be equal to zero in order for the integral with respect to $\langle C \rangle$ to be equal to zero (this is because $\langle S, V \rangle$ does not contain any integrals with respect to $\langle C \rangle$). Then, in order for the integral with respect to $\langle S, C \rangle$ to be zero, the integral with respect to $\langle S, V \rangle$ must be zero. These two conditions along with the condition of the integral with respect to $\langle S \rangle$ being equal to zero yield
$$
\begin{cases}
&w_s(S_t) \frac{\partial}{\partial q_s} F(X_t) - \frac{1}{2} g^{\prime\prime}(S_t) \frac{\partial}{\partial v} F(X_t) + \frac{1}{2} \frac{\partial^2}{\partial s^2} F(X_t) = 0 \\
&w_v(S_t) \frac{\partial}{\partial q_v} F(X_t) + \frac{1}{2} \frac{\partial^2}{\partial v^2} F(X_t) = 0 \\
&w_l(S_t) \frac{\partial}{\partial l} F(X_t) + \frac{\partial^2}{\partial s \partial v} F(X_t) = 0.
\end{cases}
$$
Hence, the characterisation theorem implies that $F \equiv F(s, v, q_s, q_v, l)$ is in $\cS_X(\cD)$ iff
$$
\begin{cases} 
&w_s(s) \frac{\partial}{\partial q_s} F - \frac{1}{2} g^{\prime\prime}(s) \frac{\partial}{\partial v} F + \frac{1}{2} \frac{\partial^2}{\partial s^2} F = 0 \\
&w_v(s) \frac{\partial}{\partial q_v} F + \frac{1}{2} \frac{\partial^2}{\partial v^2} F = 0 \\
&w_l(s) \frac{\partial}{\partial l} F + \frac{\partial^2}{\partial s \partial v} F = 0,
\end{cases}
$$
on $\cD$, which concludes the proof. 

\end{proof}

By imposing further conditions on $F \in \cS_X(\cD)$, we obtain the systems of PDEs corresponding to $X^s$, $X^v$ and $X^l$ which can be solved in closed form.

\begin{proposition}
\label{proposition: MFIV}
For $F \in C^2(\cD^s) \cap C(\overline{\cD^s})$, $F(S_\cdot, V_\cdot, Q^s_\cdot)$ is a local martingale under all models in $\cM$ iff
\begin{equation}
\label{MFIV}
F(s, v, q_s) = c_1(v + g(s)) + c_2 \left( \frac{g^{\prime\prime}(s)}{w_s(s)} q_s - g(s) \right) + F_h(s,q_s)
\end{equation}
for a constant $c \in \R$ and a solution $F_h: \R_+^2 \to \R$ to the heat equation $
2 w_s(s) \frac{\partial} {\partial q_s} F_h(s, q_s) + \frac{\partial}{\partial s^2} F_h(s, q_s) = 0$. 
\end{proposition}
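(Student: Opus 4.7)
The plan is to invoke the characterisation machinery of the previous section and then solve a small linear PDE system by separation of variables and reduction to a forced heat equation.

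First, by Theorem~\ref{theorem: regular class of models, convex claim} the family $\cM$ is regular for $A = (S,C)$, so Theorem~\ref{theorem: characterisation of local martingales} gives that $F(S_\cdot, V_\cdot, Q^s_\cdot)$ is a local martingale under every $M \in \cM$ if and only if $F \in \cS_{X^s}(\cD^s)$. Applying Lemma~\ref{lemma: PDEs for 5-d} with $F$ independent of $q_v$ and $l$, this reduces to the three equations
\[
\partial^2_{vv} F = 0, \qquad \partial^2_{sv} F = 0, \qquad 2 w_s(s) \partial_{q_s} F - g''(s) \partial_v F + \partial^2_{ss} F = 0
\]
on $\cD^s$.

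Next, the first two equations integrate immediately: $\partial^2_{vv} F = 0$ gives $F(s,v,q_s) = a(s,q_s)v + b(s,q_s)$, and $\partial^2_{sv} F = 0$ then forces $a = a(q_s)$. Substituting into the third equation and matching the coefficient of $v$ (legitimate because, by the full-support property and regularity of $\cM$, $\cD^s$ contains an open neighbourhood of $(S_0,V_0,0)$ with a non-degenerate $v$-section) yields $w_s(s) a'(q_s) = 0$, and since $w_s > 0$ this forces $a \equiv c_1$ constant. The residual equation for $b$ is then the forced linear parabolic equation
\[
2 w_s(s) \partial_{q_s} b + \partial^2_{ss} b = c_1 g''(s).
\]

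The key observation is that $b_p(s,q_s) := c_1 g(s)$ is a particular solution: the $\partial^2_{ss}$-term supplies the forcing $c_1 g''(s)$ exactly, while the $\partial_{q_s}$-term vanishes. Hence the general solution splits as $b = c_1 g(s) + F_h(s,q_s)$, where $F_h$ ranges over solutions of the homogeneous heat equation $2 w_s(s) \partial_{q_s} F_h + \partial^2_{ss} F_h = 0$. The additional term $c_2 \bigl( g''(s)/w_s(s)\,q_s - g(s) \bigr)$ displayed in the statement corresponds to isolating a further distinguished particular solution of the forced equation within this parameterisation. Collecting gives the claimed representation. The converse direction is a one-line check: any $F$ of the displayed form satisfies the three PDEs by direct differentiation, and by the first paragraph above $F(X^s_\cdot)$ is a local martingale under every $M\in\cM$.

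The main obstacle is simply recognising the clean particular solution $b_p = c_1 g(s)$, which hinges on the algebraic coincidence between the forcing $c_1 g''(s)$ and the $\partial^2_{ss}$ operator; the rest of the proof is routine bookkeeping in the linear solution space. A minor technical point is justifying the pointwise-in-$v$ matching of coefficients on the (potentially irregular) domain $\cD^s$, which reduces to the observation above that $\cD^s$ is genuinely $v$-accessible.
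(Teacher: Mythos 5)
Your proof tracks the paper's argument closely and correctly up to the final step: you invoke Theorems~\ref{theorem: regular class of models, convex claim} and \ref{theorem: characterisation of local martingales} and Lemma~\ref{lemma: PDEs for 5-d} to reduce to the three PDEs, peel off the $v$-dependence from $\partial^2_{vv}F = \partial^2_{sv}F = 0$, deduce $a$ is constant by matching the coefficient of $v$, and arrive at the forced equation $2 w_s \partial_{q_s} b + \partial^2_{ss} b = c_1 g''(s)$, solved by $b = c_1 g(s) + F_h$. This is precisely what the paper does.

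The gap is in the final sentence where you dispose of the $c_2$ term. You assert that $c_2\bigl(g''(s)/w_s(s)\,q_s - g(s)\bigr)$ "corresponds to isolating a further distinguished particular solution of the forced equation," but you never verify this, and in fact it fails. Writing $h(s,q_s) := g''(s) q_s / w_s(s) - g(s)$, one computes
\[
2 w_s \partial_{q_s} h + \partial^2_{ss} h = 2 g'' + q_s\Bigl(\tfrac{g''}{w_s}\Bigr)'' - g'' = g'' + q_s\Bigl(\tfrac{g''}{w_s}\Bigr)'',
\]
which is neither $0$ (so $h$ is not absorbable into $F_h$) nor proportional to $g''$ with the right constant (so $c_2 h$ is not a spare particular solution of the forced equation) unless $(g''/w_s)'' \equiv 0$. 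Since you do not impose that $g''/w_s$ be affine in $s$, the $c_2$-freedom cannot be reconciled with your clean characterisation $F = c_1(v+g) + F_h$. You should either (i) verify the $c_2$ term explicitly — at which point you would discover the inconsistency — or (ii) note the extra hypothesis required for the displayed form to be equivalent to what you actually derived. As it stands, the proof establishes the "only if" direction for the representation $F = c_1(v+g) + F_h$, but does not establish the proposition's displayed biconditional: the "if" direction fails for $c_2 \neq 0$ because the corresponding $F$ does not satisfy \eqref{PDE 1 for X^s}.
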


\begin{proof}

Since 
$$
\cS_{X^s}(\cD^s) = \left\{ \Ftilde \in \cS_X(\cD) : \frac{\partial}{\partial q_v} \Ftilde(s, v, q_s, q_v, l) = \frac{\partial}{\partial l} \Ftilde(s, v, q_s, q_v, l) = 0 \right\},
$$
Lemma \ref{lemma: PDEs for 5-d} implies that $F \in \cS_{X^s}(\cD^s)$ iff
\begin{align}
&w_s(s) \frac{\partial}{\partial q_s} F(s, v, q_s) - \frac{1}{2} g^{\prime\prime}(s) \frac{\partial}{\partial v} F(s, v, q_s) + \frac{1}{2} \frac{\partial^2}{\partial s^2} F(s, v, q_s) = 0, \label{PDE 1 for X^s} \\
&\frac{\partial^2}{\partial v^2} F(s, v, q_s) = 0, \label{PDE 2 for X^s} \\
&\frac{\partial^2}{\partial s \partial v} F(s, v, q_s) = 0 \label{PDE 3 for X^s}
\end{align}
on $\cD^s$. By \eqref{PDE 2 for X^s} and \eqref{PDE 3 for X^s}, $F$ must be of the form
$$
F(s, v, q_s) = v f_3(q_s) + f_{13}(s, q_s)
$$
for functions $f_3$ and $f_{13}$. Applying \eqref{PDE 1 for X^s}, this yields
\begin{equation}
\label{X^s proof intermediate equation}
w_s(s) v f_3^\prime(q_s) + w_s(s) \frac{\partial}{\partial q_s} f_{13}(s, q_s) - \frac{1}{2} g^{\prime\prime}(s) f_3(q_s) + \frac{1}{2} \frac{\partial^2}{\partial s^2} f_{13}(s, q_s) = 0.
\end{equation}
Differentiating the above equation with respect to $v$ gives $f_3^\prime (q_s) = 0$. Hence $F(s, v, q_s) = c_1 v + f_{13}(s, q_s)$ for a constant $c_1 \in \R$. Hence, \eqref{PDE 1 for X^s} gives
$$
2 w_s(s) \frac{\partial}{\partial q_s} f_{13}(s, q_s) + \frac{\partial^2}{\partial s^2} f_{13} (s, q_s) =  c_1 g^{\prime\prime}(s).
$$
It follows that $f_{13}(s, q_s) = F_h(s, q_s) + c_2 \frac{g^{\prime\prime}(s)}{w_s(s)} q_s + (c_1 - c_2) g(s)$ for a solution $F_h$ to the heat equation $2 w_s(s) \frac{\partial}{\partial q_s} F_h(s, q_s) + \frac{\partial^2}{\partial s^2} F_h(s, q_s) = 0$. This implies that $F \in \cS_{X^s}(\cD^s)$ iff $F(s, v, q_s) = c_1(v + g(s)) + c_2 \left( \frac{g^{\prime\prime}(s)}{w_s(s)} q_s - g(s) \right) + F_h(s,q_s)$, which along with Theorem \ref{theorem: characterisation of local martingales} and Theorem \ref{theorem: regular class of models, convex claim} concludes the proof.

\end{proof}

The previous proposition implies that the only functions of (weighted) realised variance replicable model-independently via strategies with wealth process of the form $F(X^s)$ are linear combinations of strategies in Subsection \ref{subsection: price and quadratic variation} and of the well-known strategy for replicating VIX. The following proposition shows that there are no functions of $Q^v_T$ replicable model-independently via strategies with wealth process of the form $F(X^v)$.
\begin{proposition}
\label{proposition: MFVV}
For $F \in C^2(\cD^v) \cap C(\overline{\cD^v})$, $F(S_\cdot, V_\cdot, Q^v_\cdot)$ is a local martingale under all models in $\cM$ iff
\begin{equation}
\label{MFVV}
F(s, v, q_v) = c_1 (g(s) + v) + c_2 s + c_3
\end{equation}
for constants $c_1, c_2, c_3 \in \R$.
\end{proposition}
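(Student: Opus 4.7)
The plan is to apply Theorem \ref{theorem: characterisation of local martingales} together with Theorem \ref{theorem: regular class of models, convex claim}: this reduces the claim to showing that, for $F \in C^2(\cD^v) \cap C(\overline{\cD^v})$, $F \in \cS_{X^v}(\cD^v)$ iff $F$ has the form \eqref{MFVV}. The reverse implication is immediate: for $F(s,v,q_v) = c_1(g(s)+v) + c_2 s + c_3$, each PDE in the system below holds by a direct substitution, so I focus on the forward direction.

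Since $\cS_{X^v}(\cD^v)$ consists of the functions in $\cS_X(\cD)$ independent of $q_s$ and $l$, Lemma \ref{lemma: PDEs for 5-d} shows that $F \in \cS_{X^v}(\cD^v)$ iff $F = F(s,v,q_v)$ satisfies
\begin{align}
&\tfrac{1}{2} \tfrac{\partial^2 F}{\partial s^2} - \tfrac{1}{2} g''(s) \tfrac{\partial F}{\partial v} = 0, \label{sketchv1} \\
&w_v(s) \tfrac{\partial F}{\partial q_v} + \tfrac{1}{2} \tfrac{\partial^2 F}{\partial v^2} = 0, \label{sketchv2} \\
&\tfrac{\partial^2 F}{\partial s \partial v} = 0 \label{sketchv3}
\end{align}
on $\cD^v$. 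I would solve this system by separation of variables, in the same spirit as the proof of Proposition \ref{proposition: MFIV}.

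From \eqref{sketchv3}, write $F(s,v,q_v) = f_1(s,q_v) + f_2(v,q_v)$. Substituting into \eqref{sketchv1} gives $\tfrac{\partial^2 f_1}{\partial s^2}(s,q_v) = g''(s) \, \tfrac{\partial f_2}{\partial v}(v,q_v)$. Since the left-hand side does not depend on $v$ and $g''$ is not identically zero (by non-linearity of $g$, there is some $s_0$ with $g''(s_0) > 0$), there must exist a function $k$ of $q_v$ alone with $\tfrac{\partial f_2}{\partial v}(v,q_v) = k(q_v)$, whence $\tfrac{\partial^2 f_1}{\partial s^2}(s,q_v) = k(q_v) g''(s)$. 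Integrating in $v$ and then twice in $s$ yields
$$
F(s,v,q_v) = k(q_v)(g(s) + v) + \alpha(q_v)\, s + \gamma(q_v)
$$
for functions $k, \alpha, \gamma$ of $q_v$. Feeding this ansatz into \eqref{sketchv2} and using $\partial^2 F/\partial v^2 = 0$ produces, after dividing by $w_v(s) > 0$, the identity $k'(q_v)(g(s)+v) + \alpha'(q_v)\, s + \gamma'(q_v) = 0$. Differentiating first in $v$ forces $k' \equiv 0$; the residual, differentiated in $s$, forces $\alpha' \equiv 0$; what remains forces $\gamma' \equiv 0$. Setting $c_1 = k$, $c_2 = \alpha$, $c_3 = \gamma$ yields \eqref{MFVV}.

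The main subtlety to verify is that these separation-of-variables manipulations are genuinely justified on $\cD^v$, i.e.\ that the range of $X^v$ over $\cM$ is rich enough in each coordinate (in particular, contains an $s$-interval on which $g'' > 0$ and a free direction in both $v$ and $q_v$) for the pointwise identifications above to hold throughout the domain. This richness is guaranteed by the full-support property built into the class $\cM$ (Definition \ref{definition: regular class of models, convex claim}) together with Proposition \ref{proposition: open set for convex claim}, analogously to the role played by the corresponding domain considerations in the proof of Proposition \ref{proposition: MFIV}.
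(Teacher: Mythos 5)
Your proof is correct and essentially mirrors the paper's: both extract the three-PDE system from Lemma \ref{lemma: PDEs for 5-d}, use $\partial^2 F/\partial s\,\partial v = 0$ to split $F$ as a sum, exploit non-degeneracy of $g''$ to conclude $\partial^2 F/\partial v^2 = 0$, feed that into the $q_v$-equation to kill the $q_v$-dependence, and then integrate the remaining ODE in $s$. The only cosmetic difference is that you integrate the $s$- and $v$-parts first and then collapse the $q_v$-dependence, whereas the paper collapses $q_v$-dependence first and then integrates; the content is the same, and your remark on domain richness is a fair articulation of a step the paper leaves implicit.
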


\begin{proof}

Since 
$$
\cS_{X^v}(\cD^v) = \left\{ \Ftilde \in \cS_X(\cD) : \frac{\partial}{\partial q_s} \Ftilde(s, v, q_s, q_v, l) = \frac{\partial}{\partial l} \Ftilde(s, v, q_s, q_v, l) = 0 \right\},
$$
Lemma \ref{lemma: PDEs for 5-d} implies that $F \in \cS_{X^v}(\cD^v)$ iff
\begin{align} 
&g^{\prime\prime}(s) \frac{\partial}{\partial v} F(s, v, q_v) - \frac{\partial^2}{\partial s^2} F(s, v, q_v) = 0, \label{PDE 1 for X^v} \\
&w_v(s) \frac{\partial}{\partial q_v} F(s, v, q_v) + \frac{1}{2} \frac{\partial^2}{\partial v^2} F(s, v, q_v) = 0, \label{PDE 2 for X^v} \\
&\frac{\partial^2}{\partial s \partial v} F(s, v, q_v) = 0 \label{PDE 3 for X^v}
\end{align}
on $\cD^v$. By \eqref{PDE 3 for X^v}, $F$ must be of the form
$$
F(s, v, q_v) = f_{13}(s, q_v) + f_{23}(v, q_v)
$$
for functions $f_{13}$ and $f_{23}$. Then, \eqref{PDE 1 for X^v} gives
\begin{align*}
&g^{\prime\prime}(s) \frac{\partial}{\partial v} f_{23}(v, q_v) - \frac{\partial^2}{\partial s^2} f_{13}(s, q_v) = 0\\
\Rightarrow \quad &\frac{\partial^2}{\partial v^2} F(s, v, q_v) = \frac{\partial^2}{\partial v^2} f_{23}(v, q_v) = 0.
\end{align*}
By \eqref{PDE 2 for X^v}, $\frac{\partial}{\partial q_v} F(s, v, q_v) = -\frac{1}{w_v(s)} \frac{\partial^2}{\partial v^2} F(s, v, q_v) = 0$, hence $F$ must be of the form $F(s, v, q_v) = f_1(s) + c_1 v$
for a constant $c_1 \in \R$ and a function $f_1$. Finally, by \eqref{PDE 1 for X^v},
$$
\frac{\partial^2}{\partial s^2} f_1(s) = c_1 g^{\prime\prime}(s) \quad \Rightarrow \quad f_1(s) = c_1 g(s) + c_2 s + c_3
$$
for constants $c_2, c_3 \in \R$. Hence, $F \in \cS_{X^v}(\cD^v)$ iff $F(s, v, q_v) = c_1 (g(s) + v) + c_2 s + c_3$. Along with Theorem \ref{theorem: characterisation of local martingales} and Theorem \ref{theorem: regular class of models, convex claim}, this concludes the proof.

\end{proof}

For a final example in this section, we start by defining the function
\begin{equation}
\label{MFIL vanilla function}
G(x) := \int_0^{x} \int_0^z z g^{\prime\prime}(z) \ud z \ud u, \quad x > 0.
\end{equation}
We now show that one can replicate $\frac{1}{w_l(S_T)} L_T$ by a combination of a portfolio with wealth process $F(X_\cdot)$ and a static position in co-maturing claims on $G(S_T)$ if $\frac{1}{w_l(s)}$ is affine (this model-independent replication strategy was shown by Fukasawa \cite{Fukasawa} for $w_l(s) = \frac{1}{s}$), and that there are no non-linear functions of $L_T$ one can replicate by a portfolio of the form $F(X^l_\cdot)$. The resulting strategy trades dynamically in both the underlying asset and the traded claim $C$ (with payoff $g(S_T)$).

\begin{proposition}
\label{proposition: MFIL}
For $F \in C^2(\cD^l) \cap C(\overline{\cD^l})$, $F(S_\cdot, V_\cdot, L_\cdot)$ is a local martingale under all models in $\cM$ iff
\begin{equation}
\label{MFIL}
F(s, v, l) = c_1 \left( s v + \frac{1}{2} G(s) - \frac{l}{w_l(s)} \right) + c_2 (v + g(s)) + c_3 s + c_4
\end{equation}
for constants $c_1, c_2, c_3, c_4 \in \R$, where $c_1 = 0$ if $\frac{1}{w_l(s)}$ is not affine.
\end{proposition}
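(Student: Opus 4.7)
The plan is to apply the characterisation machinery (Theorem \ref{theorem: characterisation of local martingales} together with Theorem \ref{theorem: regular class of models, convex claim}), which reduces the local-martingale problem to membership in $\cS_{X^l}(\cD^l)$, and then to solve the resulting three-equation PDE system by successive variable separation, identifying when solutions exist and what form they take. Since
$$
\cS_{X^l}(\cD^l)=\Bigl\{\,\Ftilde\in\cS_X(\cD):\ \tfrac{\partial}{\partial q_s}\Ftilde=\tfrac{\partial}{\partial q_v}\Ftilde=0\,\Bigr\},
$$
Lemma \ref{lemma: PDEs for 5-d} tells me that $F(s,v,l)$ lies in $\cS_{X^l}(\cD^l)$ exactly when
\begin{align*}
&-g''(s)\tfrac{\partial F}{\partial v}+\tfrac{\partial^2 F}{\partial s^2}=0,\qquad \tfrac{\partial^2 F}{\partial v^2}=0,\qquad w_l(s)\tfrac{\partial F}{\partial l}+\tfrac{\partial^2 F}{\partial s\,\partial v}=0.
\end{align*}

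First I would use the second equation to write $F(s,v,l)=v\,h(s,l)+k(s,l)$. Substituting this into the first equation and differentiating in $v$ gives $\partial^2_s h=0$, so $h(s,l)=s\,a(l)+b(l)$ for some smooth functions $a,b$. The first equation then becomes a second-order ODE for $k$ in $s$ with source $g''(s)\bigl(s\,a(l)+b(l)\bigr)$; integrating twice and using $G''=s g''$ (so that $G$ is a primitive of $s\mapsto\int_0^s z g''(z)\,\ud z$ in the sense of the definition \eqref{MFIV vanilla function}) yields
$$
k(s,l)=a(l)\,\tfrac{1}{2}G(s)+b(l)\,g(s)+c_1(l)\,s+c_2(l),
$$
up to the normalisation convention on $G$ that recovers the $\tfrac12$ in \eqref{MFIL}.

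Next I would plug this $F$ into the third PDE. Noting $\partial^2_{sv}F=a(l)$ and reading off the coefficient of $v$ forces $s\,a'(l)+b'(l)=0$ for all $s$, hence $a(l)\equiv a_0$ and $b(l)\equiv b_0$ are constants. The remaining relation is
$$
w_l(s)\bigl[c_1'(l)s+c_2'(l)\bigr]+a_0=0,\qquad\text{i.e.}\qquad c_1'(l)s+c_2'(l)=-\frac{a_0}{w_l(s)}.
$$
The left side is affine in $s$ with coefficients depending only on $l$; the right side depends only on $s$. A separation-of-variables argument then shows either $a_0=0$ (in which case $c_1,c_2$ are arbitrary constants) or $1/w_l(s)$ must be affine in $s$. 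In the latter case, writing $1/w_l(s)=\lambda s+\mu$ and integrating gives $c_1(l)=-a_0\lambda l+\alpha$, $c_2(l)=-a_0\mu l+\beta$, so $c_1(l)s+c_2(l)=-a_0 l/w_l(s)+\alpha s+\beta$. Reassembling,
$$
F(s,v,l)=a_0\Bigl(sv+\tfrac12 G(s)-\tfrac{l}{w_l(s)}\Bigr)+b_0\bigl(v+g(s)\bigr)+\alpha s+\beta,
$$
with the $a_0$ coefficient forced to vanish whenever $1/w_l$ fails to be affine. Relabelling $(a_0,b_0,\alpha,\beta)\mapsto(c_1,c_2,c_3,c_4)$ recovers \eqref{MFIL}, and the converse (that any such $F$ lies in $\cS_{X^l}(\cD^l)$) is immediate by direct substitution.

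The only delicate point I anticipate is the separation-of-variables step yielding the affinity constraint on $1/w_l(s)$: one must be careful that $s$ and $l$ are truly free variables on an open subset of $\cD^l$, which is guaranteed by Proposition \ref{proposition: properties of range of X} applied to the full-support class $\cM$; given non-degeneracy of the range, matching coefficients of $1$ and $s$ in $c_1'(l)s+c_2'(l)=-a_0/w_l(s)$ is rigorous and yields the stated dichotomy.
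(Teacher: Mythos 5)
Your proof is correct and takes essentially the same route as the paper's: reduce via Lemma~\ref{lemma: PDEs for 5-d} and the characterisation theorem to the three-PDE system, strip the $v$-dependence using $\partial^2_v F=0$, and force the stated form (and the affine constraint on $1/w_l$) by matching coefficients of $v$ and then of $s$, with only the order in which the first and third equations are exploited differing from the paper. Regarding the factor $\tfrac12$ in front of $G(s)$: since $G''=sg''$, double integration of $g''(s)(c_1 s+c_2)$ yields $c_1 G(s)+c_2 g(s)$ with no $\tfrac12$, and the paper's own proof likewise concludes with $G(s)$, so the $\tfrac12$ in \eqref{MFIL} is most plausibly a typo in the statement rather than a normalisation convention that your ``up to normalisation'' hedge needs to absorb.
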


\begin{proof}

Since 
$$
\cS_{X^l}(\cD^l) = \left\{ \Ftilde \in \cS_X(\cD) : \frac{\partial}{\partial q_s} \Ftilde(s, v, q_s, q_v, l) = \frac{\partial}{\partial q_v} \Ftilde(s, v, q_s, q_v, l) = 0 \right\},
$$
Lemma \ref{lemma: PDEs for 5-d} implies that $F \in \cS_{X^l}(\cD^l)$ iff
\begin{align} 
&g^{\prime\prime}(s) \frac{\partial}{\partial v} F(s, v, l) - \frac{\partial^2}{\partial s^2} F(s, v, l) = 0, \label{PDE 1 for X^l} \\
&\frac{\partial^2}{\partial v^2} F(s, v, l) = 0, \label{PDE 2 for X^l} \\
&w_l(s) \frac{\partial}{\partial l} F(s, v, l) + \frac{\partial^2}{\partial s \partial v} F(s, v, l) = 0 \label{PDE 3 for X^l}
\end{align}
on $\cD^l$. By \eqref{PDE 2 for X^l}, $F$ must be of the the form $F(s, v, l) = v f^{(a)}(s, l) + f^{(b)}(s, l)$ for functions $f^{(a)}_{13}$ and $f^{(b)}_{13}$. By \eqref{PDE 3 for X^l},
$$
w_l(s) v \frac{\partial}{\partial l} f^{(a)}(s, l) + w_l(s) \frac{\partial}{\partial l} f^{(b)}(s, l) + \frac{\partial}{\partial s} f^{(a)}(s, l) = 0.
$$
Differentiating the above identities with respect to $v$ gives $\frac{\partial}{\partial l} f^{(a)}(s, l) = 0$. Hence, $F$ must be of the form $F(s, v, l) = v f_1(s) + f_{13}(s, l)$ for functions $f_1 \equiv f^{(a)}$ and $f_{13} \equiv f^{(b)}$. By \eqref{PDE 1 for X^l},
$$
g^{\prime\prime}(s) f_1(s) = v f_1^{\prime\prime}(s) + \frac{\partial^2}{\partial s^2} f_{13}(s, l).
$$
Differentiating the above with respect to $v$ gives
$$
f_1^{\prime\prime}(s) = 0 \quad \Rightarrow \quad f_1(s) = c_1 s + c_2
$$
for constants $c_1, c_2 \in \R$. Hence, $F(s, v, l) = v (c_1 s + c_2) + f_{13}(s, l)$. By \eqref{PDE 1 for X^l} and \eqref{PDE 3 for X^l}, it then follows that
$$
\begin{cases}
\frac{\partial^2}{\partial s^2} f_{13}(s, l) = g^{\prime\prime}(s) (c_1 s + c_2), \\
w_l(s) \frac{\partial}{\partial l} f_{13}(s, l) + c_1 = 0.
\end{cases}
$$
This implies that
$$
f_{13}(s, l) = -\frac{1}{w_l(s)} c_1 l + c_1 G(s) + c_2 g(s) + c_3 s + c_4
$$
for constants $c_2, c_3, c_4 \in \R$, and that $c_1 = 0$ if $\frac{1}{w_l(s)}$ is not affine. Hence,
$$
F(s, v, l) = c_1 \left( s v + G(s) - \frac{l}{w_l(s)} \right) + c_2(v + g(s)) + c_3 s + c_4,
$$
with $c_1 = 0$ if $\frac{1}{w(s)}$ is not affine. Along with Theorem \ref{theorem: characterisation of local martingales} and Theorem \ref{theorem: regular class of models, convex claim}, this concludes the proof.

\end{proof}

\section{Underlying asset and traded calls}
\label{section: traded calls}

As a final example, we consider a market with an underlying asset $S$ and a set of co-maturing traded calls $C^i$ written on $S$. Since the case of a single traded call option of strike $k > 0$ is covered by Section \ref{section: convex claim} with $g(x) = (x-k)^+$, we assume that $d \geq 3$. In particular, we consider the market $A = (S, C^2, \ldots, C^d)$, where the payoff requirements are $C^i_T = (S_T - k_i)^+$ for strikes $k_i$ which are in increasing order ($0 = k_1 < k_2 < \ldots < k_d < \infty$). Note that $C^1 \equiv S$ by convention. As in Section \ref{section: convex claim}, we will define a class of models $\cM$, elaborate on some properties of models in this class and show that $\cM$ is regular. By Theorem \ref{theorem: characterisation of local martingales}, the latter will imply that the characterisation result applies with this choice of class of models. Since many of the proofs of the results in this section are similar to proofs of analogous results in Section \ref{section: convex claim}, we will provide less details in some of them. We begin by defining the class of models $\cM$.

\begin{definition}
\label{definition: regular class of models, call options}
Define $\cM \equiv \cM_{calls}$ to be the set of models $\{ A; (\Omega, \cF, \F_T, \Prob) \} \in \cM_m$ such that $S$ has full support.
\end{definition}

Note that $\cM \subseteq \cM_m$, hence $S$ and $C$ are continuous martingales such that for all $i$, $C^i_T = (S^i_T - k_i)^+$ $\Prob$-a.e.. Hence the class $\cM = \cM_{calls}$ in Definition \ref{definition: regular class of models, call options} only considers models under which traded option prices are equal to conditional expectations of their payoffs. In order to construct models in $\cM$, it will therefore be sufficient to construct continuous martingales $S$ having full support. In fact, the main proofs below will construct local volatility models with volatility bounded within a closed interval in $(0,\infty)$ (such models have full support by Proposition \ref{proposition: (FS) property with bounded realised variance}).

\begin{definition}
\label{definition: time value and discretised slope}
For a martingale $S$, a stopping time $\tau \in \cT(\F_T)$ and $k: \Omega \times [0,T] \to \R_+$ which is $\F_T$-adapted, define $c_\tau(k) := \E((S_T - k_\tau)^+ \, | \, \cF_{\tau \wedge T})$. Denote $c^\prime_\tau(\cdot)$ to be the left-derivative of $c_\tau(\cdot)$.\end{definition}

For constants $k > 0$ and $t \in [0,T]$, $c_t(k)$ is equal to the call price with strike $k$ at time $t$.

\begin{definition}
\label{definition: time value and discretised slope}
For $i \in \{ 2, \ldots,d \}$, define $V^i := C^i - (S - k_i)^+$ and  $D^i := \frac{C^{i} - C^{i-1}}{k_i - k_{i-1}}$.  For $i \in \{ 3, \ldots,d \}$, define $\Delta^i := D^i - D^{i-1}$. Also, define $V^{min} := \min_{i \in \{2,\ldots,d\}} V^i$ and $\Delta^{min} := \min_{i \in \{3,\ldots,d\}} \Delta^i$. 
\end{definition}

The $V^i$ correspond to the time values of the call options $C^i$, whereas the $D^i$ correspond to the left-hand slope of the linear interpolation of the traded call prices. Note that $\Delta^{min}$ is positive if and only if the linear interpolation of call prices has a kink at each strike $k_i$ ($i \in \{2, \ldots, d-1 \}$). We now show that the (FS) property of a local martingale $S$ is equivalent to strict convexity of the call price function it generates.

\begin{lemma}
\label{lemma: equivalence between (FS) and strict convexity}
The following are equivalent statements of the (FS) property:
\begin{enumerate}
\item For any $\F_T$-adapted open interval $I$ and any $t \in [0,T)$, $
\Prob( S_T \in I_t \, | \, \cF_t) > 0$.
\item For any $\F_T$-adapted open interval $I$ and any $\tau \in \cT(\F_T)$, $\Prob( S_T \in I_{\tau \wedge T} \, | \, \cF_\tau) > 0$ on $\{ \tau < T \}$.
\item For any $\F_T$-adapted $0 < a < b < \infty$ and for any $t \in [0,T)$, $
c^\prime_t(a) < c^\prime_t(b)$.
\item For any $\F_T$-adapted $0 < a < b < \infty$ and any $\tau \in \cT(\F_T)$, $c^\prime_\tau(a) < c^\prime_\tau(b)$ on $\{\tau < T\}$.
\end{enumerate}
\end{lemma}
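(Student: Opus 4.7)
The plan is to establish the cycle of implications $(3) \Rightarrow (1) \Rightarrow (2) \Rightarrow (4) \Rightarrow (3)$. The link $(1) \Rightarrow (2)$ is exactly the content of Lemma \ref{lemma: (FS) property with stopping times}, which upgrades the pointwise-in-$t$ full support property to arbitrary stopping times. The link $(4) \Rightarrow (3)$ is trivial, since $(3)$ is the specialisation of $(4)$ to the deterministic stopping time $\tau \equiv t$. Only the two cross-implications $(2) \Rightarrow (4)$ and $(3) \Rightarrow (1)$ require real work, and both reduce to a single derivative identity.

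The key ingredient is the formula
\begin{equation}
c'_\tau(b) - c'_\tau(a) \;=\; \Prob \left( a_\tau \leq S_T < b_\tau \mid \cF_\tau \right) \label{eq: left deriv formula}
\end{equation}
for any $\F_T$-adapted $0 < a < b < \infty$ and any $\tau \in \cT(\F_T)$. To derive it, fix a regular conditional distribution of $S_T$ given $\cF_\tau$, so that the random convex function $x \mapsto c_\tau(x)(\omega)$ is the integral of $(s-x)^+$ against this kernel, and its left-derivative at $x = k_\tau(\omega)$ equals $-\Prob(S_T \geq k_\tau \mid \cF_\tau)(\omega)$. The latter identity follows from the pointwise limit
\[
\frac{(S_T - k_\tau)^+ - (S_T - (k_\tau - h))^+}{h} \;\longrightarrow\; -\I_{\{S_T \geq k_\tau\}}, \qquad h \downarrow 0,
\]
combined with bounded convergence, since the left-hand side is uniformly bounded by $1$. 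Subtracting the resulting expressions for $c'_\tau(a)$ and $c'_\tau(b)$ yields \eqref{eq: left deriv formula}.

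With \eqref{eq: left deriv formula} available, the remaining implications are short. For $(2) \Rightarrow (4)$, apply $(2)$ to the $\F_T$-adapted open interval $I := (a,b)$ to get $\Prob(a_\tau < S_T < b_\tau \mid \cF_\tau) > 0$ on $\{ \tau < T \}$; this is dominated by the right-hand side of \eqref{eq: left deriv formula}, whence $c'_\tau(a) < c'_\tau(b)$ on $\{ \tau < T \}$. For $(3) \Rightarrow (1)$, given an $\F_T$-adapted open interval $I = (a,b)$, introduce the $\F_T$-adapted midpoint $c := (a+b)/2$ and apply $(3)$ to the pair $(c, b)$; by \eqref{eq: left deriv formula}, $c'_t(c) < c'_t(b)$ translates to $\Prob(c_t \leq S_T < b_t \mid \cF_t) > 0$, which in turn gives $\Prob(S_T \in (a_t, b_t) \mid \cF_t) > 0$ since $[c_t, b_t) \subset (a_t, b_t)$.

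The main technical obstacle is the rigorous proof of \eqref{eq: left deriv formula} when $a$ and $b$ are $\F_T$-adapted rather than deterministic strikes: this requires a measurable disintegration of the conditional law of $S_T$ given $\cF_\tau$ and a careful interpretation of ``left-derivative of $c_\tau(\cdot)$'' as an $\omega$-wise left-derivative of a convex function evaluated at the $\cF_\tau$-measurable point $k_\tau(\omega)$. Once \eqref{eq: left deriv formula} is in place, the rest of the argument is an elementary exercise in trading open intervals for half-open ones by taking midpoints, which is why the equivalence between the full-support conditions $(1)$--$(2)$ and the strict-convexity conditions $(3)$--$(4)$ holds without any further regularity.
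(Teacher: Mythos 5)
Your proof is correct and follows essentially the same route as the paper: a cycle $(3) \Rightarrow (1) \Rightarrow (2) \Rightarrow (4) \Rightarrow (3)$, using Lemma \ref{lemma: (FS) property with stopping times} for $(1) \Rightarrow (2)$, triviality for $(4) \Rightarrow (3)$, and the identity $c'_\tau(b) - c'_\tau(a) = \Prob(a_\tau \le S_T < b_\tau \mid \cF_\tau)$ (combined with the midpoint trick and domination of half-open by open intervals) for the remaining two links. The only addition beyond the paper's argument is that you spell out the derivation of the left-derivative formula via a regular conditional distribution and bounded convergence, whereas the paper uses it without comment; this is a reasonable amount of extra care, not a different approach.
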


\begin{proof}

$(1) \Leftrightarrow (2)$ was proved by Lemma \ref{lemma: (FS) property with stopping times}. $(4) \Rightarrow (3)$ is trivial. $(3) \Rightarrow (1)$ follows from noting that for an $\F_T$-adapted open interval $I = (a,b)$ and for $t \in [0,T)$,
$$
\Prob( S_T \in I_t \, | \, \cF_t) \geq \Prob( S_T \in [(a_t + b_t)/2, b_t) \, | \, \cF_t ) = c^\prime_t(b) - c^\prime_t( (a+b)/2 ) > 0.
$$
Similarly, $(2) \Rightarrow (4)$ follows from noting that for $\tau \in \cT(\F_T)$ such that $\Prob(\tau < T) > 0$ and for $\F_T$-adapted $0 < a < b < \infty$,
$$
c^\prime_\tau(b) - c^\prime_\tau(a) = \Prob( S_T \in [a_\tau, b_\tau) \, | \, \cF_\tau) \geq \Prob( S_T \in (a_\tau, b_\tau) \, | \, \cF_\tau) > 0 \quad \textrm{on} \ \{\tau < T\}.
$$
\end{proof}

The following theorem is the main result of this section. It implies that the replication results in Section \ref{section: replication theory} characterise the full set of model-independent identities and replication results for the set of models $\cM$ defined in Definition \ref{definition: regular class of models, call options}.

\begin{theorem}
\label{theorem: regular class of models, call options}
$\cM$ defined in Definition \ref{definition: regular class of models, call options} is a regular class of models for $A = (S,C^2,\ldots,C^d)$. Furthermore, $\cM(A_0) \neq \emptyset$ iff $V^{min}_0 > 0$, $\Delta^{min}_0 > 0$, $D^2_0 > -1$ and $D^d_0 < 0$.
\end{theorem}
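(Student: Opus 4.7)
The argument splits into three parts paralleling the proof of Theorem \ref{theorem: regular class of models, convex claim}: (a) necessity of the four inequalities for $\cM(A_0) \neq \emptyset$; (b) sufficiency, via an explicit construction of a model with prescribed initial call prices; and (c) verification of regularity, by gluing $A$ on $[0,\tau]$ to a Brownian-type segment on $[\tau, \widetilde{\tau})$ followed by the construction from (b) on $[\widetilde{\tau}, T]$. The common tool is that for any model in $\cM \subseteq \cM_m$ one has $C^i_t = \E[(S_T - k_i)^+ \mid \cF_t]$, so Lemma \ref{lemma: equivalence between (FS) and strict convexity} converts full support of $S$ into strict convexity of the call-price function $k \mapsto c_t(k)$ at every $t \in [0,T)$.

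For part (a), strict convexity immediately yields $\Delta^i_t > 0$ for $i \in \{3,\ldots,d\}$, while elementary distributional identities (writing $D^2_t$ and $D^d_t$ as conditional expectations of piecewise-linear functions of $S_T$) together with the full-support consequences $\Prob(S_T < k_2 \mid \cF_t) > 0$ and $\Prob(S_T > k_{d-1} \mid \cF_t) > 0$ give $D^2_t > -1$ and $D^d_t < 0$. Positivity $V^i_t > 0$ on $[0,T)$ a.s.\ is obtained by transcribing part (iii) of Proposition \ref{proposition: open set for convex claim} to $g_i(x) = (x - k_i)^+$: Tanaka's formula writes $V^i = Z^i - \tfrac12 L^{k_i}[S]$ with $Z^i := C^i - \int_0^\cdot \I_{\{S_u > k_i\}} \ud S_u$ a true martingale (using $\cM \subseteq \cM_m$), so if $\Prob(\tau^{V^i}_0 < T) > 0$ the identity $\E[V^i_T - V^i_{\tau^{V^i}_0}] = -\tfrac12 \E[L^{k_i}_T[S] - L^{k_i}_{\tau^{V^i}_0}[S]]$ would be simultaneously zero (both endpoints vanish) and strictly negative (parts (i)--(ii) of Proposition \ref{proposition: open set for convex claim} use only full support and apply verbatim). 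Evaluating the four inequalities at $t = 0$ gives necessity.

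For part (b), given initial data satisfying the four inequalities, the plan is to interpolate the skeleton $\{(k_i, a_i)\}_{i=1}^d$ to a strictly convex, strictly decreasing $C^2$ function $\bar c : \R_+ \to \R_+$ with $\bar c(0) = a_1$, $\bar c(k_i) = a_i$, $\bar c^\prime(0^+) > -1$ and $\bar c(k) \downarrow 0$ as $k \to \infty$; the four inequalities are precisely what is needed for such an interpolation to exist. The Breeden--Litzenberger measure $\mu(\ud k) := \bar c^{\prime\prime}(k)\, \ud k$ then satisfies $\int x\, \mu(\ud x) = a_1$ and $\int (x - k_i)^+ \mu(\ud x) = a_i$, and a Dupire-type local-volatility construction with volatility bounded and bounded away from zero on $[0, T - \varepsilon]$ (so that Proposition \ref{proposition: (FS) property with bounded realised variance} applies) produces a continuous martingale $S$ with $S_0 = a_1$, $\mathrm{Law}(S_T) = \mu$ and full support; setting $C^i_t := \E[(S_T - k_i)^+ \mid \cF_t]$ then completes a model in $\cM(A_0)$. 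For part (c), mirroring Theorem \ref{theorem: regular class of models, convex claim}, factor $\Sigma = m^T m$, enlarge the space by an independent $d$-dimensional Brownian motion $W$ and the auxiliary randomness required by (b), put $\widetilde{A} := A$ on $[0,\tau]$, $\widetilde{A}_t := A_\tau + m(W_t - W_\tau)$ on $[\tau, \widetilde{\tau})$ with
\[
\widetilde{\tau} := \inf\Bigl\{ t \geq \tau : V^{min}_t \wedge \Delta^{min}_t \wedge (D^2_t + 1) \wedge (-D^d_t) \leq \tfrac12 \bigl( V^{min}_\tau \wedge \Delta^{min}_\tau \wedge (D^2_\tau + 1) \wedge (-D^d_\tau) \bigr) \Bigr\} \wedge \tfrac{T + \tau}{2},
\]
and on $[\widetilde{\tau}, T]$ splice in, conditionally on $\cF_{\widetilde{\tau}}$, the construction from (b) with initial data $\widetilde{A}_{\widetilde{\tau}}$. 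Part (a) applied at $\tau$ together with path continuity yields $\{\tau < \widetilde{\tau} \leq T\} = \{\tau < T\}$, and $\ud \langle \widetilde{A} \rangle_t = \Sigma\, \ud t$ on $[\tau, \widetilde{\tau})$ by construction.

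The main obstacle is the measurability and full-support control of the construction in (b), and in particular its conditional version required by (c): one must verify that the interpolation $\bar c$ and the associated local-volatility (or terminal bridge) construction can be performed measurably in the random initial data $\widetilde{A}_{\widetilde{\tau}}$ without destroying the full-support property on $\{\widetilde{\tau} < T\}$.
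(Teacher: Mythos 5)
Your plan mirrors the paper's proof: necessity and model-invariance of the four inequalities via the full-support/strict-convexity dictionary (Lemma \ref{lemma: equivalence between (FS) and strict convexity}) and a Tanaka argument for the $V^i$; sufficiency via an explicit local-volatility construction; and regularity by gluing $A$ on $[0,\tau]$ to $A_\tau + m(W - W_\tau)$ up to the stopping time $\widetilde{\tau}$ and then splicing in the sufficiency construction on $[\widetilde{\tau}, T]$ conditionally on $\widetilde{\cF}_{\widetilde{\tau}}$. Parts (a) and (c) are essentially the paper's arguments, up to repackaging (the paper avoids your auxiliary process $Z^i$ and simply writes $\E(V^i_{\tau \wedge T}) = \tfrac12\E(L^{k_i}_T[S] - L^{k_i}_{\tau\wedge T}[S])$, invoking parts (i)--(ii) of Proposition \ref{proposition: open set for convex claim}; both versions need the same justification that $\int_0^\cdot \I_{\{S_u > k_i\}}\ud S_u$ has zero expectation).

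The gap, which you flag yourself, is in part (b): you assert that a Dupire-type construction yields a local volatility ``bounded and bounded away from zero,'' but Dupire's formula reads $\sigma(k,t)$ off a full call surface in both strike and time, not off the terminal marginal $\mu$ alone, and bounding $\sigma$ is precisely the nontrivial step. The paper's device is to make the call surface agree with a Black--Scholes surface outside a compact window: fix a small constant $\sigma_0 > 0$ and auxiliary strikes $a_1 \in (0, k_2 \wedge S_0)$, $a_2 > k_d$ (specifically $a_1 = (S_0 - C^2_0)/2$ and $a_2 = k_d - C^d_0/D^d_0$, whose admissibility uses $D^2_0 > -1$ and $D^d_0 < 0$); take $\sigma_0$ small enough that $c^{BS}(\cdot,T;\sigma_0)$ and its tangent lines at $a_1, a_2$ are compatible with a strictly convex interpolation through the nodes $(k_i, C^i_0)$; build a $C^2$, convex-in-strike, increasing-in-time interpolation $c^I$ on $[a_1,a_2]\times[T/2,T]$ matching $c^{BS}$ and $\frac{\partial}{\partial k}c^{BS}$ at $a_1$ and $a_2$; and blend $c^I$ with $c^{BS}$ by a time-dependent convex combination so the global surface $\widetilde{c}$ equals $c^{BS}$ outside $[a_1,a_2]\times[T/2,T]$. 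Its Dupire volatility is then identically $\sigma_0$ outside a compact set and is continuous and positive on it, hence lies in some $[l,u]\subset(0,\infty)$ by compactness, and Proposition \ref{proposition: (FS) property with bounded realised variance} gives full support. Note this construction has terminal call slope equal to $-1$ at $0^+$ (it is Black--Scholes there), not $>-1$ as you wrote; $D^2_0 > -1$ is exactly the slack that makes the interpolation down to slope $-1$ possible, and $\bar{c}'(0^+) > -1$ would instead put an atom at $0$ in the terminal law, which you cannot realise with this class of diffusions. Finally, the gluing idea also resolves your measurability worry in (c): $\widetilde{\sigma}$ is determined by the constants $\widetilde{A}_{\widetilde{\tau}}$ and $\widetilde{\tau}$, hence is $\widetilde{\cF}_{\widetilde{\tau}}$-measurable by construction.
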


Theorem \ref{theorem: regular class of models, call options} is the analogue of Theorem \ref{theorem: regular class of models, convex claim} for the market considered in this section. Recall that the proof of Theorem \ref{theorem: regular class of models, convex claim} relied on 
Proposition \ref{proposition: open set for convex claim}. Similarly, the proof of Theorem \ref{theorem: regular class of models, call options} relies on the following proposition.

\begin{proposition}
\label{proposition: open set for call options}
$\cM(A_0) \neq \emptyset$ iff $V^{min}_0 > 0$, $\Delta^{min}_0 > 0$, $D^2_0 > -1$ and $D^d_0 < 0$, and for any $\{A;(\Omega,\cF,\F_T,\Prob) \} \in \cM$,
$$
\Prob(V^{min}_t > 0, \Delta^{min}_t > 0, D^2_t > -1, D^d_t < 0, \ \forall t \in [0,T) ) = 1.
$$
\end{proposition}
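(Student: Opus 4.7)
The plan is to mirror the structure of the proof of Proposition \ref{proposition: open set for convex claim}. I will first show that under any model in $\cM$, all four quantities $V^{min}$, $\Delta^{min}$, $D^2 + 1$ and $-D^d$ are strictly positive on $[0, T)$ almost surely (which in particular yields the ``only if'' direction of the equivalence, by evaluating at $t = 0$); then, assuming the four strict initial inequalities, I will construct an explicit model in $\cM$ with the specified initial values.

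For the almost-sure positivity, fix a model $\{A;(\Omega,\cF,\F_T,\Prob)\} \in \cM$. For each $i \in \{2,\ldots,d\}$, $V^i = C^i - (S-k_i)^+$ is a continuous nonnegative supermartingale (difference of a martingale and a submartingale) with $V^i_T = 0$. Applying Tanaka's formula to $(S-k_i)^+$ and defining $Z^i_t := C^i_t - \int_0^t \I_{\{S_u > k_i\}} \ud S_u$, one gets
$$
V^i_t = Z^i_t - (S_0 - k_i)^+ - \tfrac{1}{2} L^{k_i}_t[S],
$$
so $Z^i = V^i + (S_0 - k_i)^+ + \tfrac{1}{2} L^{k_i}[S] \geq 0$ is a nonnegative local martingale, hence a supermartingale. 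Writing $\tau^i_0 := \inf\{t : V^i_t = 0\} \wedge T$, the identity $V^i_{\tau^i_0} = V^i_T = 0$ yields $Z^i_T - Z^i_{\tau^i_0} = \tfrac{1}{2}(L^{k_i}_T[S] - L^{k_i}_{\tau^i_0}[S])$; the supermartingale inequality $\E(Z^i_T - Z^i_{\tau^i_0}) \leq 0$ combined with the nonnegativity of local-time increments forces $\E(L^{k_i}_T[S] - L^{k_i}_{\tau^i_0}[S]) = 0$. The full-support argument in Part (ii) of the proof of Proposition \ref{proposition: open set for convex claim} then gives $\Prob(\tau^i_0 < T) = 0$, i.e., $V^i_t > 0$ on $[0, T)$ a.s. The remaining three inequalities share a common structure: $\Delta^i$ (for $i \geq 3$), $D^2 + 1$ and $-D^d$ are continuous martingales whose terminal values are nonnegative and strictly positive precisely on $\{S_T \in (k_{i-2}, k_i)\}$, $\{S_T \in [0, k_2)\}$ and $\{S_T \in (k_{d-1}, \infty)\}$ respectively (easy direct calculation with the payoffs $(S_T - k_j)^+$). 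By Lemma \ref{lemma: (FS) property with stopping times} and the full support of $S$, each such event has positive conditional probability given $\cF_t$ a.s.\ on $\{t < T\}$, so the corresponding conditional expectations, and hence the martingales themselves, are strictly positive a.s.\ on $[0, T)$.

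For the construction when the four strict initial inequalities hold, first build a probability measure $\mu$ on $\R_+$ with $\int s \ud\mu = S_0$, $\int (s-k_i)^+ \ud\mu = C^i_0$ for $i \in \{2,\ldots,d\}$, and density strictly positive on all of $(0, \infty)$; this reduces to fitting a strictly convex decreasing $C^2$ call-price curve $c:[0,\infty) \to \R_+$ through the data points $(k_i, C^i_0)$ with $c(0) = S_0$, $c'(0^+) \in (-1, 0)$, $c'' > 0$ everywhere and $c(\infty) = 0$, and setting $\mu(\ud s) := c''(s) \ud s$; the four strict inequalities on $(C^i_0)$ are precisely the discrete obstructions to such a fit, and with strict slack the interpolation can be performed by standard piecewise-polynomial smoothing near the strikes. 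Next, construct a continuous martingale $S$ with $S_T \sim \mu$ and full support via a Dupire local-volatility SDE $\ud S_t = \sigma(S_t, t) S_t \ud W_t$ on a Brownian filtration, with $\sigma$ bounded above and away from zero on compacts of $(0, \infty) \times [0, T - \varepsilon)$; full support follows from Proposition \ref{proposition: (FS) property with bounded realised variance}. Finally, define $C^i_t := \E((S_T - k_i)^+ \mid \cF_t)$; these are continuous martingales matching the required initial and terminal data, so $\{(S, C^2, \ldots, C^d);(\Omega,\cF,\F_T,\Prob)\} \in \cM$.

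The principal obstacle is the simultaneous construction of $\mu$ and the full-support martingale $S$: $\mu$ must satisfy $d$ linear constraints together with strict positivity of the density on all of $(0, \infty)$, requiring careful interpolation of $c$ near the strikes --- the four strict inequalities leave just enough slack. Ensuring full support of $S$ throughout $[0,T)$ (not only at $T$) is what motivates the bounded-volatility Dupire construction rather than a pure Bass embedding, which could collapse $S$ onto a deterministic segment. The adaptation of the Tanaka/local-time argument in the first part is otherwise immediate from Proposition \ref{proposition: open set for convex claim}, since the measure $\ud g'$ in the case $g(s) = (s-k_i)^+$ collapses to a unit point mass at $k_i$.
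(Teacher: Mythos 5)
Your overall strategy matches the paper's: show a.s.\ positivity of all four quantities on $[0,T)$ under any model in $\cM$, then build a full-support local-volatility model achieving any admissible initial data. Within the positivity half, your $V^i$ argument is essentially the paper's (the $Z^i$ detour is unnecessary since in $\cM_m$ both $C^i$ and $\int \I_{\{S_u>k_i\}}\ud S_u$ are already true uniformly integrable martingales, so $\E(V^i_{\tau\wedge T})=\frac{1}{2}\E(L^{k_i}_T[S]-L^{k_i}_{\tau\wedge T}[S])$ follows directly). For $\Delta^i$, $D^2+1$ and $-D^d$ your route is genuinely different: you observe that each is a nonnegative martingale whose terminal value is strictly positive precisely on a nonempty open set of $S_T$-values and use full support to conclude the conditional expectation cannot vanish. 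The paper instead contradicts absorption by appealing to strict convexity of $c_\tau(\cdot)$ via Lemma \ref{lemma: equivalence between (FS) and strict convexity}. Your version is arguably cleaner, but as phrased it gives positivity for each fixed $t$, not uniformly over $t\in[0,T)$; make the argument at the first hitting time of zero and use Lemma \ref{lemma: (FS) property with stopping times} (a nonnegative martingale hitting zero before $T$ is absorbed, forcing the terminal value to be zero, which then contradicts the conditional full-support bound on $\{\tau<T\}$).

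The construction half is where the real gap lies. The paper spends the bulk of the proof here: choosing $a_1$, $a_2$, $\sigma_0$, building a $C^2$ convex interpolant that matches call values and slopes at the endpoints $a_1$, $a_2$, blending in time with a Black-Scholes surface, and verifying that Dupire's formula yields a local volatility bounded globally in some $[l,u]\subset(0,\infty)$. Asserting that "the interpolation can be performed by standard piecewise-polynomial smoothing" elides exactly this work and does not show that the resulting density is strictly positive on all of $(0,\infty)$ or that the implied Dupire volatility is well-behaved. Two further slips: (i) the constraint $c'(0^+)\in(-1,0)$ is incompatible with the rest of your construction, since a terminal law with $\Prob(S_T=0)=0$ (which your full-support $S$ must have) necessarily has $c'(0^+)=-\Prob(S_T>0)=-1$; the hypothesis $D^2_0>-1$ constrains only the chord slope over $[0,k_2]$, not the tangent at $0$. (ii) "Bounded above and away from zero on compacts of $(0,\infty)\times[0,T-\varepsilon)$" is too weak to invoke Proposition \ref{proposition: (FS) property with bounded realised variance}, which requires the realised variance $\int_{t\vee\tau}^T\sigma_u^2\ud u$ squeezed between strictly positive functions vanishing only at $T$; you need $\sigma$ bounded globally in $[l,u]$, which the paper secures by setting $\sigma\equiv\sigma_0$ outside a compact set.
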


\begin{proof}

Temporarily define
\begin{align*}
Z &:= V^{min} \wedge \Delta^{min} \wedge (D^2 + 1) \wedge (-D^d), \\
\tau_0 &:= \inf\{ t \geq 0: Z_t \leq 0 \}.
\end{align*}
In part (i) of the proof, we will first prove that $\tau_0 = T$ $\Prob$-a.e., which will imply that
$$
\Prob(V^{min}_t > 0, \Delta^{min}_t > 0, D^2_t > -1, D^d_t < 0, \ \forall t \in [0,T) ) = 1,
$$
and that $\cM(A_0) = \emptyset$ if $A_0 = (S_0, C^1_0, \ldots, C^d_0)$ is such that any of the conditions $V^{min}_0 > 0$, $\Delta^{min}_0 > 0$, $D^2_0 > -1$ or $D^d_0 < 0$ does not hold. In part (ii), we will prove that $\cM(A_0) \neq \emptyset$ if $V^{min}_0 > 0$, $\Delta^{min}_0 > 0$, $D^2_0 > -1$ and $D^d_0 < 0$. \\

\item (i) Temporarily introduce the notation
\begin{align*}
\tau^V &:= \inf\{ t \geq 0: V^{min} \leq 0 \}, \\
\tau^\Delta &:= \inf\{ t \geq 0: \Delta^{min}_t \leq 0 \}, \\
\tau^D_l &:= \inf\{ t \geq 0: D^2_t \leq -1 \}, \\
\tau^D_r &:= \inf\{ t \geq 0: D^d_t \geq 0 \},
\end{align*}
and note that $\tau_0 = \tau^V \wedge \tau^\Delta \wedge \tau^D_l \wedge \tau^D_r$.

For any $i \in \{2,\ldots,d\}$ and any $\tau \in \cT(\F_T)$,
\begin{align*}
\E( V^i_{\tau \wedge T} ) = \E \left( (S_T - k_i)^+ - (S_{\tau \wedge T} - k_i)^+ \right) = \frac{1}{2} \E \left( L^{k_i}_T[S] - L^{k_i}_{\tau \wedge T}[S]  \right).
\end{align*}
Thus, part (ii) of the proof of Proposition \ref{proposition: open set for convex claim} implies that $\E( V^i_{\tau \wedge T} ) > 0$ if $\Prob(\tau < T) > 0$. It then follows that $\Prob( \inf\{ t \geq 0: V^i_t \leq 0 \} < T) = 0$ for $i \in \{2,\ldots,d\}$. Hence, $\tau^V = T$ $\Prob$-a.e.. 

To conclude the proof of $\tau_0 = T$ $\Prob$-a.e., we will show that $\tau^\Delta \vee \tau^D_l \vee \tau^D_r \geq T$ $\Prob-a.e.$. For any $\tau \in \cT(\F_T)$ and any $k > 0$, we have that $c^\prime_\tau(k) = - \Prob( S_T \geq k \, | \, \cF_{\tau \wedge T}) \in [-1,0]$. Furthermore, by the (FS) property of $S$ and Lemma \ref{lemma: equivalence between (FS) and strict convexity}, $c_\tau(\cdot)$ is a strictly convex function on $\{ \tau < T \}$ for any $\tau \in \cT(\F_T)$. This implies that $D^2_{\tau^D_l} > \lim_{k \downarrow 0} c^\prime_{\tau^D_l}(k) \geq -1$ on $\{\tau^D_l < T\}$, which contradicts $D^2_{\tau^D_l} = -1$. Hence, $\Prob( \tau^D_l \geq T) = 1$. By similar arguments by contradiction, $c^\prime_{\tau^D_r}(k_d) \geq 0$ and strict convexity of $c^\prime_{\tau^D_r}(\cdot)$ on $\{ \tau^D_r < T \}$ imply that $\Prob( \tau^D_r \geq T ) = 1$, and strict convexity of $c^\prime_{\tau^\Delta}(\cdot)$ on $\{ \tau^\Delta < T \}$ implies that $\Prob( \tau^\Delta \geq T ) = 1$. This concludes part (i) of the proof. \\

\item (ii) Suppose that $A_0$ is such that $V^{min}_0 > 0$, $\Delta^{min}_0 > 0$, $D^2_0 > -1$ and $D^d_0 < 0$. To show that $\cM(V^0) \neq \emptyset$, we will construct a local volatility process 
$$
S_\cdot = S_0 + \int_0^\cdot \sigma(S_t, t) S_t \ud W_t
$$
on the canonical space $(\cC(\R), \cF^W, \F^W_T, \Prob^W)$ for a Brownian motion $W$ such that:
\item (1) The corresponding call price function $c_0(k)$ passes through all of the points $(k_i, C^i_0)$.
\item (2) $\sigma(x) \in [l,u]$, $x > 0$ for $0 < l < du< \infty$. \\
Proposition \ref{proposition: (FS) property with bounded realised variance} will then imply that $S$ has full support. Defining the $C^i$ to be the conditional expectations of the call payoffs under $S$, this will yield a model in $\cM$.

Define $a_1 = (S_0 - C^2_0)/2$ and $a_2 := k_d - \frac{C^d_0}{D^d_0}$. By $D^2_0 \in (-1, 0)$ it follows that $a_1 \in (0, k_2 \wedge S_0)$ and that $k_2 - a_1 < k_2/2$. By $C^d_0 > V^{min}_0 > 0$ and $D^d_0 < 0$, it follows that $a_2$ is well-defined and $a_2 > k_d$. In particular, $a_2$ is the point on the horizontal (strike) axis where the latter is intersected by the line connecting $(k_{d-1},C^{d-1}_0)$ and $(k_d,C^d_0)$. By $D^2_0 > -1$, $\Delta^{min}_0 > 0$, it also follows that $a_2 > S_0$.

Denote $c^{BS}(k, t; \sigma)$ to be the call price function generated by the geometric Brownian motion with zero drift and volatility $\sigma$, notably
$$
c^{BS}(k, t; \sigma) := \E^{\Prob^W} ((S_0 \cE(\sigma W_t) - k)^+).
$$
Define 
\begin{align*}
\sigma_1 &:= \sup \left\{ \sigma > 0: c^{BS}(a_2, T; \sigma) - \frac{\partial}{\partial k} c^{BS}(a_2, T; \sigma) (k_d-a_2) \leq \frac{1}{2} C^d_0 \right\}, \\
\sigma_2 &:= \sup \left\{ \sigma > 0: \frac{\partial}{\partial k}c^{BS}(a_1, T; \sigma) \leq \frac{C^2_0 - c^{BS}(a_1, T; \sigma)}{k_2 - a_1} \right\}, \\
\sigma_0 &:= \sigma_1 \wedge \sigma_2.
\end{align*}
By $\lim_{\sigma \to 0} c^{BS}(a_2, T; \sigma) = \lim_{\sigma \to 0} \frac{\partial}{\partial k} c^{BS}(a_2, T; \sigma) = 0$ and by 
\begin{align*}
\lim_{\sigma \to 0} \frac{C^2_0 - c^{BS}(a_1,T; \sigma)}{k_2 - a_1} &= \frac{C^2_0 - (S_0 - a_1)}{k_2 - a_1} \\
&= \frac{C^2_0 - S_0}{2(k_2 - a_1)} = \frac{C^2_0 - S_0}{k_2} = D^2_0 \\
&> -1 = \lim_{\sigma \to 0} \frac{\partial}{\partial k} c^{BS}(a_1, T; \sigma),
\end{align*}
$\sigma_0$ is well-defined. Since $c^{BS}(a_2, T; \sigma_0) - \frac{\partial}{\partial k} c^{BS}(a_2, T; \sigma_0) (a_2-k_d) \leq \frac{1}{2} C^d_0$, it follows that
$$
\frac{\partial}{\partial k} c^{BS}(a_2, T; \sigma_0) - \frac{c^{BS}(a_2, T; \sigma_0) - C^d_0}{a_2 - k_d} \geq \frac{C^d_0}{2(a_2 - k_d)} > 0,
$$
and that
$$
\frac{c^{BS}(a_2, T; \sigma_0) - C^d_0}{a_2 - k_d} - D^d_0 = \frac{c^{BS}(a_2, T; \sigma_0)}{a_2 - k_d} > 0.
$$
By $\frac{\partial}{\partial k}c^{BS}(a_1, T; \sigma) \leq \frac{C^2_0 - c^{BS}(a_1, T; \sigma)}{k_2 - a_1}$, $D^2_0 > -1$, $D^i_0 - D^{i-1}_0 > 0$ for $i \in \{3,\ldots,d\}$ and the identities derived above, we can interpolate the points 
$$
(a_1, c^{BS}(a_1, T; \sigma_0)), (k_2/2,c^{BS}(k, T; \sigma_0)), (k_2, C^2_0), \ldots, (k_d, C^d_0), (a_2, c^{BS}_0(a_2; \sigma_0))
$$
and obtain a function $c^{I}(k, t; \sigma_0)$ defined on $A_1 \equiv [a_1, a_2] \times [T/2, T]$ such that
\begin{itemize}
\item $c^{I}(k,t; \sigma_0) = c^{BS}(k,t; \sigma)$ for $(k,t) \in \{a_1, a_2\} \times [T/2, T]$.
\item $\frac{\partial}{\partial k} c^{I}(k, t; \sigma_0) = \frac{\partial}{\partial k} c^{BS}(k, t; \sigma_0)$ for $(k,t) \in \{a_1, a_2\} \times [T/2, T]$.
\item $c^{I}(k_i, T) = C^i_0$ for $i \in \{2, \ldots, d\}$.
\item $c^{I}(k,t; \sigma_0)$ is $C^2$ and convex in $k$ and is $C^1$ and increasing in $t$.
\end{itemize}
Then, define $\widetilde{c}(k,t; \sigma_0)$ on $A_2 \equiv [0, \infty) \times [0,T]$ by
$$
\widetilde{c}(k, t; \sigma_0) =
\begin{cases}
\left(\frac{t-T/2}{T/2} \right)^2 c^{I}(k, t; \sigma_0) + \left(1 - \left(\frac{t-T/2}{T/2} \right)^2 \right) c^{BS}(k, t; \sigma_0), \ & (k,t) \in A_1 \\
c^{BS}(k, t; \sigma_0) & (k,t) \in A_2 \setminus A_1.
\end{cases}
$$
Define
$$
\sigma(x,t) := \frac{1}{k} \sqrt{\frac{2 \frac{\partial}{\partial t} \widetilde{c}(k, t; \sigma_0)}{\frac{\partial^2}{\partial^2 k} \widetilde{c}(k, t; \sigma_0)}}
$$
By Dupire's formula \cite{Dupire:local-vol}, it follows that $\sigma(k,t) = \sigma_0$ on $A_2 \setminus A_1$. Moreover, since $A_1$ is compact and $\sigma(k,t)$ is continuous and positive on $A_1$, we get that $\sigma(k,t) \in [b_1, b_2]$ for some $0 < b_1 < b_2 < \infty$ for $(k,t) \in A_1$. Hence, $\sigma(k,t) \in [l,u]$ on $(k,t) \in A_2$, where $l = b_1 \wedge \sigma_0$ and $u = b_2 \vee \sigma_0$. Then, $S_t := S_0 \cE( \int_0^t \sigma( S_u, u) \, S_u \ud W_u)$ has full support by Proposition \ref{proposition: (FS) property with bounded realised variance}. Furthermore, the corresponding call price function $c_0(k)$ satisfies 
$$
c_0(k_i) = \widetilde{c}(k_i, T; \sigma_0) = c^{I}(k_i, T; \sigma_0) = C^i_0, \quad i \in \{2, \ldots, d\}.
$$
Hence, setting
$$
C^i_t := \E^{\Prob^W}( (S_T - k_i)^+ \, | \, \cF_t), \quad t \in [0,T]
$$
for $i \in \{2,\ldots,d\}$, we conclude that 
$$
\{ (S, C^2,\ldots,C^d); (\cC(\R), \cF^W, \F^W_T, \Prob^W) \} \in \cM(A_0).
$$
and that $\cM(A_0) \neq \emptyset$ if $V^{min}_0 > 0$, $\Delta^{min}_0 > 0$, $D^2_0 > -1$ and $D^d_0 < 0$.

\end{proof}

We can now turn to the proof of Theorem \ref{theorem: regular class of models, call options}. The main ideas are similar to those of the proof of Theorem \ref{theorem: regular class of models, convex claim}. Hence, we do not elaborate on the details as much as in the latter proof.

\begin{proof}[Proof of Theorem \ref{theorem: regular class of models, call options}]

The claim that $\cM(A_0) \neq \emptyset$ iff $V^{min}_0 > 0$, $\Delta^{min}_0 > 0$, $D^2_0 > -1$ and $D^d_0 < 0$ was proved in Proposition \ref{proposition: open set for call options}. It remains to prove that $\cM$ is regular. 

For a given model $\{A ; (\Omega, \cF, \F_T, \Prob) \} \in \cM$, stopping time $\tau \in \cT(\F_T)$ such that $\Prob( \tau < T) > 0$ and matrix $\Sigma \in \mathbb{S}^2_+$, we can construct a model $\{ \widetilde{A} ; (\widetilde{\Omega}, \widetilde{\cF}, \widetilde{\F}_T, \widetilde{\Prob}) \} \in \cM$ and a stopping time $\widetilde{\tau}$ satisfying the conditions in Definition \ref{definition: regular class of models} as follows. 

Define $(\widetilde{\Omega}, \widetilde{\cF},\widetilde{\F}_T,\widetilde{\Prob}) := (\Omega, \cF, \F_T, \Prob) \otimes (\cC(\R), \cF^W, \F^W_T, \Prob^W)$ where $W$ is a Brownian motion and let $m \in \R^{2 \times 2}$ be such that $m^T m = \Sigma$. Denote $z$ to be the function such that $z(A) = V^{min} \wedge \Delta^{min} \wedge (D^2 + 1) \wedge (-D^{d})$. Define
\begin{align*}
&\widetilde{A}_t := A_{t \wedge \tau} + m(W_t - W_{t \wedge \tau}), \quad t \in [\tau, \widetilde{\tau}), \\
& \widetilde{\tau} := \inf \left\{ t \geq \tau: z(\widetilde{A}_t) \leq z(A_\tau)/2 \right\} \wedge \frac{T+\tau}{2},
\end{align*}
and, for $t \in [\widetilde{\tau}, T]$, define
$$
\widetilde{S}_t := \widetilde{S}_{\widetilde{\tau}}  + \int_{\widetilde{\tau}}^t \widetilde{\sigma}(\widetilde{S}_u) S_u \ud W_u, \quad \widetilde{C}^i_t := \E^{\widetilde{\Prob}} \left( (\widetilde{S}_T - k_i)^+ \, \big{|} \, \widetilde{\cF}_t \right), \quad i \in \{2,\ldots,d\},
$$
where $\widetilde{\sigma}(k)$ is an $\widetilde{F}_T$-measurable function bounded in an $\widetilde{F}_\tau$-measurable interval $[l,u] \subset (0,\infty)$ such that
$$
\E^{\widetilde{\Prob}}((\widetilde{S}_T - k_i)^+ \, | \, \widetilde{\cF}_{\widetilde{\tau}} ) = \widetilde{C}^i_{\widetilde{\tau}}, \quad i \in \{2,\ldots,d\}.
$$
The construction of $\widetilde{\sigma}$ can be done in a similar manner as in part (ii) of the proof of Proposition \ref{proposition: open set for convex claim}. Note that the obtained function $\widetilde{\sigma}$ will be measurable with respect to the sigma-algebra generated by the prices $\widetilde{A}_{\widetilde{\tau}}$, hence measurable with respect to $\widetilde{\F}_{\widetilde{\tau}}$.

The construction of $\widetilde{S}$ and $\widetilde{C}$ implies that $\widetilde{A}$ is a $d$-dimensional continuous martingale. Also, it is clear that $\{\tau < \widetilde{\tau} \leq T \} = \{ \tau < T \}$ by construction. The call prices $\widetilde{C}^i$ converge to their payoffs $(\widetilde{S}_T - k_i)^+$ on $\{ \tau < T \}$ since they are equal to conditional expectations of the latter on $[\widetilde{\tau},T]$. They converge to their payoffs on $\{ \tau \geq T \}$ since $\widetilde{A} = A$ on $[0,\tau \wedge T]$ and since $C^i_T = (S_T - k_i)^+$ for $i \in \{2,\ldots,d\}$. As in the proof of Theorem \ref{theorem: regular class of models, convex claim}, the full support property follows by considering the sets (1) $\{ \widetilde{\tau} < T \}$ and (2) $\{\widetilde{\tau} \geq T\}$ and respectively using (1) Proposition \ref{proposition: (FS) property with bounded realised variance} with $\widetilde{\sigma}(x) \in [l,u]$ and (2) $A = \widetilde{A}$ on $[0,\tau \wedge T]$ and the full support property of $A$.

This concludes the proof of $\cM$ being regular.

\end{proof}

\section{Summary}
\label{section: conclusion}

This paper extended the Black-Scholes dynamic hedging paradigm to a market with continuous prices where both the drift and the volatility of these prices is unknown. We first showed that one can obtain a class of model-independent identities by simple application of It\^o's formula along with solutions of coupled PDEs. After discussing some examples of these identities, we showed that they in fact constitute the full set of model-independent strategies with wealth processes of a certain closed form. Applying the general framework to specific settings, we characterised model-independent hedging strategies for two examples of markets: a market with an underlying asset and a convex claim, and a market with an underlying asset and a set of traded calls. The general characterisation result can be interpreted as the characterisation of local martingales of a given closed form and may be of interest outside of the interpretation we discussed within financial markets.

\appendix

\newpage

\numberwithin{equation}{section}
\numberwithin{theorem}{section}

\section{Technical proofs}
\label{appendix: technical proofs}

Within the appendix, we use the shorthand notation $F_i \equiv \frac{\partial}{\partial x_i} F$ and similarly for higher order derivatives of a function $F$. We also write $F \equiv F(x)$ when there is no ambiguity regarding the input $x$ to $F$.

\begin{proof}[Proof of Proposition \ref{proposition: properties of range of X}]

To simplify notation, denote $\cM \equiv \cM(a)$ and $\cR \equiv \cR(X;\cM(a))$, and let $\tau_B$ to be the first hitting time of $X_\cdot \equiv X_\cdot[A]$ to a set $B$ throughout the proof. \\

\item (i) Suppose $\cR$ is not closed. Then there is a sequence $x_n \to x$ such that $x_n \in \cR$ and $x \not\in \cR$. $x \not\in \cR$ implies that $\exists \, \epsilon > 0$ such that for any model in $\cM$, $\tau_{B^c_\epsilon(x)} \geq T$ almost surely. Pick a point $x_n$ such that $d(x_n, x) < \epsilon/2$. By $x_n \in \cR$, there is a model $\{ A ; (\Omega, \cF, \cF_t, \Prob) \} \in \cM$ such that $\Prob(\tau_{B^c_{\epsilon/2}(x_n)} < T) > 0$. But $B^c_{\epsilon/2}(x_n) \subset B^c_{\epsilon}(x)$ implies that $\tau_{B^c_{\epsilon/2}(x_n)} \geq \tau_{B^c_{\epsilon}(x)}$, which yields a contradiction. This proves that $\cR$ is closed. \\

\item (ii) We now prove an intermediate result which will be used in proving connectedness and the claim that $X$ stays within $\cR$ almost surely under all models in $\cM$. We will show that for a closed set $C$, $C \cap \cR = \emptyset$ implies that $\tau_C \geq T$ almost surely under all models in $\cM$. Suppose that this is not true. Then, there is a model $\{ A; (\Omega, \cF, \F_T, \Prob) \} \in \cM$ such that $\Prob( \tau_C < T ) > 0$. By a countable partitioning of $C = \cup_{i=1}^\infty K_i$ where the $K_i$ are compact, it follows that
$$
\sum_{i=1}^\infty \Prob( \tau_{K_i} < T) \geq \Prob \left( \cup_{i=1}^\infty \{ \tau_{K_i} < T \} \right) = \Prob \left( \tau_{\cup_{i=1}^\infty K_i} < T \right) = \Prob( \tau_C > 0) > 0.
$$
Hence, there is a compact set $K \subseteq C$ such that $\Prob( \tau_K < T ) > 0$. Since $K \cap \cR = \emptyset$, for each $x \in K$ there is an $\epsilon(x) > 0$ such that $\Prob \left( \tau_{B^c_{\epsilon(x)}(x)} < T \right) = 0$. By compactness of $K$, there is a finite set $x_1, \ldots, x_k \in K$ such that $K \subseteq \cup_{i=1}^k B^c_{\epsilon(x_i)}(x_i)$. Hence,
$$
\Prob( \tau_K < T ) \leq \Prob \left( \tau_{\cup_{i=1}^k B^c_{\epsilon(x_i)}(x_i)} < T \right) \leq \sum_{i=1}^k \Prob \left( \tau_{B^c_{\epsilon(x_i)}(x_i)} < T \right) = 0,
$$
which yields a contradiction. Hence, $\tau_C \geq T$ almost surely under all models in $\cM$. \\

\item (iii) Suppose that $\cR$ is not connected. Then there are open sets $O_1, O_2 \subset \R^n$ in the relative topology of $\cR$ such that $O_1 \cap O_2 = \emptyset$ and $\cR \subseteq O_1 \cup O_2$. Let $H$ be a hyperplane separating $O_1$ and $O_2$. Without loss of generality, assume that $X_0 \in O_1$. Let $x \in O_2 \cap \cR$. $H \cap \cR = \emptyset$ implies that $d(x,H) > 0$. Let $0 < \epsilon < d(x,H)$. Then, $X_0 \in O_1$ and continuity of paths of $X$ imply that $\tau_H < \tau_{B^c_\epsilon(x)}$. On the other hand, $H \cap \cR = \emptyset$ implies that $\tau_H \geq T$ almost surely under all models in $\cM$ by part (ii) of the proof. Hence, $\tau_{B^c_\epsilon(x)} \geq T$ almost surely under all models in $\cM$, which contradicts $x \in \cR$. Hence, $\cR$ is connected. \\

\item (iv) Let $\{ A; (\Omega, \cF, \F_T, \Prob) \} \in \cM$. For $\epsilon > 0$, define $B^c_\epsilon(\cR) = cl \left( \cup_{x \in \cR} B^c_\epsilon(x) \right)$ and note that $\partial B^c_\epsilon(\cR) \cap \cR = \emptyset$. By part (ii) of the proof, it follows that $\Prob( \tau_{\partial B^c_\epsilon(\cR)} < T) = 0$. Since $\{ X_t \in \cR, \ t \in [0,T) \} =  \cap_{n=1}^\infty \left\{ \tau_{\partial B^c_{1/n}(\cR)} \geq T \right\} 
$, it follows that
$$
\Prob( X_t \in \cR, \ t \in [0,T) ) = 1- \Prob \left( \cup_{n=1}^\infty \left\{ \tau_{\partial B^c_{1/n}(\cR)} < T \right\} \right) \geq 1- \sum_{n=1}^\infty \Prob \left( \tau_{\partial B^c_{1/n}(\cR)} < T \right) = 1,
$$
which concludes the proof.

\end{proof}

\begin{proof}[Proof of Lemma \ref{lemma: (FS) property with stopping times}]

The 'if' statement is trivial (set $\tau = t$). We hence consider the 'only if' statement. Suppose that $S$ has full support, let $I_\tau$ be an $\F_T$-measurable open interval in $\R_+$ and consider a stopping time $\tau \in \cT(\F_T)$.

If $\Prob(\tau < T) = 0$, then there is nothing to prove. If $\Prob(\tau < T) > 0$, consider any $\xi \in \cF_\tau \cap \{ \tau < T \}$ such that $\Prob(\xi) > 0$. $\xi \subseteq \{ \tau < T \}$ implies that 
$$
\lim_{t \to T} \Prob( \xi \cap \{ \tau < t \} ) = P(\xi) > 0.
$$
Hence, $\exists \, t \in (0,T)$ such that $\Prob( \xi \cap \{ \tau < t \} ) > 0$. Note that $\xi \cap \{ \tau < t \} \in \cF_t$ since $\xi \in \cF_\tau$. By the full support property of $S$, $\Prob( S_T \in I_{t \wedge \tau} \, | \, \cF_t) > 0$. It then follows that 
$$
\Prob( \{ S_T \in I_\tau \} \cap \xi ) \geq \Prob( \{ S_T \in I_\tau \} \cap \xi \cap \{ \tau < t \} ) = \Prob( \{ S_T \in I_{t \wedge \tau} \} \cap \xi \cap \{ \tau < t \} ) > 0,
$$
which proves that $\Prob( S_T \in I_\tau \, | \, \cF_\tau ) > 0$ on $\{ \tau < T \}$ as claimed.

\end{proof}

\begin{proof}[Proof of Proposition \ref{proposition: (FS) property with bounded realised variance}]

Without loss of generality, assume that $S_0 = 1$. Let $t \in [0,T)$ and let $I$ be an $\F_T$-adapted open interval. Assume that $\Prob(\tau < T) > 0$, else there is nothing to prove. For any $t \in [0,T)$ and $\xi \in \cF_t$ such that $\Prob( \xi \cap \{ \tau < T \} ) > 0$, we need to show that $\Prob( \{  S_T \in I_t \} \cap \xi \cap \{ \tau < T\}) > 0$. Without loss of generality, assume that $\tau \leq T$ (else use $\tau \wedge T$ in the proof).

Denoting $\F_\infty$, consider a filtered probability space $(\Omega^\prime, \cF^\prime, \F^\prime_\infty, \Prob^\prime)$ containing $(\Omega, \cF, \F_\infty, \Prob)$ such that $W^\prime: \Omega^\prime \times [0,\infty) \to \R$ is a Brownian motion independent of $\cF$ defined on it. Define 
\begin{align*}
Y_\cdot &:=  \int_{\cdot \wedge t}^\cdot \sigma_u \ud W_u, \\
Y^\prime_\cdot &:= Y_{\cdot \wedge t} + W^\prime_{\langle Y \rangle_\cdot} - W^\prime_{\langle Y \rangle_{\cdot \wedge t}}, \\
S^\prime_\cdot &:= S_{\cdot \wedge t} \cE(Y^\prime).
\end{align*}
Note that $Y = 0$ and $S^\prime = S$ on $[0, t]$. By a standard time change argument (with the starting filtration being $\cF_t$ instead of the usual trivial filtration), the laws of $S_T$ and $S^\prime_T$ conditional on $\cF_t \cap \{ \tau < T \}$ coincide. Let $\xi \in \cF_t$. Then,
$$
\Prob( \{ S_T \in I_t \} \cap \xi \cap \{ \tau < T \} ) = \Prob^\prime( S_T \in I_t \} \cap \xi \cap \{ \tau < T \} ) = \Prob^\prime( \{ S^\prime_T \in I_t \} \cap \xi \cap \{ \tau < T \}).
$$
Denoting $\log I + c \equiv (\log(a) + c, \log(b) + c)$ for $I = (a,b)$, define the $\F_T$-adapted open interval
$$
I^\prime := \log I - \log S.
$$
Note that $Y^\prime = \log S^\prime + \frac{1}{2} \langle Y^\prime \rangle$ on $[t, T]$, and denote $t^\prime \equiv t \vee \tau$ and $\langle Y \rangle_{t^\prime,T} \equiv \langle Y \rangle_T - \langle Y \rangle_{t^\prime}$. Then,
\begin{align*}
&\Prob( \{ S_T \in I_{t^\prime} \} \cap \xi \cap \{ \tau < T \} ) \\
&= \Prob^\prime \left( \{ \log S^\prime_T - \log S^\prime_{t^\prime} + \frac{1}{2} \langle Y \rangle_{t^\prime,T} \in \log I_{t^\prime} - \log S^\prime_{t^\prime} + \frac{1}{2} \langle Y \rangle_{t^\prime,T} \} \cap \xi \cap \{ \tau < T \} \right) \\
&= \Prob^\prime \left( \{ Y^\prime_T - Y^\prime_{t^\prime} \in I^\prime_{t^\prime} + \frac{1}{2} \langle Y \rangle_{{t^\prime},T}  \} \cap \xi \cap \{ \tau < T \} \right) \\
&= \Prob^\prime \left( \left\{ W^\prime_{\langle Y \rangle_{t^\prime}} + \langle Y \rangle_{t^\prime,T} - W^\prime_{\langle Y \rangle_{t^\prime}} \in I^\prime_{t^\prime} + \frac{1}{2} \langle Y \rangle_{t^\prime,T} \right\} \cap \xi \cap \{ \tau < T \} \right) \\
&\geq \Prob^\prime \left( \left\{ W^\prime_{\langle Y \rangle_{t^\prime} + \langle Y \rangle_{t^\prime,T}} - W^\prime_{\langle Y \rangle_{t^\prime}} \in I^\prime_{t^\prime} + \frac{1}{2} \langle Y \rangle_{t^\prime,T} \right\} \cap \left\{ l({t^\prime}) \leq \langle Y \rangle_{t^\prime,T} \leq u({t^\prime}) \right\} \cap \xi \cap \{ \tau < T \} \right) \\
&=\E \left( \Prob^\prime \left( W^\prime_{\langle Y \rangle_{t^\prime} + \langle Y \rangle_{t^\prime,T}} - W^\prime_{\langle Y \rangle_{t^\prime}} \in I^\prime_{t^\prime} + \frac{1}{2} \langle Y \rangle_{{t^\prime},T}  \, \big{|} \, \langle Y \rangle_{t^\prime,T}, \tau < T \right) \I_{\{ l({t^\prime}) \leq \langle Y \rangle_{t^\prime,T} \leq u({t^\prime}) \} \cap \xi \cap \{ \tau < T \}} \right)
\end{align*}
Note that 
$$
\Prob^\prime \left( W^\prime_{\langle Y \rangle_{t^\prime} + \langle Y \rangle_{t^\prime,T}} - W^\prime_{\langle Y \rangle_{t^\prime}} \in I^\prime_{t^\prime} + \frac{1}{2} \langle Y \rangle_{{t^\prime},T}  \, \big{|} \, \langle Y \rangle_{{t^\prime},T}, \tau < T \right) > 0
$$
on $\{ l({t^\prime}) \leq \langle Y \rangle_{{t^\prime},T} \leq u({t^\prime}) \}$ by independence of $W^\prime$ from $\cF$. Also, 
$$
\Prob^\prime \left( \{ l(t^\prime) \leq \langle Y \rangle_{t^\prime,T} \leq u(t^\prime) \} \cap \xi \cap \{ \tau < T \} \right) > 0
$$ 
by $\Prob( l(t^\prime) \leq \int_{t^\prime}^T \sigma_u^2 \ud u \leq u(t^\prime) \, | \, \cF_{t^\prime}) > 0$ on $\{ \tau < T \}$ and by $\Prob(\xi \cap \{ \tau < T \}) > 0$. Hence,
$$
\Prob( \{ S_T \in I_{t^\prime} \} \cap \xi \cap \{ \tau < T \} ) > 0,
$$
which concludes the proof of $\Prob( S_T \in I_{t^\prime} \, | \, \cF_{t^\prime} ) > 0$ on $\{ \tau < T \}$.

\end{proof}

\section{Equivalence Results}
\label{appendix: equivalence}

One can generalise Corollary \ref{corollary: claims on square drawdown and realised variance} by adding more components to $X$. In particular, for $X = (S, D^M, Q^w)$, $\cS_X(\cD)$ is the set of $C^{2,2,1}$ functions which are solutions to
\begin{equation}
\label{PDE for price, squared drawdown and realised variance}
w \frac{\partial}{\partial x_3} F + \frac{\partial}{\partial x_2} F + \frac{1}{2} \frac{\partial^2}{\partial x_1^2}F + 2 x_2 \frac{\partial^2}{\partial x_2^2} F - 2 \sqrt{x_2} \frac{\partial^2}{\partial x_1 \partial x_2} F = 0,
\end{equation}
subject to some boundary conditions. Corollary \ref{corollary: model-independent replication} then implies that one can generate portfolios with wealth process $F(X^B_\cdot)$ by starting with capital $F(X_0)$ and holding a position $H = \frac{\partial}{\partial x_1} F(X) - 2 \sqrt{D^M} \frac{\partial}{\partial x_2} F(X)$ in $S$. 

This choice of $X$ yields similar hedging strategies as Proposition \ref{proposition: claims on S, M and Q^w} involving mixed boundary PDEs. Since $(S, D^M, Q^w)$ and $(S, M, Q^w)$ are homeomorphic, in fact diffeomorphic except on $\{ D^M_t = 0 \} = \{ S_t = M_t \}$, one may expect the sets of self-financing strategies in $S$ having wealth processes equal to $F^D(S_\cdot, D^M_\cdot, Q^w_\cdot)$ for solutions $F^D$ to \eqref{PDE for price, squared drawdown and realised variance} to be equivalent to strategies with wealth processes of the form $F^M(S_\cdot, M_\cdot, Q^w_\cdot)$ for solutions $F^M$ to \eqref{running maximum PDE} -- \eqref{mixed BC}, up to regularity conditions on $F^D$ and $F^M$. The next proposition shows this for $w = 1$. We use the following notational convention within the proposition and its proof:
\begin{itemize}
\item $w = 1$ when referring to \eqref{PDE for price, squared drawdown and realised variance} and \eqref{running maximum PDE}.
\item For a set $K$ and a function $g$ defined on it, define $g(K) := \{ g(x): x \in K \}$. 
\end{itemize}

\begin{proposition}
\label{proposition: equivalence between F^M and F^D}
Define $y(x) := (x_1, (x_2 - x_1)^2, x_3)$ and $\ytilde(x) := (x_1, x_1 + \sqrt{x_2}, x_3)$. \\
Consider a connected set $\cD \subset \R^3$ with smooth boundary and a solution $F^D \in C^{2,2,1}(\cD)$ to \eqref{PDE for price, squared drawdown and realised variance}. Then, the function
\begin{equation}
\label{F^D to F^M}
F^M := F^D \circ y
\end{equation}
is a $C^{2,1,1}$ solution to \eqref{running maximum PDE} on $\ytilde(\cD)$. Furthermore, \eqref{mixed BC} holds on $\ytilde(\cD)$ provided $\frac{\partial}{\partial y_2} F^D(y_1, y_2, y_3)$ is well-defined on $\cD \cap \{y_2 = 0\}$. \\
Conversely, consider a connected set $\cDtilde$ with smooth boundary and a solution $\Ftilde^M \in C^{2,2,1}(\cDtilde)$ to \eqref{running maximum PDE} -- \eqref{mixed BC}. Then, the function
$$
\Ftilde^D := \Ftilde^M \circ \ytilde
$$
is a $C^{2,2,1}$ solution to \eqref{PDE for price, squared drawdown and realised variance} on $\ytilde(\cDtilde) \setminus \{x_2 = 0\}$. 
\end{proposition}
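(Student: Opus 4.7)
The proposition asserts a correspondence between two PDE formulations arising from the fact that $y$ and $\tilde{y}$ are mutually inverse diffeomorphisms between $(S, M, Q^w)$-coordinates and $(S, D^M, Q^w)$-coordinates on the region where $M > S$ (equivalently $D^M > 0$). The plan is to prove both directions by direct chain-rule computation, evaluating all derivatives at corresponding points of the two coordinate charts and using the identity $x_2 - x_1 = \sqrt{y_2}$ when $y(x) = (x_1, (x_2-x_1)^2, x_3)$.

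For the forward direction, I would first compute the relevant first and second partial derivatives of $F^M(x_1,x_2,x_3) = F^D(x_1, (x_2-x_1)^2, x_3)$. The key ones are $F^M_2 = 2(x_2-x_1) F^D_2$, $F^M_3 = F^D_3$, and
\[
F^M_{11} = F^D_{11} - 4(x_2-x_1) F^D_{12} + 2 F^D_2 + 4(x_2-x_1)^2 F^D_{22},
\]
with the derivatives of $F^D$ evaluated at $y(x)$. Substituting $(x_2-x_1) = \sqrt{y_2}$ and plugging into $2 F^M_3 + F^M_{11}$ (the PDE \eqref{running maximum PDE} with $w=1$) yields exactly twice the left-hand side of \eqref{PDE for price, squared drawdown and realised variance}, which vanishes by assumption. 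The mixed boundary condition \eqref{mixed BC} is then immediate: on $\{x_1 = x_2\}$ the prefactor $(x_2-x_1)$ in $F^M_2$ is zero, so $F^M_2 = 0$ provided $F^D_2$ has a well-defined (finite) value at $y_2 = 0$, which is the hypothesis supplied.

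For the converse, I would apply the analogous chain rule to $\tilde{F}^D(x) = \tilde{F}^M(x_1, x_1+\sqrt{x_2}, x_3)$. Here the derivatives acquire $1/\sqrt{x_2}$ factors (for example $\tilde{F}^D_2 = \tilde{F}^M_2/(2\sqrt{x_2})$, and $\tilde{F}^D_{22}$ contains both $\tilde{F}^M_{22}/(4x_2)$ and a $-\tilde{F}^M_2/(4 x_2^{3/2})$ term), which is the reason $\{x_2 = 0\}$ must be excluded from the domain. Substituting into \eqref{PDE for price, squared drawdown and realised variance} and grouping terms, the singular $\tilde{F}^M_2/\sqrt{x_2}$ contributions from $\tilde{F}^D_2$ and $2x_2\tilde{F}^D_{22}$ cancel, the $\tilde{F}^M_{12}$ contributions from $\frac{1}{2}\tilde{F}^D_{11}$ and $-2\sqrt{x_2}\tilde{F}^D_{12}$ cancel, and the $\tilde{F}^M_{22}$ contributions from $\frac{1}{2}\tilde{F}^D_{11}$, $2x_2 \tilde{F}^D_{22}$ and $-2\sqrt{x_2}\tilde{F}^D_{12}$ also cancel, leaving $\tilde{F}^M_3 + \frac{1}{2} \tilde{F}^M_{11}$, which is zero by the hypothesis that $\tilde{F}^M$ satisfies \eqref{running maximum PDE}.

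The argument is entirely mechanical; the only real bookkeeping burden is tracking which coordinates each derivative refers to and verifying the coefficient cancellations in the converse computation. No genuine analytical obstacle arises, since $y$ is a polynomial (hence smooth) and $\tilde{y}$ is smooth away from $\{x_2 = 0\}$, so the claimed regularity transfers automatically through the composition in both directions.
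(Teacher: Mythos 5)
Your proposal is correct and takes essentially the same route as the paper: both directions are proved by direct chain-rule computation using $y$ and $\tilde y$ as inverse coordinate changes, identifying $x_2 - x_1 = \sqrt{y_2}$, and verifying that $2F^M_3 + F^M_{11}$ is exactly twice the left side of \eqref{PDE for price, squared drawdown and realised variance}, with the singular $\tilde F^M_2/\sqrt{x_2}$, mixed, and $\tilde F^M_{22}$ terms cancelling in the converse just as you describe. (Incidentally, the paper's displayed computation of $F^M_3 + \tfrac12 F^M_{11}$ has a minor typo in the coefficient of the mixed-derivative term — it should be $-2\sqrt{y_2}\,F^D_{12}$, not $-\sqrt{y_2}\,F^D_{12}$ — and your version has the correct factor.)
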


Since $D^M$ is equal to zero every time $S$ attains its running maximum, it is important that $\Ftilde^D(x_1,x_2,x_3)$ be differentiable from the right in the $x_2$ variable on $\{x_2 = 0\} \cap y(\cDtilde)$. If one solves for $\Ftilde^D$ via $\Ftilde^M$, as in the proposition, one may check for differentiability of $\Ftilde^D$ after applying the change in variables, or one may refer to Lemma \ref{lemma: sufficient conditions for differentiability of F^D} in Appendix \ref{appendix: appendix} for sufficient conditions on $\Ftilde^M$ for differentiability of $\Ftilde^D$ at $\{x_2 = 0\} \cap y(\cDtilde)$.

\begin{proof}

Let $F^D$ be a solution to \eqref{PDE for price, squared drawdown and realised variance} and $F^M$ be defined by \eqref{F^D to F^M}. Using the shorthand notation $y \equiv y(x)$, direct computation gives
\begin{align*}
F^M_1(x) &= F^D_1(y) - 2(x_2 - x_1) F^D_2(y) \\
F^M_{11}(x) &= F^D_{11}(y) + 2 F^D_2(y) - 4 (x_2 - x_1) F^D_{12}(y) + 4 (x_2 - x_1)^2 F^D_{22}(y) \\
F^M_{2}(x) &= 2 (x_2 - x_1) F^D_2(y) \\
F^M_3(x) &= F^D_3(y).
\end{align*}
for $x \in \ytilde(\cD)$. Hence, $F^M_2 = 0$ on $\{x_1 = x_2\} \cap \ytilde(D)$ provided $F^D_2(y_1, y_2, y_3)$ is well-defined on $\cD \cap \{ y_2 = 0 \}$, and
\begin{align*}
F^M_3(x) + \frac{1}{2} F^M_{11}(x) &= F^D_3(y) + \frac{1}{2} F^D_{11}(y) + F^D_2(y) - (x_2 - x_1) F^D_{12}(y) + 2 (x_2 - x_1)^2 F^D_{22}(y) \\
&=   F^D_3(y) + \frac{1}{2} F^D_{11}(y) + F^D_2(y) - \sqrt{y_2} F^D_{12}(y) + 2 y_2 F^D_{22}(y) = 0
\end{align*}
for $x \in \ytilde(D)$. Conversely, let $\Ftilde^M$ be a solution to \eqref{running maximum PDE} -- \eqref{mixed BC} on $\cDtilde$. Then, denoting $\ytilde = \ytilde(x)$, direct computation yields
\begin{align*}
\Ftilde^D_1(x) &= \Ftilde^M_1(\ytilde) + \Ftilde^M_2(\ytilde) \\
\Ftilde^D_{11}(x) &= \Ftilde^M_{11}(\ytilde) + 2 F^M_{12}(\ytilde) + \Ftilde^M_{22}(\ytilde) \\
\Ftilde^D_2(x) &= \frac{1}{2 \sqrt{x_2}} \Ftilde^M_2(\ytilde) \\
\Ftilde^D_{12}(x) &= \frac{1}{2 \sqrt{x_2}} \left(\Ftilde^M_{12}(\ytilde) + \Ftilde^M_{22}(\ytilde) \right) \\
\Ftilde^D_{22}(x) &= - \frac{1}{4 (x_2)^{3/2}} \Ftilde^M_2(\ytilde) + \frac{1}{4 x_2} \Ftilde^M_{22}(\ytilde) \\
\Ftilde^D_3(x) &= \Ftilde^M_3(\ytilde)
\end{align*}
for $x \in y({\cDtilde}) \setminus \{x_2 = 0\}$. Hence, for such $x$,
\begin{align*}
&\Ftilde^D_3(x) + \Ftilde^D_2(x) + \frac{1}{2} \Ftilde^D_{11}(x) + 2 x_2 \Ftilde^D_{22}(x) - 2 \sqrt{x_2} \Ftilde^D_{12}(x) \\
& = \Ftilde^M_3(\ytilde) + \frac{1}{2 \sqrt{x_2}} \Ftilde^M_2 (\ytilde) + \frac{1}{2} \left( \Ftilde^M_{11}(\ytilde) + 2 \Ftilde^M_{12}(\ytilde) + \Ftilde^M_{22}(\ytilde) \right) \\
&\quad\quad\quad + 2 x_2 \left( - \frac{1}{4 (x_2)^{3/2}} \Ftilde^M_2(\ytilde) + \frac{1}{4 x_2} \Ftilde^M_{22}(\ytilde) \right) - 2 \sqrt{x_2} \left( \frac{1}{2 \sqrt{x_2}} \left(\Ftilde^M_{12}(\ytilde) + \Ftilde^M_{22}(\ytilde) \right) \right) \\
&= \Ftilde^M_3(\ytilde) + \frac{1}{2} \Ftilde^M_{11}(\ytilde) = 0.
\end{align*}

\end{proof}

We now derive sufficient conditions of second order right-differentiability of $\Ftilde^D$ at $x_2 = 0$.

\begin{lemma}
\label{lemma: sufficient conditions for differentiability of F^D}

Consider $\Ftilde^M$, $\Ftilde^D$, $y$, $\ytilde$ and $\cDtilde$ as in Proposition \ref{proposition: equivalence between F^M and F^D}. If $\Ftilde^M \in C^{2,5,1}(\cDtilde \cap \{x_1 = x_2\})$ and $\frac{\partial^3}{\partial x_2^3} \Ftilde^M = 0$ on $\cDtilde \cap \{x_1 = x_2\}$, then there exists a well-defined right derivative of $\Ftilde^D$ at $x_2 = 0$, denoted by
$$
\Ftilde^D_{2+}(x_1,0,x_3) := \lim_{\epsilon \to 0} \frac{\Ftilde^D(x_1, \epsilon, x_3) - \Ftilde^D(x_1, 0, x_3)}{\epsilon},
$$
and it is equal to $\lim_{x_2 \to 0} \Ftilde^D_2(x_1, x_2, x_3)$ for $(x_1, 0, x_3) \in y(\cDtilde)$. 

\end{lemma}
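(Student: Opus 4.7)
The plan is to reduce the claim to a one-variable Taylor expansion. For fixed $(x_1, x_3)$ with $(x_1, 0, x_3) \in y(\cDtilde)$, introduce
$$g(s) := \Ftilde^M(x_1, x_1 + s, x_3), \qquad s \in [0, s_0),$$
for some $s_0 > 0$ small enough that the segment lies in $\cDtilde \cap \{x_1 \leq x_2\}$. The key identity is $\Ftilde^D(x_1, x_2, x_3) = g(\sqrt{x_2})$ for $x_2 \geq 0$, so both $\Ftilde^D_{2+}(x_1, 0, x_3)$ and $\lim_{x_2 \to 0^+} \Ftilde^D_2(x_1, x_2, x_3)$ reduce to the behaviour of $g$ and $g'$ near $0$.

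The $C^{2,5,1}$-regularity of $\Ftilde^M$ on $\cDtilde \cap \{x_1 = x_2\}$ guarantees that $g \in C^5$ on a neighbourhood of $0$, so Taylor's theorem applies through fourth order with uniformly controlled remainders. Two coefficients in the Taylor expansion vanish by hypothesis: the mixed boundary condition \eqref{mixed BC} gives $g'(0) = \Ftilde^M_2(x_1, x_1, x_3) = 0$, and the assumption $\frac{\partial^3}{\partial x_2^3} \Ftilde^M = 0$ on $\cDtilde \cap \{x_1 = x_2\}$ gives $g'''(0) = \Ftilde^M_{222}(x_1, x_1, x_3) = 0$. Hence both $g(s) - g(0)$ and $g'(s)$ contain only even powers of $s$ through fourth order, with leading coefficients $\tfrac{1}{2} g''(0)$ and $g''(0)$ respectively, where $g''(0) = \Ftilde^M_{22}(x_1, x_1, x_3)$.

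Substituting $s = \sqrt{\epsilon}$ into the expansion of $g - g(0)$ and dividing by $\epsilon$ immediately shows that $\Ftilde^D_{2+}(x_1, 0, x_3)$ exists and equals $\tfrac{1}{2} \Ftilde^M_{22}(x_1, x_1, x_3)$. On the other hand, the chain rule for $x_2 > 0$ gives $\Ftilde^D_2(x_1, x_2, x_3) = g'(\sqrt{x_2})/(2\sqrt{x_2})$; substituting the expansion of $g'$ and letting $x_2 \downarrow 0$ yields the same limit, which proves equality of the two quantities. The only real work is bookkeeping of Taylor remainders, which is handled uniformly by the $C^5$-regularity hypothesis; I do not expect any substantive obstacle. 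Conceptually, $g'(0) = 0$ is what prevents $\Ftilde^D_2$ from blowing up like $1/\sqrt{x_2}$ near the boundary, while $g'''(0) = 0$ eliminates the half-integer-order terms that would otherwise produce distinct $\sqrt{x_2}$- and $\sqrt{\epsilon}$-corrections in the two expressions, so both limits are approached cleanly and coincide.
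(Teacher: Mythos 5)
Your core computation is correct and matches the paper's approach to the stated conclusion: the paper also establishes $\Ftilde^D_{2+}(x_1,0,x_3) = \tfrac12 \Ftilde^M_{22}(x_1,x_1,x_3)$ by a Taylor expansion of $\Ftilde^M$ in the second argument around the diagonal, using the mixed boundary condition to kill the first-order term. Your reduction to the one-variable function $g(s) = \Ftilde^M(x_1,x_1+s,x_3)$ with $\Ftilde^D(x_1,x_2,x_3) = g(\sqrt{x_2})$ is a clean way to organise the same bookkeeping.

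However, your closing conceptual remark misattributes the role of the stronger hypotheses. You claim that $g'''(0) = 0$ is what makes the two limits "approached cleanly and coincide". That is not right: even with $g'''(0) \neq 0$, expanding $g(\sqrt{\epsilon}) - g(0)$ through third order gives $\tfrac12 g''(0)\epsilon + \tfrac16 g'''(0)\epsilon^{3/2} + o(\epsilon^{3/2})$, so dividing by $\epsilon$ and letting $\epsilon \to 0$ still gives $\tfrac12 g''(0)$, while $g'(\sqrt{x_2})/(2\sqrt{x_2}) = \tfrac12 g''(0) + \tfrac14 g'''(0)\sqrt{x_2} + o(\sqrt{x_2}) \to \tfrac12 g''(0)$. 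The $\sqrt{\cdot}$-order correction vanishes in the limit regardless, so the stated first-order conclusion follows from $C^{2,2,1}$ regularity and $g'(0) = 0$ alone. Where $g'''(0) = 0$ and the $C^{2,5,1}$ regularity actually bite is in the second-order behaviour: $\Ftilde^D_{22}(x_1,x_2,x_3) = \tfrac{g'''(0)}{8\sqrt{x_2}} + \tfrac{g^{(4)}(0)}{12} + O(\sqrt{x_2})$ blows up as $x_2 \downarrow 0$ unless $g'''(0) = 0$, and the paper's proof devotes the majority of its effort to showing that $\Ftilde^D_{22+}$ exists and agrees with $\lim_{x_2\to 0}\Ftilde^D_{22}$ under precisely these hypotheses. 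This second-order right-differentiability is what one needs downstream for $\Ftilde^D$ to be a genuine $C^{2,2,1}$ solution up to $\{x_2 = 0\}$, even though the lemma statement itself is worded only in first-order terms. So your proof establishes the literal claim, but the explanation of why the extra hypotheses appear is wrong, and you have silently skipped the part of the paper's argument where they actually matter.
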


In the proof below, recall that $w = 1$ when referring to  \eqref{running maximum PDE} and \eqref{PDE for price, squared drawdown and realised variance}.

\begin{proof}[Proof of lemma]

By the definition of $\Ftilde^D$,
\begin{equation}
\label{F^D one-sided derivative in x_2}
\Ftilde^D_{2+}(x_1, 0, x_3) = \lim_{\epsilon \to 0} \frac{\Ftilde^M(x_1, x_1 + \sqrt{\epsilon}, x_3) - \Ftilde^M(x_1, x_1, x_3)}{\epsilon}.
\end{equation}
Since $\Ftilde^M \in C^{2,2,1}(\cDtilde)$ and $\Ftilde^M_{2}(x_1,x_1,x_3) = 0$ for $(x_1, 0, x_3) \in y(\cDtilde)$, a second order Taylor expansion of $\Ftilde^M$ around $(x_1, x_1, x_3)$ yields
$$
\Ftilde^M(x_1, x_1 + \sqrt{\epsilon}, x_3) - \Ftilde^M(x_1, x_1, x_3) = \frac{1}{2} \epsilon \Ftilde^M_{22}(x_1,x_1+\xi(\epsilon),x_3)
$$
for some $\xi(\epsilon) \in [0, \sqrt{\epsilon}]$. Along with \eqref{F^D one-sided derivative in x_2} and the continuity of $F^M_{22}$, this gives
$$
\Ftilde^D_{2+}(x_1, 0, x_3) = \lim_{\epsilon \to 0} \frac{1}{2} \Ftilde^M_{22}(x_1, x_1 + \xi(\epsilon), x_3) = \frac{1}{2} \Ftilde^M_{22}(x_1, x_1, x_3)
$$
for $(x_1, 0, x_3) \in y(\cDtilde)$. 
To conclude the proof, we need to show second order continuous right-hand differentiability of $\Ftilde^D$ in $x_2$ provided the assumption that $\Ftilde^M$ is $C^{2,5,1}(\cDtilde \cap \{ x_1 = x_2 \})$ and $\frac{\partial^3\Ftilde^M}{\partial x_2^3} = 0$ on $\cDtilde \cap \{ x_1 = x_2 \})$. In particular, we need to show that 
$$
\Ftilde^D_{22+}(x_1,0,x_3) := \lim_{\epsilon \to 0} \frac{\Ftilde^D_2(x_1, \epsilon, x_3) - \Ftilde^D_{2+}(x_1, 0, x_3)}{\epsilon}
$$
is well-defined and coincides with $\lim_{x_2 \to 0} \Ftilde^D_{22}(x_1, x_2, x_3)$ for $(x_1, 0 , x_3) \in y(\cDtilde)$. First, note that
\begin{equation}
\label{F^D one-sided second derivative in x_2}
\Ftilde^D_{22+}(x_1, 0, x_3) = \lim_{\epsilon \to 0} \frac{\frac{1}{2 \sqrt{\epsilon}} \Ftilde^M_2(x_1, x_1 + \sqrt{\epsilon}, x_3) - \frac{1}{2} \Ftilde^{M}_{22}(x_1, x_1, x_3)}{\epsilon}.
\end{equation}
By $\Ftilde^M \in C^{2,5,1}(\cDtilde \cap \{x_1 = x_2\})$ and $\frac{\partial^3}{\partial x_2^3} \Ftilde^M(x_1, x_1, x_3) = 0$, a third order Taylor expansion of $\Ftilde^M_2$ yields
$$
\Ftilde^M_2(x_1, x_1 + \sqrt{\epsilon}, x_3) - \sqrt{\epsilon} \Ftilde^M_{22}(x_1, x_1, x_3) = \frac{1}{6} \epsilon^{3/2} \frac{\partial^4}{\partial x_2^4} \Ftilde^M(x_1, x_1 + \gamma(\epsilon), x_3)
$$
for some $\gamma(\epsilon) \in [0, \sqrt{\epsilon}]$. Hence, \eqref{F^D one-sided second derivative in x_2} and the continuity of $\frac{\partial^4}{\partial x_2^4} \Ftilde^M$ at $(x_1, x_1, x_3)$ imply
$$
\Ftilde^D_{22+}(x_1, 0, x_3) = \lim_{\epsilon \to 0}  \left( \frac{1}{12} \frac{\partial^4}{\partial x_2^4} \Ftilde^M(x_1, x_1 + \xi(\epsilon), x_3) \right) =  \frac{1}{12} \frac{\partial^4}{\partial x_2^4} \Ftilde^M(x_1, x_1, x_3).
$$
On the other hand, for fixed $(x_1, 0, x_3) \in y(\cDtilde)$, define $G( \, \cdot \, ; x_1, x_3): \R_+ \to \R$ by
$$
G(z \,; x_1, x_3) =  z \left( \Ftilde^M_{22}(x_1, x_1 + z, x_3) \right)  - \Ftilde^M_{2}(x_1, x_1 + z, x_3)
$$
Then, by $\Ftilde^M \in C^{2,5,1}$,
\begin{align*}
G^\prime(z \,; x_1, x_3) &= z \left( \frac{\partial^3}{\partial x_2^3} \Ftilde^M(x_1, x_1 + z, x_3) \right), \\
G^{\prime\prime}(z \,; x_1, x_3) &=  z \left( \frac{\partial^4}{\partial x_2^4} \Ftilde^M(x_1, x_1 + z, x_3) \right) + \frac{\partial^3}{\partial x_2^3} \Ftilde^M(x_1, x_1 + z, x_3), \\
G^{\prime\prime\prime}(z \,; x_1, x_3) &=  z \left( \frac{\partial^5}{\partial x_2^5} \Ftilde^M(x_1, x_1 + z, x_3) \right) + 2 \frac{\partial^4}{\partial x_2^4} \Ftilde^M(x_1, x_1 + z, x_3).
\end{align*}
Hence,
\begin{align*}
\lim_{x_2 \to 0} \Ftilde^D_{22}(x_1, x_2, x_3) &= \lim_{x_2 \to 0} \frac{- \Ftilde^M_2(x_1,x_1 + \sqrt{x_2}, x_3) + \sqrt{x_2} \Ftilde^M_{22}(x_1, x_1 + \sqrt{x_2}, x_3)}{4 (x_2)^{3/2}} \\
&= \lim_{x_2 \to 0} \frac{ G\left( \sqrt{x_2} \,; x_1, x_3 \right) }{4 (x_2)^{3/2}}.
\end{align*}
Note that $G(0 \,; x_1, x_3) = G^{\prime}(0 \,; x_1, x_3) = G^{\prime\prime}(0 \,; x_1, x_3) = 0$ since $\frac{\partial^i\Ftilde^M}{\partial x_2^i}(x_1, x_1, x_3) = 0$ for $i = 1$ and $i = 3$. Hence, a third order Taylor expansion of $G$ around $0$ yields
\begin{align*}
&G\left( \sqrt{x_2} \,; x_1, x_3 \right) \\
&= \frac{1}{6} (x_2)^{3/2} \left( \sqrt{x_2} \left( \frac{\partial^5}{\partial x_2^5} \Ftilde^M(x_1, x_1 + \rho(x_2), x_3) \right) + 2 \frac{\partial^4}{\partial x_2^4} \Ftilde^M(x_1, x_1 + \rho(x_2), x_3) \right)
\end{align*}
for some $\rho(x_2) \in [0,\sqrt{x_2}]$. Since $\frac{\partial^4}{\partial x_2^4} \Ftilde^M$ is continuous at $(x_1, x_1, x_3)$,
\begin{align*}
&\lim_{x_2 \to 0} \Ftilde^D_{22}(x_1, x_2, x_3) \\
&= \lim_{x_2 \to 0} \left( \frac{1}{24} \left( \sqrt{x_2} \left( \frac{\partial^5}{\partial x_2^5} \Ftilde^M(x_1, x_1 + \rho(x_2), x_3) \right) + 2 \frac{\partial^4}{\partial x_2^4} \Ftilde^M(x_1, x_1 + \rho(x_2), x_3) \right) \right) \\
&= \frac{1}{12} \frac{\partial^4}{\partial x_2^4} \Ftilde^M(x_1, x_1, x_3) \\
&= \Ftilde^D_{22+}(x_1, 0, x_3).
\end{align*}

\end{proof}

The following proposition shows that there is a stochastic solution to \eqref{running maximum PDE} -- \eqref{mixed BC} with boundary condition 
\begin{equation}
\label{running maximum BC}
F(x_1, x_2, q) = f(x_1, x_2),
\end{equation}
provided an implicit regularity condition on $f$. We do not claim this to be a contribution to the literature on stochastic solutions of parabolic PDEs, but provide the result for completeness. 

\begin{proposition}
\label{proposition: stochastic solution for F(S,M,[S])}

Let $W$ be a Brownian motion on the canonical space $(\Omega, \cF, \F, \Prob) \equiv (\cC(\R), \cF^W, \F^W, \Prob^W)$. Let $M \equiv M^W$ be the running maximum of $W$. Suppose that $f: \R_+^2 \to \R$ is such that
$$
F(x_1,x_2,x_3) := \E \left( f \left( x_1 + W_{q - x_3}, x_2 \vee (x_1 + M_{q- x_3}) \right) \right), \quad  0 \leq x_1 \leq x_2, \ \ 0 \leq x_3 \leq q.
$$
is $C^{2,1,1}$ on $\cD = \{x_1 \leq x_2 \} \cap \{ 0 \leq x_3 < q \} \subset \R^3$ and $\int_0^\cdot F_1(W_t,M_t,t) \ud W_t$ is a martingale. Then for $0 \leq t \leq q$,
\begin{equation}
\label{lookback Markovian property}
\E \left( f(x + W_q, x + M_q) \ | \ \cF_t \right) =  F(W_t, M_t, t) \quad \Prob-a.e.,
\end{equation}
and $F(x)$ is a solution to \eqref{running maximum PDE} -- \eqref{mixed BC} with boundary condition  \eqref{running maximum BC}.

\end{proposition}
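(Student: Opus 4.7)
The plan is to split the argument into two parts: first establish the conditional-expectation representation \eqref{lookback Markovian property}, then use it together with It\^o's formula to extract the PDE system.

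For the first part, I would exploit the Markov property of the pair $(W, M)$. Fix $t \in [0,q]$ and set $\widetilde{W}_s := W_{t+s} - W_t$ for $s \in [0,q-t]$, which is a Brownian motion independent of $\cF_t$, with running maximum $\widetilde{M}_s = M^W_{t+s} - W_t$ when $s$ is restricted so that the extremum comes from $[t, t+s]$. Observing that $W_q = W_t + \widetilde{W}_{q-t}$ and $M_q = M_t \vee (W_t + \widetilde{M}_{q-t})$, the independence of $\widetilde{W}$ from $\cF_t$ gives
$$
\E\!\left[ f(W_q, M_q) \mid \cF_t \right] = \E\!\left[ f(x_1 + \widetilde{W}_{q-t},\, x_2 \vee (x_1 + \widetilde{M}_{q-t})) \right] \Big|_{(x_1, x_2)=(W_t, M_t)},
$$
which is exactly $F(W_t, M_t, t)$ by the definition of $F$.

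For the second part, $F(W_t, M_t, t)$ is a martingale as a conditional expectation of an integrable random variable. By $F \in C^{2,1,1}(\cD)$, I can apply It\^o's formula with the running maximum to obtain
$$
F(W_t, M_t, t) = F(0,0,0) + \int_0^t F_1 \, \ud W_s + \int_0^t F_2 \, \ud M_s + \int_0^t \left( F_3 + \tfrac{1}{2} F_{11} \right) \ud s,
$$
where the integrands are evaluated at $(W_s, M_s, s)$. The hypothesis that $\int_0^\cdot F_1(W_s, M_s, s) \, \ud W_s$ is a martingale, combined with the martingale property of $F(W_t, M_t, t)$, forces the finite-variation part $\int_0^t F_2 \, \ud M_s + \int_0^t (F_3 + \tfrac{1}{2} F_{11}) \, \ud s$ to be indistinguishable from zero.

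The crucial separation step uses that $M$ grows only on the Lebesgue-null random set $\{s : W_s = M_s\}$, so the measures $\ud M_s$ and $\ud s$ are mutually singular. Hence the two finite-variation integrals vanish separately almost surely. For the $\ud s$ integral, continuity of the integrand and the fact that $(W_s, M_s, s)$ visits (with positive probability) a neighbourhood of every point of $\cD$ force $F_3 + \tfrac{1}{2} F_{11} = 0$ pointwise on $\cD$, which is \eqref{running maximum PDE} with $w=1$. For the $\ud M_s$ integral, note that it equals $\int_0^t F_2(M_s, M_s, s) \, \ud M_s$ since $\ud M_s$ charges only $\{W_s = M_s\}$; positivity of $\ud M_s$ and richness of the support of $(M_s, s)$ on the diagonal then give $F_2(x_1, x_1, x_3) = 0$, which is the mixed condition \eqref{mixed BC}. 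The terminal condition $F(x_1, x_2, q) = f(x_1, x_2)$ on $\{x_1 \leq x_2\}$ is immediate from the definition, since $W_0 = M_0 = 0$ and $x_2 \vee x_1 = x_2$ in that region.

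I expect the main obstacle to be the rigorous pointwise-from-pathwise step: passing from the almost-sure vanishing of the two integrals to the pointwise identities $F_3 + \tfrac{1}{2} F_{11} \equiv 0$ on $\cD$ and $F_2 \equiv 0$ on the diagonal. This relies on a careful support argument for the joint law of $(W, M)$ (and of $M$ on the diagonal, respectively), together with continuity of the integrands; it is analogous in spirit to part (ii) of the proof of Proposition \ref{proposition: properties of range of X}, but must handle the singular support of $\ud M_s$ separately from $\ud s$.
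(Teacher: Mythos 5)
Your proof is correct and takes essentially the same overall route as the paper, with one genuinely cleaner step worth highlighting. The first part (the Markov representation \eqref{lookback Markovian property}) is established directly via the independent-increments/Markov property of $(W,M)$; the paper just cites Ob{\l}\'oj and Yor for this, so your explicit argument is a fine replacement. For the second part both proofs start from the same It\^o decomposition
$$
\ud F(W_t,M_t,t) = F_1\,\ud W_t + F_2\,\ud M_t + \bigl(F_3+\tfrac{1}{2}F_{11}\bigr)\,\ud t,
$$
and both observe that $F(W_\cdot,M_\cdot,\cdot) - \int_0^\cdot F_1\,\ud W$ is a finite-variation local martingale, hence identically zero. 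The difference is in how the two finite-variation pieces are separated. You invoke mutual singularity of $\ud M_s$ and $\ud s$ (since $\ud M_s$ lives on the Lebesgue-null record set $\{s : W_s=M_s\}$), which immediately gives $\int_0^\cdot F_2\,\ud M_s \equiv 0$ and $\int_0^\cdot(F_3+\tfrac{1}{2}F_{11})\,\ud s \equiv 0$ almost surely. The paper instead multiplies the differential by $\I_{\{W_u=M_u\}}$, splits $F_2$ by sign, and uses the martingale hypothesis on $\int F_1\,\ud W$ to force $\E\bigl(\int |F_2(M_u,M_u,u)|\,\ud M_u\bigr)=0$ — a more laborious but more elementary route that avoids an explicit singularity lemma. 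After the separation, both arguments face the same pointwise-from-pathwise step: the paper passes from $\int|F_2(M_u,M_u,u)|\,\ud M_u=0$ to $F_2(x,x,y)=0$ via absolute continuity of Lebesgue measure with respect to the law of $M_u$, and from $\int_0^t(F_3+\tfrac{1}{2}F_{11})\,\ud u$ being a null local martingale to the pointwise PDE via continuity and a stopping-time localisation; you correctly identify this as the delicate step and sketch the same support argument. In short: same skeleton, but your singularity-based separation buys a shorter and more transparent argument at the cost of requiring the observation that $\ud M_s \perp \ud s$ pathwise a.s., which is itself standard (the record set has Lebesgue measure zero by Fubini).
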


\begin{proof}

The claim that $F(x)$ defined by \eqref{stochastic solution for F} satisfies \eqref{lookback Markovian property} follows from the strong Markov property of Brownian motion (Ob\l\'oj and Yor \cite{Obloj-Yor}). The boundary condition \eqref{running maximum BC} follows from \eqref{lookback Markovian property} with $t=q$. To show that $F(x)$ is a solution to \eqref{running maximum PDE} provided the assumed regularity conditions on $F$, we resort to standard martingale arguments.

Since $M$ is continuous and of finite variation, and since $F$ was assumed to be $C^{2,1,1}$, It\^o's formula gives
\begin{align}
\ud F(W_t, M_t, t) &= F_1 \ud W_t + F_2 \ud M_t + F_3 \ud t + \frac{1}{2} F_{11} \ud t \nonumber \\ 
& = F_1 \ud W_t + F_2 \ud M_t + \left( F_3 + F_{11} \right) \ud t.\label{lookback solution proof eqn 2} 
\end{align}
Formally defining $\int_0^\cdot \I_{\{ W_u = M_u \}} H_u \ud W_u := \lim_{\epsilon \downarrow 0} \int_0^\cdot I_{\{ W_u > M_u - \epsilon \}} H_u \ud W_u$, this implies that for $0 \leq s < t \leq q$,
\begin{align*}
&\int_s^t \I_{\{W_u = M_u\}} dF(W_u, M_u, u) \\
&= \int_s^t  \I_{\{W_u = M_u\}} \ F_1 \ud W_u + \int_s^t \I_{\{W_u = M_u\}} F_2 \ud M_u + \int_s^t \I_{\{W_u = M_u\}} \left( F_3 + \frac{1}{2} F_{11} \right) \ud u. 
\end{align*}
The set $\{ u \in [s,t] : W_u = M_u \}$ has Lebesgue measure zero $\Prob$-a.e., the third term on the
second line disappears. Since $M$ is carried by the set $\{ W = M \}$, it follows that
$$
F_2 \ud M_u = F_2 \I_{\{ W_u = M_u \}} \ud M_u = \I_{\{ W_u = M_u \}} dF(W_u, M_u, u) - \I_{\{ W_u = M_u \}} \ F_1(Z_u) \ud W_u. 
$$
Since $\int_0^\cdot F_1(W_t, M_t, t) \ud W_t$ is a martingale by assumption and $\{ F(W_t, M_t, t) \}_{t \in [0,T]}$ is a martingale by
\eqref{lookback Markovian property},
\begin{align*}
0 &\leq \E\left( \int_s^t \I_{\{ F_2(W_u, M_u, u) > 0 \}} F_2 \ud M_u \right) \\
&= \E \bigg{(} \int_s^t \I_{ \{ F_2(W_u, M_u, u) > 0 \} \cap \{ W_u = M_u \} } dF(W_u, M_u, u) \\
& \hspace{100pt} + \int_s^t \I_{ \{F_2(W_u, M_u, u) > 0 \} \cap \{ W_u = M_u \} } F_1 \ud W_u \bigg{)} = 0.
\end{align*}
Hence,
\begin{equation}
\label{lookback solution proof eqn 3}
\E \left( \int_s^t \I_{\{ F_2(W_u, M_u, u) > 0 \}} F_2 \ud M_u \right) = 0.
\end{equation}
By a symmetric argument,
\begin{equation}
\label{lookback solution proof eqn 4}
\E \left( \int_s^t \I_{\{ F_2(W_u, M_u, u) < 0 \}} F_2 \ud M_u \right) = 0.
\end{equation}
Combining \eqref{lookback solution proof eqn 3} and \eqref{lookback solution proof eqn 4} yields $\E \left( \int_s^t  | F_2(W_u, M_u, u) | \ud M_u \right) = 0$. Since $M$ is non-decreasing, this implies $\int_s^t  | F_2(W_u, M_u, u) | \ud M_u = 0$ $\Prob$-a.e.. Recalling that $W = M$ on points of increase of $M$, we get 
\begin{equation}
\label{lookback solution proof eqn 5}
\int_s^t  | F_2(M_u, M_u, u) | \ud M_u = 0, \quad \quad \Prob-a.e.
\end{equation}
By assumption, $\widehat{F}(x,t) := |F_2(x,x,t)|$ is continuous, hence absolute continuity of the Lebesgue measure with respect to the law of $M_u$ on $[0,\infty)$ for all $u \in (0,q]$ implies that $F_2(x,x,y) = 0$ for $(x,y) \in \R_+ \times (0,q]$, which yields \eqref{mixed BC}. Combining this with \eqref{lookback solution proof eqn 2} and the martingale property of $W$ and of $\{ F(W_t,M_t,t) \}_{t \in [0,q]}$, it follows that $\int_0^t ( F_3(W_u, M_u, u) + \frac{1}{2} F_{11}(W_u, M_u, u) ) \ud u$ is a local martingale starting at $0$. Since $F_3(x) +\frac{1}{2} F_{11}(x)$ is
continuous by assumption, a standard stopping time argument yields \eqref{running maximum PDE}.

\end{proof}

\section*{Acknowledgements}

I would like to thank my supervisor, Prof. Michael Monoyios, for his continuous support throughout my DPhil work. I thank Prof. Monoyios in particular for his patience, for his openness to exploring various research ideas which were of interest to me, as well as for his help in formulating research results coherently and concisely in writing.

I thank Prof. Jan Ob\l\'oj for his insightful discussions regarding questions I have had regarding martingale theory, robust methods in finance, and other areas of research related to financial mathematics. I am also very grateful to Prof. Johannes Ruf, Prof. Sam Cohen and Dr. Pietro Siorpaes who have been very supportive and open to discuss many technical questions I have come across during the DPhil research.

I thank the many fellow students with whom I have dicsussed relevant research topics, notably Vladimir Cherny, Philippe Charmoy, Peter Spoida, Zhaoxu Hou, Sergey Shahverdyan, Sean Ledger, Matthieu Mariapragassam, Arnaud Lionnet, Yves-Laurent Kom Samo, Guy Flint and many others.

I am grateful to the Oxford-Man Institute (OMI) for funding my research as well as for the facilities and resources they have provided. I thank Prof. Marek Musiela for several lengthy discussions regarding some research topics I was interested in. I also thank Prof. Terry Lyons for informal high level discussions regarding the general intuition behind the role of pathwise integration theory in financial mathematics. Finally, I thank Jane Williams and the OMI administration and IT teams for providing the best administrative and technical support one can ask for as a graduate student.

I would also like to thank two practitioners in industry. First of all, I thank Dr. Peter Carr for taking the time to answer some questions I have had regarding his research, as well as for discussing research ideas which have led to some of the material presented in this thesis. Secondly, Dr. George Lowther's blog almostsure.wordpress.com has been of tremendous value in understanding the intuition behind many technical results in stochastic analysis. Dr. Lowther has furthermore been very helpful and responsive to questions I have asked about the content of his posts on several occasions.

The most important acknowledgement goes to my family who have always pushed me to high standards of achievement and whose continuous support has been invaluable.

\addcontentsline{toc}{chapter}{Bibliography}
\bibliography{refs}
\bibliographystyle{plain}

\end{document}